\documentclass[aps,prl,twocolumn,superscriptaddress]{revtex4-2}
\usepackage[utf8]{inputenc}
\usepackage{graphics,graphicx,array,afterpage}
\usepackage{grffile}
\usepackage[export]{adjustbox}
\usepackage{lineno}
\usepackage{xcolor}
\usepackage{longtable}
\usepackage{svg}
\usepackage{braket}
\usepackage{multirow}
\usepackage{enumerate}
\usepackage[colorlinks, citecolor=blue, urlcolor=blue]{hyperref}
\usepackage{amsmath}
\usepackage{amssymb}
\usepackage{amsthm}
\usepackage{mathrsfs}
\usepackage{pifont}
\usepackage{hyperref}

\newtheorem{theorem}{Theorem}

\newtheorem{lemma}{Lemma}

\begin{document}

\title{Gaussian Universality and Its Breakdown in Tensor-Network Machine Learning}
\author{Shi-Tuan Wang}
\affiliation{Theoretical Physics Division, Chern Institute of Mathematics and LPMC, Nankai University, Tianjin 300071, China}
\author{Zidu Liu}
\email{zidu.liu@mpq.mpg.de}
\affiliation{Max Planck Institute of Quantum Optics, 85748 Garching, Germany}
\author{Li-Wei Yu}
\email{yulw@nankai.edu.cn}
\affiliation{Theoretical Physics Division, Chern Institute of Mathematics and LPMC, Nankai University, Tianjin 300071, China}

\begin{abstract}

Gaussian-process limits are powerful in describing overparameterized machine learning models, yet their validity in structured tensor-network architectures remains unclear. Here we analytically present a moment-based approach that identifies precise conditions for the emergence and breakdown of Gaussian universality in tensor-network learning models, with a focus on matrix product states. We prove that in the large bond dimension limit, the learning models with both local and global observables converge to Gaussian processes, with explicit finite-size bounds on higher-order moment deviations. Whereas in the large physical dimension limit, the Gaussian universality no longer persists: while the models with local observables retain Gaussian-process behavior, those global cases exhibit persistent non-Gaussian corrections. Our results reveal that Gaussian-process behavior in tensor-network learning is controlled not only by parameter number, but also by architectural scaling, observable locality, and the spectral properties.
\end{abstract}

\maketitle

Gaussian process (GP) plays a pivotal role in characterizing the statistical behavior of high-dimensional machine learning systems~\cite{Neal1996Bayesian,Rasmussen2006Gaussian,Matthews2018Gaussian}. The discovery that infinitely wide neural networks admit GP limits~\cite{Neal1996Bayesian,Matthews2018Gaussian} has revealed a remarkable simplification in machine learning. Under suitable conditions, numerous microscopic architectural details become irrelevant, and the collective behavior of the model is effectively governed by a covariance kernel. This provides a tractable analytical framework for studying a variety of highly complex learning models~\cite{Novak2019Bayesian,Yang2019Wide,Hron2020Infinite}, and has led to a deeper understanding of their function-space statistics. More recently, analogous GP limits have also been established in quantum machine learning, where certain Haar-random quantum neural networks exhibit Gaussian behavior in the large-Hilbert-space limit~\cite{Garcia2025Quantumneural}. These developments suggest that Gaussian descriptions may arise in a much broader class of learning architectures than previously recognized. Yet this also raises a basic question: whether such Gaussian descriptions remain valid in architectures whose parameters are not weakly correlated, but constrained by geometry, locality, and tensor contractions.
\begin{figure}
\centering
\includegraphics[width=\linewidth]{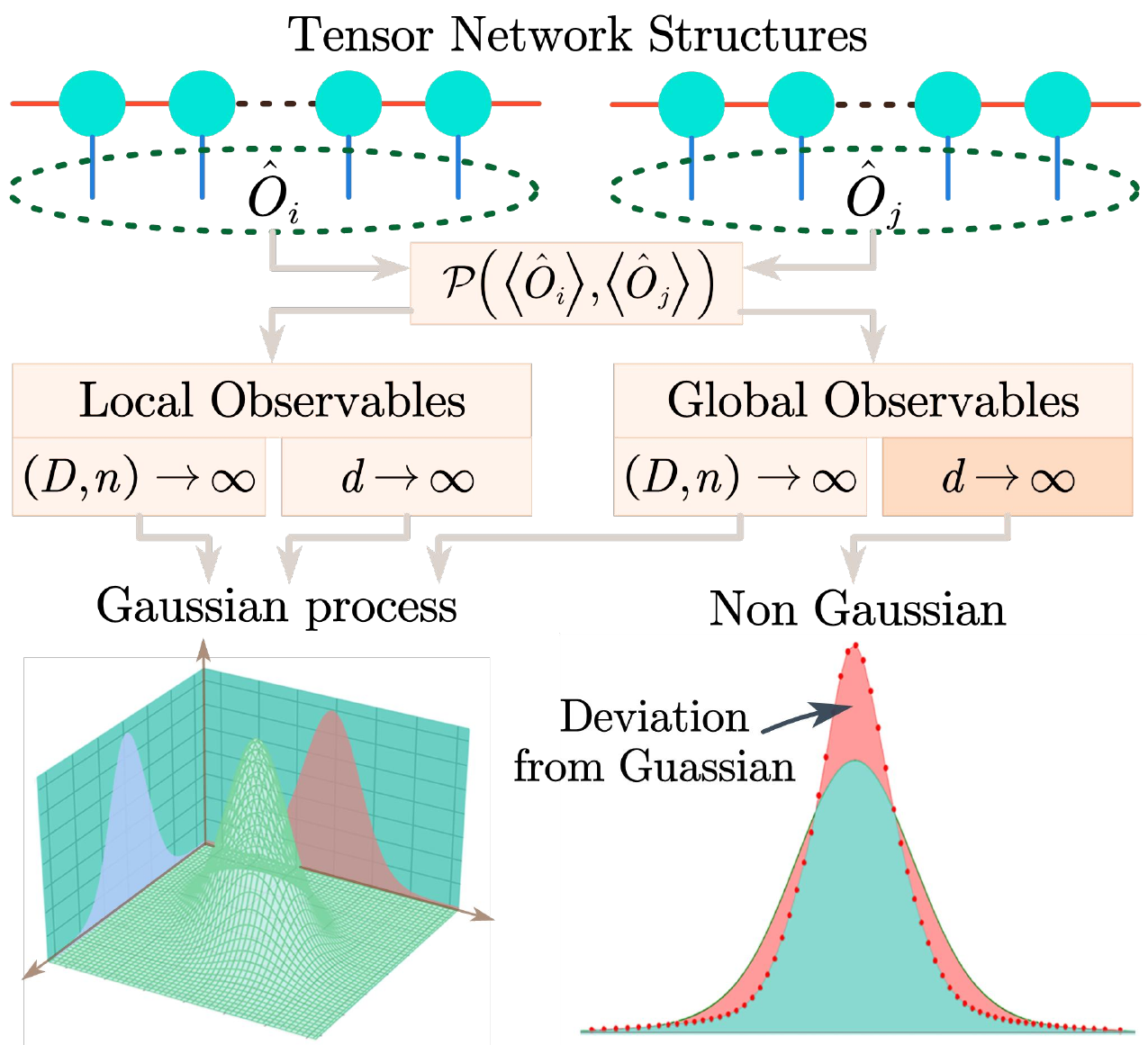}
\caption{Schematic summary of statistical behaviors arising in different tensor-network scaling regimes.  For the local observable outcome $C(\hat{O}_l)=\langle \psi |\hat{O}_l|\psi \rangle$, as either $D$ or $d$ increases, the outcome distribution approaches a Gaussian distribution. For the global observable outcome $C(\hat{O}_g)=\langle \psi |\hat{O}_g|\psi \rangle$, as $D$ increases, the outcome distribution gradually approaches a Gaussian distribution, whereas for increasing $d$, the outcome distribution is no longer Gaussian. For a finite sequence $\{\hat{O}_{ai}\}_{i=1}^{p}$, one has $\mathcal{P}( C(\hat{O}_{a1}),\cdots ,C(\hat{O}_{ap}))\sim \mathcal{N} \left( 0,\varSigma \right)$, $a\in\{l,g\}$.}
\label{fig:mps-gp}
\end{figure}

Tensor-network (TN) machine learning model offers a natural setting for addressing this question. Originally developed as efficient representations of quantum many-body states with area-law entanglement structures~\cite{Eisert2010ColloquiumArea,Roman2014APractical,Orus2019Tensor, Biamonte2019Lectures, Cirac2020Matrix, Banuls2023Tensora,Rieser2023Tensor, Wang2023Tensor}, TNs have been adopted in machine learning~\cite{Ran2023Tensor} as structured parametrizations controlled by bond dimension and contraction geometry, with applications ranging from supervised learning and generative modeling to model compression~\cite{Novikov2015Tensorizing,Cichocki2016Tensor,Cichocki2016Tensor2,stoudenmire_2016_supervised,Stoudenmire2016Advances,Cui2018Smeared,Chen2018Parallelized,Levine2018Deep,Stoudenmire2018Learning,Chen2018Equivalence,Gao2018Quantum,Gao2021Enhancing,Han2018Unsupervised,Cheng2019Tree,Roberts2019Tensornetwork,Liu2019Machine,Hayashi2019Exploring,Levine2019Quantum,Bhatia2019Matrix,Huggins2019Towards,Glasser2020From,Su2020Convolutional,Sun2020Tangent,Sun2020Generative,Sun2020Finite,Sun2020Model,Gao2020Compressing,Wall2021Generative,Chen2021Anend,Cheng2021Supervised,Kardashin2021Quantum,Felser2021Quantuminspired,Wang2021Experimental,Hawkins2021Bayesian,Chen2022Kernelized,Shi2022Clustering,Convy2022Mutual,Metz2023Selfcorrecting,Liao2023Decohering,Ran2023Tensor,Wu2023Tensornetwork,Lopez-Piqueres2023Symmetric,Liujing2023Tensor,Meng2023Residual,Tu2024Generating,Shin2024Dequantizing,Singh2024Federated,Tomut2024CompactifAI,Bermejo2024Quantum,Casagrande2024Tensor,Chen2024Machine,Su2024Language,PozasKerstjens2024privacypreserving,Teng2025Learning,Malz2025Computational,Upreti2026Logic,Fan2026Disentangling}. Unlike conventional wide neural networks, TN architectures possess structured connectivity, locality constraints, and strongly correlated parameters inherited from their underlying tensor geometry. These features would fundamentally modify the statistical mechanisms that underlie GP limits and make the emergence of Gaussian behavior far from obvious. Although recent works have explored GP descriptions of TN models using central-limit-type arguments~\cite{wesel2025tensor,guo2021neuraltangentkernelmatrix}, a systematic understanding of their higher-order statistical structure is still lacking. In particular, it remains largely unexplored whether GP persists across different TN scaling regimes, and how non-Gaussian corrections emerge from the underlying network architecture.

\begin{table*}[t]\label{tab:main_results}
\renewcommand{\arraystretch}{1.3}
\caption{Summary of the statistical regimes and asymptotic behaviors of output vector $\mathscr{C}$.}
\label{tab:main_results}
\begin{ruledtabular}
\begin{tabular}{cccc}
Observable Type & Asymptotic Regime & Gaussian? & Deviation Term \\
\hline
                   & $(D,n) \to \infty, \, D^2 \ll d^n$ & $\checkmark$ & $\mathcal{O} ( k^2/D^2d )+\mathcal{O}(\mathrm{poly}\left( k \right) D^2/d^n)$ \\
\cline{2-4}
Local $\hat{O}_l$  & $(D,n) \to \infty, \, D^2 \gg d^n$ & $\checkmark$ & $\mathcal{O} ( k^2 d^{n/2} / D )+\mathcal{O}(k^2/d^n)  $ \\
\hline
Local $\hat{O}_l$  & $d \to \infty$ & $\checkmark$ & $\mathcal{O} ( k^2D^2/d ) $ \\
\hline
Global $\hat{O}_g$ & $(D, n) \to \infty$ & $\checkmark$ & {$\mathcal{O} (nk!m^{k/2}/D)+\mathcal{O} ( k^2/d^n) $} \\
\hline
Global $\hat{O}_g$ & $d \to \infty$ & $\times$ &  Non-vanishing \footnote {For explicit analytical expressions, see Sec.\ IV D of the Supplemental Material~\cite{GPTNML_Supplementary}}

\end{tabular}
\end{ruledtabular}
\end{table*}

Here we develop a moment-based framework for characterizing the Gaussian universality and its breakdown in TN learning. We consider Haar-random TNs and treat observable expectation values as random variables induced by the tensor ensemble. Through computing higher-order moments, rather than relying solely on central-limit theorems, we obtain explicit deviations from Gaussian statistics at finite bond and physical dimensions~\cite{akhiezer1965classical,kendall1977advanced}. More precisely, we focus on matrix product states (MPS), the canonical one-dimensional tensor-network architecture. We rigorously prove two theorems: (1) In the large-bond-dimension regime, learning models with either local or global observables converge to GPs, with finite-size corrections to the moments governed by explicit powers of the bond dimension and system size; (2) in the large-physical-dimension limit, Gaussian universality becomes observable-dependent: learning models retain GP statistics for local observables, whereas exhibiting persistent non-Gaussian corrections for global observables (Fig.~\ref{fig:mps-gp}). This separation shows that Gaussianity in TN learning is not a generic consequence of large parameter dimension, but is controlled jointly by architectural scaling and observable locality. We further discuss how the observable spectrum influence the Gaussianity. We carry out numerical simulations to show that the predicted Gaussian and non-Gaussian regimes are already apparent in MPS models of modest size.

\emph{The framework---} We focus on the periodic MPS $|\psi\rangle=\sum_{j_1,\ldots,j_n=1}^{d}{\rm tr}[A_{j_1}^{(1)}A_{j_2}^{(2)}\cdots A_{j_n}^{(n)}]|j_1,j_2,\ldots,j_n\rangle$~\cite{Schollwock2011Densitymatrix}, where \(|j_s\rangle\) denotes the local basis state of the \(s\)-th physical site, and $A_{j_s}^{(s)}\in\mathbb C^{D\times D}$ are local tensors with bond dimension $D$.  We use the unitary-embedding representation of MPS, where each local tensor is expressed as  $(A_j^{(s)})_{\alpha\beta}=\langle \alpha,j|U_s|\beta,0\rangle$, $U_s\in U(Dd)$, and \(|0\rangle\) is a fixed auxiliary physical state~\cite{Perez2007Matrix,Gross2010Quantum,Silvano2010Statistical,Silvano2010Typicality,collins2013matrix,Haferkamp2021Emergent,GPTNML_Supplementary}. The unitary embedding MPS $|\psi\rangle$ is not exactly normalized. However, the values of $ \langle\psi|\psi\rangle$ are exponentially concentrated around one \cite{Haferkamp2021Emergent}. Such an exponential concentration allows us to prove our results without regarding the normalization factor as the system size is sufficiently large. We consider a quantum-inspired MPS learning model in which inputs and variational parameters  are encoded into the local tensors of the MPS. The learning output $C(\hat{O})$ is defined by the expectation values of observable $\hat{O}$ on the normalized MPS ansatz, i.e., $C(\hat{O})=\langle\psi|\hat{O}|\psi\rangle/\langle\psi|\psi\rangle$, where  $\hat{O}$ can be either  local ($\hat{O}_l$) or global ($\hat{O}_g$). 

To study the Gaussian behavior of our learning models, we consider the randomly initialized case, where the local tensors of MPS are sampled from independently Haar-random embedding unitaries. For the convenience of presenting our results and without of generality, we suppose the observable $\hat{O}$ be the linear combinations of a finite set of Pauli string operators $\{ \hat{Q}_i\} _{i=1}^{m}$, where $\hat{Q}_i$'s are traceless and obey the Clifford algebra $\{\hat{Q}_i, \hat{Q}_j\} = 2\delta_{ij}I$. 

\emph{GP limit for local observables---}
Local observables provide a natural class of readouts for quantum learning models. They appear in a wide range of settings, including classical-shadow protocols~\cite{Huang2020Predicting}, quantum classifiers~\cite{Lu2020Quantum}, and variational approaches to many-body Hamiltonians~\cite{Carleo2017Solving}. In addition, local observables can avoid the barren plateau problem in shallow or locality-preserving quantum models~\cite{Cerezo2021Cost,Liu2022Presence}. 
We first study the Gaussian behavior of learning models with local observables, which act nontrivially on single physical site. 
\begin{theorem} \label{Theorem_Local}
Define the learning output with local observable $\hat{O}_{li}$ by $C(\hat{O}_{li})$. Let the output vector $\mathscr{C} _l=\{C(\hat{O}_{li})\}_{i=1}^{p}$ be the stochastic process generated by the Haar-random MPS. Then $\mathscr{C}_l$ converges to the GP in the following two cases:  
\begin{itemize}
\item[(1)]For the bond dimension $D\rightarrow\infty$, the physical site $n\rightarrow\infty$, we have 
\begin{equation}
\mathscr {C}_l \sim \mathcal{GP}(0,\varSigma),
\, \varSigma_{ij}=
\begin{cases}
\dfrac{\operatorname{tr}(\hat O_{li}\hat O_{lj})}{D^2d^{n+1}},
 D^2\ll d^n,\\[1ex]
\dfrac{\operatorname{tr}(\hat O_{li}\hat O_{lj})}{d^{2n}},
 D^2\gg d^n.
\end{cases}
\end{equation}

\item[(2)]For the physical dimension $d\rightarrow\infty$, we have
\begin{equation}
\mathscr{C} _l\sim \mathcal{G} \mathcal{P} \left( 0,\varSigma \right), \,\qquad \varSigma _{ij}=\frac{\mathrm{tr(}\hat{O}_{li}\hat{O}_{lj})}{D^2d^{n+1}}.
\end{equation}
\end{itemize}
\end{theorem}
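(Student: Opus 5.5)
We prove Gaussianity by the method of moments. A random vector whose mixed moments converge to those of $\mathcal{N}(0,\varSigma)$ converges in distribution to it, because the Gaussian is determined by its moments (Carleman's condition holds). By linearity it suffices to consider a single output $C(\hat O)$ with $\hat O=\sum_i c_i\hat O_{li}$, where $\hat O$ is a traceless combination of Pauli strings acting on one site. For such an $\hat O$ we show
\[
\mathbb{E}[C(\hat O)^{2q+1}]\to 0,\qquad
\mathbb{E}[C(\hat O)^{2q}]=(2q-1)!!\,\sigma^{2q}(1+\epsilon_q),
\]
with $\sigma^2=\sum_{ij}c_ic_j\varSigma_{ij}$ and $\epsilon_q$ an explicit finite-size error. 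Using the exponential concentration of $\langle\psi|\psi\rangle$ around one, we first replace $C(\hat O)$ by the unnormalized quantity $\langle\psi|\hat O|\psi\rangle$. The error this introduces in the $k$-th moment is exponentially small in $n$, and we absorb it into the deviation terms.

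Next we write $\langle\psi|\hat O|\psi\rangle^k$ as a contraction of $k$ copies of $U_s\otimes\bar U_s$ at every site and apply Weingarten calculus independently at each site. Because the $U_s$ are independent and Haar random, each site contributes a sum over pairs $(\sigma,\tau)\in S_k\times S_k$ with weight $\mathrm{Wg}(\sigma\tau^{-1},Dd)$. The whole moment then becomes a transfer-matrix expression: a product around the ring of $k!\times k!$ matrices, each acting on permutation labels of the virtual bonds. The site carrying $\hat O$ inserts factors $\mathrm{tr}_\pi(\hat O^{\otimes k})$. The Clifford relations make these factors factorize into products of $\mathrm{tr}(\hat O_{li}\hat O_{lj})$ over the cycles of $\pi$, and the tracelessness of $\hat O$ kills every permutation that has a fixed point. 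At leading order, therefore, only fixed-point-free involutions (pairings) survive. There are $(2q-1)!!$ of them, and each contributes $\sigma^{2q}$, which produces the Gaussian moments. Odd $k$ admits no pairing at leading order, so the odd moments vanish asymptotically.

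It remains to identify the leading permutation sectors in each regime. Case (1) with $D^2\ll d^n$: the dominant contributions come from the identity-like configurations on the bonds, with the pairing forced only near the observable site. The transfer-matrix gap then gives corrections $\mathcal{O}(k^2/D^2d)$ from subleading Weingarten terms and $\mathcal{O}(\mathrm{poly}(k)D^2/d^n)$ from configurations that wrap around the ring, and the variance is $\mathrm{tr}(\hat O_{li}\hat O_{lj})/D^2d^{n+1}$. Case (1) with $D^2\gg d^n$: the bond dimension exceeds the Hilbert-space dimension, so the state behaves like a Haar-random vector in $\mathbb{C}^{d^n}$. Its moments are those of a random-state expectation value, which gives variance $\mathrm{tr}(\hat O_{li}\hat O_{lj})/d^{2n}$ with errors $\mathcal{O}(k^2d^{n/2}/D)+\mathcal{O}(k^2/d^n)$. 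Case (2): $D$ and $n$ are fixed and $d\to\infty$. Each site's Weingarten function becomes $\delta_{\sigma\tau}(Dd)^{-k}(1+\mathcal{O}(k^2/(Dd)^2))$, the sites decouple, and the one nontrivial site supplies the pairing structure. This yields the same variance as the first regime, with error $\mathcal{O}(k^2D^2/d)$.

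\textbf{Main obstacle.} The hardest step is controlling the transfer-matrix sum uniformly in $k$, i.e., bounding all non-pairing permutation sectors and their accumulation over $n$ sites well enough to obtain the stated polynomial-in-$k$ errors. Our first approach is to bound the transfer matrix by its leading eigenvalues in the basis of identity and swap-type permutations. Each nontrivial permutation is charged a factor $D^{-(k-\#\text{cycles})}$ or $d^{-(k-\#\text{cycles})}$, and the crossover between the two subcases of (1) is located by comparing the two competing geometric sums $(D^2)^{-1}$ versus $d^{-n}$.
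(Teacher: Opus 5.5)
Your overall architecture matches the paper's: moments with Carleman and Cram\'er--Wold, dropping the norm, per-site Weingarten calculus giving $\mathbb{E}(X_l^k)=\mathrm{tr}(\mathrm{A}\,\mathrm{Te}^{n-1})$, and pairings dominating. However, the regime-specific leading terms are asserted rather than derived. In case (1) with $D^2\ll d^n$, the mechanism you describe is wrong at fixed $d$. You say the dominant configurations are identity on the bonds, with the pairing forced only near the observed site. Already for $k=2$, in the basis $(e,(12))$, one has exactly $\mathrm{Te}=\begin{pmatrix}1&u\\0&\lambda_2\end{pmatrix}$, with $u=D(d^2-1)/(d^2D^2-1)$ and $\lambda_2=d(D^2-1)/(d^2D^2-1)\approx 1/d$. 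So the swap sector leaks away from the observed site with weight $\approx d^{-1}$ per bond, which is $\mathcal{O}(1)$ at fixed $d$. Up to terms of order $\lambda_2^{\,n-1}$, the second moment (with $\hat Q^{(1)2}=I$) is $\frac{1}{d^2D^2-1}\bigl[-1+dD\,u\sum_{j\ge 0}\lambda_2^{\,j}\bigr]=\frac{1}{dD^2+1}$. Truncating the swap sector to the bond adjacent to the observed site ($j=0$) gives $\approx (d^2-d-1)/(d^3D^2)$ instead of $1/(dD^2)$, which is off by a factor of $4$ at $d=2$. The correct constant requires the whole geometric tail together with the negative all-identity term $\mathrm{A}_{e,e}$.

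The missing idea is an exact evaluation of the unit-eigenvalue sector. $\mathrm{Te}$ has the simple eigenvalue $1$ with right eigenvector $|e\rangle$ (its first column is $\delta_{\tau,e}$), so $\mathrm{Te}^{n-1}\to|e\rangle\langle L|$, and $\langle L|$ must be computed in closed form. The paper writes $\mathrm{Te}=G_{dD}^{-1}XG_D$ and notes that $\mathrm{Te}$ is self-adjoint for the metric $H=G_{dD}X^{-1}G_{dD}$, because $G_{dD}G_D=G_DG_{dD}$. This gives $\langle L|\mathrm{A}|e\rangle=\sum_\pi D^{2\#\pi}\mathrm{tr}(\mathrm{P}_\pi O)/\sum_\sigma (dD^2)^{\#\sigma}$, which is the $k$th Haar moment of $\hat O^{(1)}\otimes\mathcal{I}_{D^2}$ on $\mathbb{C}^{dD^2}$. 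The Gaussian moments with variance $1/(dD^2)$ follow from this. The $D^2/d^n$ correction then comes from the remaining eigenvalues, which are bounded by $d^{-(k-m)/2}[1+\mathcal{O}(k^2/D)]$ on the block with $m$ fixed points.

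There are two smaller issues. First, in case (2), $\mathrm{Wg}(\sigma\tau^{-1},dD)$ is not $\delta_{\sigma\tau}(dD)^{-k}[1+\mathcal{O}(k^2/(dD)^2)]$. Off-diagonal entries are suppressed only by $(dD)^{-1}$ per transposition, and in the rows $\delta\neq e$ of $\mathrm{Te}$ they are of the same order as the diagonal term; they are what produce the exact zeros $\mathrm{Te}_{\delta,e}=0$. Your leading term survives because those rows are $\mathcal{O}(1/d)$. But the decoupling has to be argued from first-row dominance of $\mathrm{Te}$, and the $\mathcal{O}(k^2D^2/d)$ error arises precisely in that sub-block. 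Second, in case (1) with $D^2\gg d^n$, ``the state behaves like a Haar-random vector'' is the conclusion, not an argument. You need that $\mathrm{Te}$ is diagonally dominant with diagonal $d^{\#\delta-k}$, together with control of off-diagonal leakage accumulated over $n$ sites. The paper commutes the diagonal weights outward from the middle of the chain, and this is what yields an error of $k^2d^{n/2}/D$ rather than a bound requiring $D\gg d^n$.
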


Theorem~\ref{Theorem_Local} shows that learning models with local observables converge to GPs under different large-TN scalings, while their covariances obey distinct asymptotic scaling laws. In the $(D,n)\rightarrow\infty$ limit at fixed $d$, two distinct covariance scalings arise. When $D^2\ll d^n$, the accessible state space is limited by the bond dimension. Since $\operatorname{tr}(\hat O_{li}\hat O_{lj})$ is typically of order $d^n$, the covariance scales polynomially as $\mathcal{O}(1/dD^2)$. By contrast, when $D^2\gg d^n$, the covariance scales as $\mathcal{O}(1/d^n)$, leading to exponential concentration with increasing system size. A different scaling arises for $d\rightarrow\infty$ at fixed $D$, where the covariance scales as $\mathcal{O}( 1/D^2d ) $. Thus, although the Gaussianity persists for learning models with local observables, the covariance of the resulting GP obeys distinct scaling laws across different asymptotic regimes. Beyond these limits, we further derive explicit finite-$(D,d,n)$ corrections to all $k$th-order moments, as summarized in Table~\ref{tab:main_results} and detailed in the Supplemental Material~\cite{GPTNML_Supplementary}.

\emph{Gaussianity and its breakdown for global observables---} Global observables arise naturally in a variety of TN learning tasks, including generative modeling~\cite{Han2018Unsupervised,Liu2023Tensor,Akamatsu2026Plastic}, probabilistic sequence learning~\cite{Miller2021Tensor}, and hybrid classification--generation schemes~\cite{Mossi2025Matrix}, where likelihoods or configuration probabilities are determined by contractions over the entire TN. This motivates us to examine the statistical properties of TN learning models with global observables. We obtain the following theorem.

\begin{theorem}\label{Theorem_Global}
Define the learning output with global observable $\hat{O}_{gi}$ by $C(\hat{O}_{gi})$. Let the output vector $\mathscr{C} _g=\{C(\hat{O}_{gi}\}_{i=1}^{p}$ be the stochastic process generated by the Haar-random MPS. Then $\mathscr{C}_g$ obey the following distributions: 
\begin{itemize}
\item[(1)]For the bond dimension $D\rightarrow\infty$, the physical site $n\rightarrow\infty$, we have 
\begin{equation}
\mathscr{C} _g\sim \mathcal{G} \mathcal{P} \left( 0,\varSigma \right) ,\qquad \varSigma _{ij}=\frac{\mathrm{tr(}\hat{O}_{gi}\hat{O}_{gj})}{d^{2n}}
\end{equation}
\item[(2)] In the regime $d\to\infty$ at fixed $(D,n\ge2)$, the distribution of $C(\hat{Q})$ remains non-Gaussian, where $\hat{Q}$ denotes a Pauli string acting nontrivially on all physical sites. For $nk^2\ll D^2$, its $k$th-order moment deviates from the corresponding Gaussian moment by $\mathcal{O}(nk^2/D^2)$.
\end{itemize}
\end{theorem}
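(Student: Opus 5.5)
We prove Theorem~\ref{Theorem_Global} with the same moment method used for Theorem~\ref{Theorem_Local}. First we reduce to unnormalized outputs $\tilde C(\hat O)=\langle\psi|\hat O|\psi\rangle$. By the exponential concentration of $\langle\psi|\psi\rangle$ around one \cite{Haferkamp2021Emergent}, replacing $C$ by $\tilde C$ changes every fixed-order moment by an amount that vanishes as $n\to\infty$; in part (2), at fixed $n$, we keep track of this correction explicitly.

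Expanding $\hat O_{gi}=\sum_a c^{(i)}_a\hat Q_a$, it suffices to control the mixed moments $\mathbb{E}\prod_{r=1}^k\tilde C(\hat Q_{a_r})$. Writing $\tilde C(\hat Q)=\mathrm{tr}[\prod_s E^{(s)}_{Q_s}]$ with transfer matrices $E^{(s)}_{Q_s}=\sum_{j,j'}(Q_s)_{j'j}\bar A^{(s)}_{j'}\otimes A^{(s)}_{j}$, the $k$th moment factorizes over sites. It becomes the trace of a product of site-averaged $k$-fold replica transfer operators
\[
T^{(s)}=\mathbb{E}_{U_s}\Big[\textstyle\bigotimes_r E^{(s)}_{Q_{a_r,s}}\Big].
\]
Each $T^{(s)}$ is evaluated by Weingarten calculus on $U(Dd)$ with $k$ copies of $U\otimes\bar U$, giving $\sum_{\sigma,\tau\in S_k}\mathrm{Wg}(\sigma\tau^{-1},Dd)\,(\cdots)$. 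The physical indices contract the Pauli factors $Q_{a_r,s}$ along cycles of $\tau$, producing $\mathrm{tr}(\prod_{r\in\text{cycle}}Q_{a_r,s})$. The auxiliary $|0\rangle$ leg contributes factors depending on $\sigma$, and the bond legs carry permutation operators on $(\mathbb{C}^D)^{\otimes 2k}$.

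For part (1), we do the leading-order bookkeeping. In the large-$D$ limit the Weingarten function is dominated by $\sigma=\tau$ with corrections $\mathcal O(k^2/(Dd)^2)$. Among diagonal terms, only pairings survive at leading order, because Pauli traces over a cycle vanish unless the product of Paulis is proportional to the identity, and the Clifford relations force pair cancellations. A global $\hat Q$ is nontrivial on every site, so the same pairing must be chosen consistently along the whole ring. Otherwise the permutation operators on bonds mismatch between neighbouring sites, and each mismatch costs a factor $D^{-1}$ from the inner product of distinct permutation operators. Summing over pairings then yields the Wick/Isserlis formula with $\varSigma_{ij}=\mathrm{tr}(\hat O_{gi}\hat O_{gj})/d^{2n}$. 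The error is bounded by three contributions: the $\mathcal O(1/D)$ per-site defect summed over $n$ positions, the number of non-pairing permutations (at most $k!$) times coefficient growth $m^{k/2}$, and the normalization correction $\mathcal O(k^2/d^n)$. This gives the tabulated $\mathcal O(nk!m^{k/2}/D)+\mathcal O(k^2/d^n)$, and Gaussianity follows from moment determinacy of the Gaussian \cite{akhiezer1965classical}.

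For part (2), we fix $D$ and $n\ge2$ and let $d\to\infty$. Now $\mathrm{Wg}(\cdot,Dd)$ is again dominated by $\sigma=\tau$, but the bond dimension stays finite. The leading term is therefore the trace over the ring of $n$ copies of a fixed $D$-dependent operator $\sum_{\pi\text{ pairing}}P_\pi\otimes P_\pi$, normalized appropriately. The crucial point is that permutation operators for different pairings are not orthogonal at finite $D$: $\mathrm{tr}(P_\pi P_{\pi'})=D^{\#\text{cycles}(\pi^{-1}\pi')}$. Consequently the $k$th moment equals $\mathrm{tr}(G^n)$, where $G_{\pi\pi'}$ is the Gram matrix normalized by $D^{k}$. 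This differs from the Gaussian value $(k-1)!!$ (the number of diagonal terms) by off-diagonal contributions of relative size $D^{-2}$ per mismatch, and these do not vanish as $d\to\infty$. This is the non-Gaussian correction. Expanding $\mathrm{tr}(G^n)=\mathrm{tr}((I+\Delta)^n)$ with $\|\Delta\|\lesssim k^2/D^2$ gives $n\,\mathrm{tr}\Delta+\dots$, whence the $\mathcal O(nk^2/D^2)$ deviation when $nk^2\ll D^2$. Its non-vanishing (e.g.\ already at $k=4$, where $\mathrm{tr}\Delta\neq0$) establishes the breakdown. Normalization corrections at fixed $n$ must be shown to be $\mathcal O(1/d)$.

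We expect the main obstacle to be the combinatorial control of the Weingarten sum uniformly in $k$: bounding all off-diagonal $(\sigma,\tau)$ terms and non-pairing cycle structures so that the stated $k$-dependence holds. For this we would first try to use the standard bound $|\mathrm{Wg}(\sigma,N)|\le C N^{-k-|\sigma|}$ together with the counting estimate of permutations at distance $\ell$ ($\le k^{2\ell}$).
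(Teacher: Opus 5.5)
The reduction is the paper's in both parts, but the step in part (2) that is supposed to establish non-Gaussianity is wrong. Part (1) follows the paper's route: per-site replica matrices become diagonal at large $D$, pairings produce the Isserlis structure, off-diagonal defects of relative size $\mathcal{O}(k^2/D)$ accumulate over the $n$ sites, and Carleman closes the argument. In part (2) you correctly reduce the leading-order normalized $k$th moment to $\mathrm{tr}(G^n)$, with $G_{\pi\rho}=D^{\#(\pi\rho^{-1})-k}$ on pairings; this is the paper's $\mathbb{G}(D)$.

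\textbf{The gap.} Since $\#(\pi\pi^{-1})=k$, $G$ has unit diagonal. So $\Delta=G-I$ is purely off-diagonal and $\mathrm{tr}\,\Delta=0$ for every $k$. At $k=4$, $G$ is the $3\times3$ matrix with $1$ on the diagonal and $D^{-2}$ elsewhere, so your claim $\mathrm{tr}\,\Delta\neq0$ is false and the term $n\,\mathrm{tr}\,\Delta$ vanishes identically. Combinatorially, along the periodic ring a sequence of pairings $\pi_1,\dots,\pi_n,\pi_1$ cannot change exactly once. Every non-Gaussian term therefore carries at least two mismatch factors, hence at least $D^{-4}$, and requires $n\ge2$. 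Your mechanism would predict non-Gaussianity even at $n=1$, where in fact $\mathrm{tr}\,G=(k-1)!!$ exactly and the output is a Gaussian single-site output; that is precisely why the theorem assumes $n\ge2$. The true leading deviation is $\binom{n}{2}\mathrm{tr}(\Delta^2)>0$. At $k=4$ the eigenvalues $1+2/D^2$ and $1-1/D^2$ (twice) give the paper's $\kappa=(1+2/D^2)^n+2(1-1/D^2)^n-3=3n(n-1)/D^4+\mathcal{O}(n^3/D^6)$.

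\textbf{How to repair it.} $G$ is the Gram matrix of the vectorized permutation operators, so it is positive semidefinite. Its eigenvalues $\lambda_i\ge0$ have mean $1$ by the unit diagonal. They are not all equal to $1$ when $k\ge4$ and $D<\infty$, since $G\neq I$. Strict convexity of $x\mapsto x^n$ for $n\ge2$ then gives $\mathrm{tr}(G^n)=\sum_i\lambda_i^n>(k-1)!!$. Hence, as $d\to\infty$, the normalized $k$th moment converges to a value strictly above the Gaussian one. Alternatively, as the paper does, use $\#(\pi\rho^{-1})=2|\pi\vee\rho|$ and the Banica--Curran diagonalization to obtain the full spectrum $\prod_{(i,j)\in\mu}\bigl(1+(2j-i-1)/D^2\bigr)$. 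That route also yields the $D=1$ Gaussian-chaos limit. Your $\mathcal{O}(nk^2/D^2)$ estimate survives only as an upper bound via $(1+\|\Delta\|)^n-1$; it is not the size of the actual deviation.

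\textbf{A smaller point in part (1).} Non-pairing permutations do not vanish: for example $\mathrm{tr}(\hat Q^4)=d^n$ and $\mathrm{tr}(\hat Q_1\hat Q_2\hat Q_1\hat Q_2)=-d^n$. They are suppressed by $d^{-n}$ per lost cycle. In the paper this, not the normalization, is what produces the $\mathcal{O}(k^2/d^n)$ term, controlled by summing over cyclic orders and using $\hat Q(\mathbf{x})^2\propto I$.
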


Theorem~\ref{Theorem_Global} characterizes the asymptotic statistics of global observables under different architectural scalings. In the limit $(D,n)\rightarrow\infty$ at fixed $d$, global observables converge to a GP with an exponentially concentrated covariance. By contrast, when $d\rightarrow\infty$ at fixed $D$, the output distribution remains non-Gaussian. This non-Gaussian behavior suggests a natural extension beyond a single Pauli string $\hat{Q}$. For a more general observable of the form $\hat O=\sum_i c_i\hat Q_i$, the corresponding output becomes $C(\hat O_g)=\sum_i c_i C(\hat Q_{gi})$, suggesting that non-Gaussian features may persist for broader classes of global observables. This provides evidence for a breakdown of Gaussian universality for global observables under this scaling.

\emph{Proof of Theorems 1 and 2---}
We sketch the main ideas here and leave the technical details to the Supplemental Material~\cite{GPTNML_Supplementary}. Our proof combines the method of moments with the Cram\'er-Wold device to establish convergence to a GP. We first evaluate the $k$th-order moments $\mu_k=\mathbb{E}_{|\psi\rangle}[(C(\hat O))^k]$, by averaging over the random unitary-embedded MPS $|\psi\rangle$, which is mapped to calculating the partition function of 1D spin model. In two cases of asymptotic limits, $D\to\infty$ (with both local and global outputs), and $d\to\infty$ (with the local output), the leading-order contributions to the term $\mu_k$ coincide with the moments of a standard Gaussian distribution. Thus one can uniquely determine the distribution of the output $C(\hat{O})$ based on the Carleman condition~\cite{akhiezer1965classical}. Then, by showing that an arbitrary finite linear combination of $C(\hat O_i)\in\mathscr C$ converges to a Gaussian distribution, one can establish the convergence of the random sequence $\mathscr C_a$, $a\in\{l,g\}$, via the Cram\'er-Wold device~\cite{Billingsley1995Probability}.  Moreover, we quantify finite-size corrections to the GP limit by bounding the deviations of all $k$th-order moments from their Gaussian counterparts.

The two theorems establish when Gaussianity emerges or breaks down in TN learning models under different architectural scalings and observable localities. In the large-$D$ regime, the enhanced expressive capacity of the MPS allows both local and global outputs to converge to Gaussian distributions. By contrast, when $d$ becomes large at fixed $D$, the limited entanglement capacity of the MPS drives it toward a weakly correlated, nearly product-state structure. For local outputs, the nontrivial contribution remains confined to the observed site, while the remaining sites contribute only trivial factors, so Gaussianity is preserved. For global outputs, however, the contraction involves contributions from all sites and approaches a product of approximately independent random variables, leading generically to non-Gaussian statistics with heavy tails. The method of moments captures this behavior quantitatively: the global distribution interpolates between a standard Gaussian and an $n$th-order Gaussian chaos $Z_n = \prod_{i=1}^n g_i$ with $g_i \sim \mathcal{N}(0, \sigma^2)$. In particular, Gaussianity is recovered asymptotically as $D\to\infty$, whereas for $D=1$ the distribution reduces exactly to $Z_n$.

\emph{Observables and Gaussianity---}
The operator restrictions in Theorems~\ref{Theorem_Local} and~\ref{Theorem_Global} can be understood through spectral delocalization. Consider the Haar-random output $C_{\mathrm{Haar}}(\hat O)=\langle0|U^\dagger\hat O U|0\rangle$, where $U$ is drawn from the Haar measure on $U(N)$. We take $\hat O$ to be traceless, since its identity component only shifts the mean. A sufficient condition for its variance-normalized fluctuations to become Gaussian as $N\to\infty$ is
\begin{equation}
\frac{\|\hat O\|_\infty}{\sqrt{\operatorname{tr}(\hat O^2)}}\to0.
\label{eq:spectral_delocalization}
\end{equation}
This condition ensures that no single eigenvalue dominates the squared spectral weight~\cite{Meckes2020Random}. In contrast, for a rank-one projector, the output multiplied by $N$ converges to a unit-mean exponential distribution~\cite{Porter1956Fluctuations,Anschuetz2025Unified}.

\begin{figure}[htbp] 
    \centering
    \makebox[\linewidth][c]{%
\includegraphics[width=1.05\linewidth]{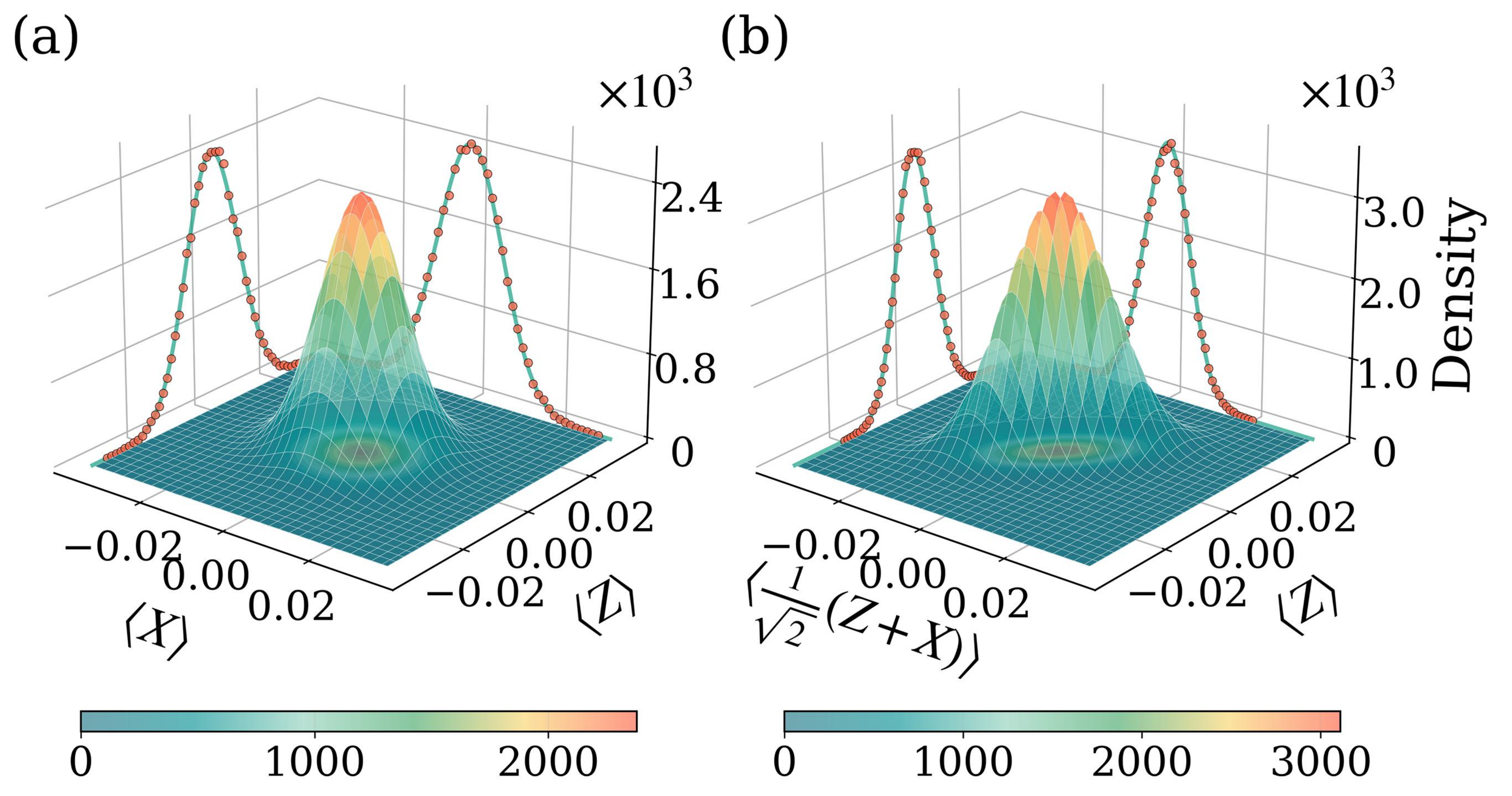}%
    }
\caption{Two-dimensional joint probability densities and their scaled marginal distributions obtained numerically with $n=15$ physical sites, bond dimension $D=128$, and $4\times10^6$ samples. The left panel shows the joint distribution of $\langle X\rangle$ and $\langle Z\rangle$, while the right panel presents the distribution for $\langle \frac{1}{\sqrt{2}}\left( X+Z \right) \rangle$ and $\langle Z\rangle$.}
\label{fig:placeholder}
\end{figure}

For joint Gaussianity, the Cram\'er--Wold criterion requires every fixed real linear combination of the rescaled outputs to converge to the corresponding Gaussian limit. Clifford generators $\{\hat Q_i\}_{i=1}^m$ provide a convenient setting: every nonzero real combination $\hat O=\sum_i c_i\hat Q_i$ satisfies $\hat O^2=(\sum_i c_i^2)I_N$, so the ratio in Eq.~\eqref{eq:spectral_delocalization} equals $N^{-1/2}$. The spectral condition therefore holds throughout their linear span, yielding a GP limit for Haar outputs when the rescaled covariances converge.

Local observables allow more general spectra. Taking $\hat O_l=\hat A\otimes I_R$ for instance, where $\hat{A}$ is a $d$-dimensional traceless observable and $R=d^{n-1}$. We have $\tfrac{\|\hat O_l\|_\infty}{\sqrt{\operatorname{tr}(\hat O_l^2)}}\le R^{-1/2}$. The identity $I_R$ gives each local eigenvalue an $R$-fold degeneracy. Thus Eq.~(\ref{eq:spectral_delocalization}) holds as $R\to\infty$ without further restrictions on the spectrum of $\hat A$.

For MPS outputs, however, spectral delocalization alone does not ensure Gaussianity, as illustrated by the non-Gaussian global outputs in the large-$d$ limit at fixed $(D,n\ge2)$. Extending the local result beyond the Clifford subspace therefore requires a separate analysis of the tensor-network contractions. For arbitrary fixed Hermitian single-site observables, the transfer-matrix analysis shows that the centered, rescaled moments converge to their Gaussian values within the large-$(D,n)$ regimes specified above. This establishes the local Gaussian limit without imposing a Clifford structure on the observable.
\begin{figure}[!t]    
\hspace*{-\parindent}
\includegraphics[width=1.01\columnwidth]{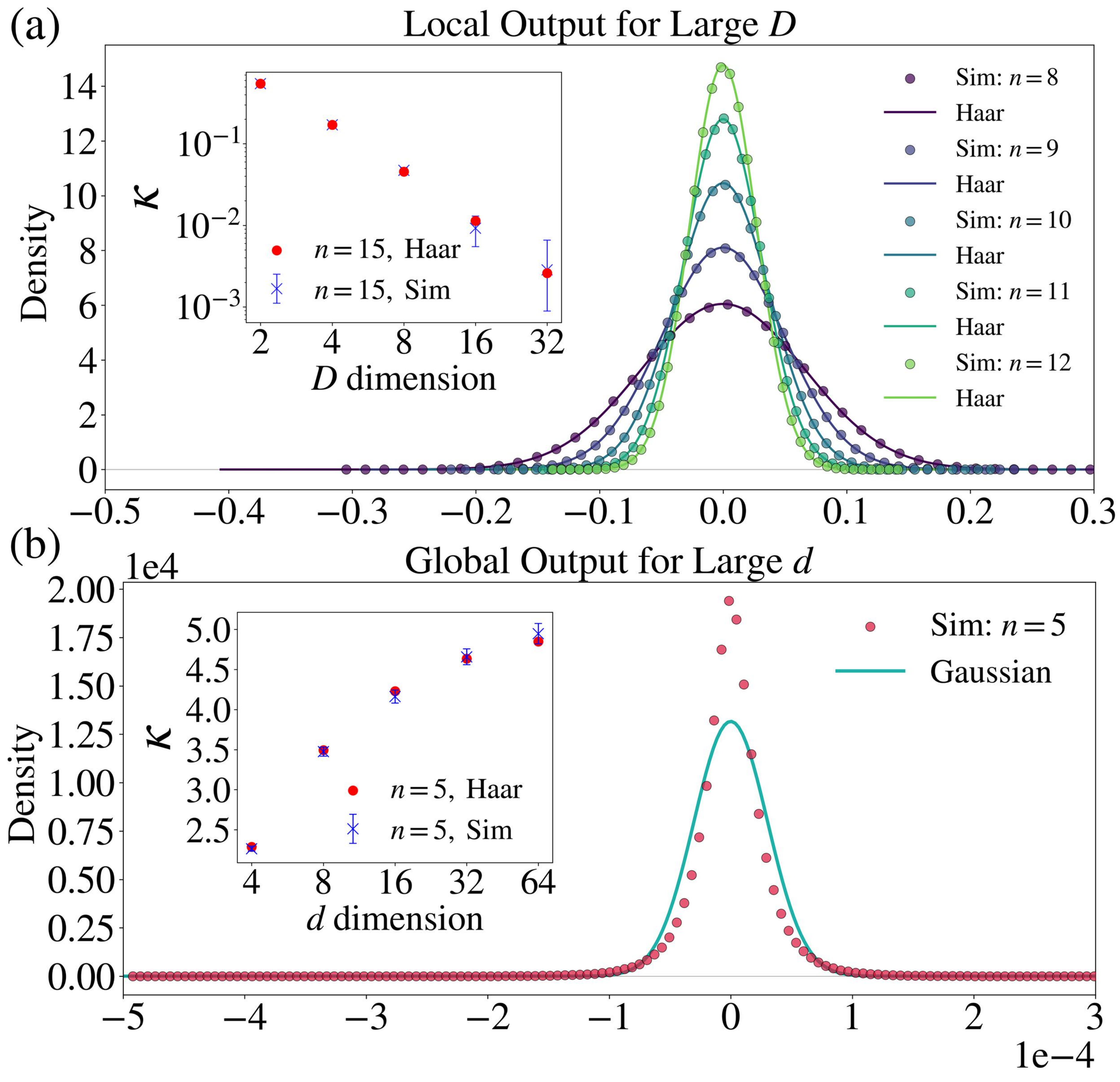}    
\caption{Numerical distributions and theoretical predictions for MPS observables, using approximately $10^7$ samples for each parameter set. Upper panel: local outputs $C(\hat{O}_l)$ for different system sizes $n$ at $D=32$, compared with the predicted Gaussian distribution. The inset shows the excess kurtosis $\kappa$ as a function of $D$. Lower panel: global outputs $C(\hat{O}_g)$ at $d=64$ and $D=2$, compared with a Gaussian distribution of equal variance. The inset shows $\kappa$ as a function of $d$ at $D=2$.}
\label{fig:placeholder2}
\end{figure}

\emph{Numerical results---} To test our theoretical predictions for the Gaussian limits and their breakdown, we numerically sample the local and global outputs, $C(\hat O_l)$ and $C(\hat O_g)$, using the same Haar-random unitary embeddings as in the analytical model. We perform the tensor contractions by using the Python package \texttt{quimb}~\cite{Gray2018quimb}. Fig.~\ref{fig:placeholder} illustrates the joint distribution of two outputs, which is consistent with Gaussian behavior even in the crossover regime $D^2 \approx d^n$. We further assess the predicted Gaussian and non-Gaussian behavior through the excess kurtosis $\kappa$. Both skewness and excess kurtosis vanish for a Gaussian distribution~\cite{Joanes1998,Jarque1980}. Since the output distributions considered here are symmetric about zero, their mean and skewness vanish. We therefore compare the sampled excess kurtosis with the theoretical predictions, testing its convergence to zero in the Gaussian regimes and to a nonzero value in the predicted non-Gaussian regime (Fig.~\ref{fig:placeholder2}).

A key implication of our results is that Gaussianity is not a property of the TN architecture alone, but also depends on the observables and the way the large-parameter limit is taken. Increasing the physical dimension enlarges the local Hilbert space while leaving the dimension of the virtual space unchanged. At fixed $(D,n\ge2)$, this limit makes the rescaled observed transfer matrices Gaussian, yet their products retain non-Gaussian higher-order moments. Single-site outputs instead reduce to linear Gaussian fluctuations as the unobserved transfer matrices approach depolarizing maps. Thus, the same MPS ensemble admits a Gaussian description for local outputs while retaining non-Gaussian global fluctuations. These fluctuations survive in the limiting distribution and cannot be captured by a Gaussian approximation with the correct covariance. Increasing the physical dimension therefore does not necessarily improve a GP approximation. The use of Gaussian priors in TN learning must be justified for the observables and parameter scaling relevant to the task.

\emph{Discussion---}Our findings motivate further investigation of Bayesian inference with TN models. One potential direction is to introduce explicit input encodings and derive the corresponding covariance kernels, connecting the observable-indexed processes studied here to prediction on input data. Incorporating observation noise and finite sampling would then allow the distinct covariance scalings of local and global outputs to be assessed in concrete inference tasks. In addition, extending the analysis to observables with intermediate support and to other TN geometries would clarify how connectivity and observable locality control the survival of non-Gaussian fluctuations. Such extensions could help identify the scope of Gaussian priors in more general TN learning models.

In summary, we have investigated GP limits and their breakdown for random MPS outputs under different parameter scalings. Within the specified large-$(D,n)$ regimes, we establish GP limits for local and global output families in the Clifford subspace, with less restrictive conditions on single-site observables for local outputs. In the large-$d$ limit at fixed $D$ and $n\ge2$, local outputs become Gaussian, whereas fully supported outputs retain non-Gaussian fluctuations. We explain this distinction through the transfer-matrix structure, quantify finite-parameter corrections through moment estimates, and provide numerical evidence supporting the theoretical predictions. Our results provide a theoretical basis for Gaussian priors in TN models while identifying regimes in which higher-order output statistics remain essential.

\emph{Acknowledgments---} This work was supported by the National Natural Science Foundation of China (Grant No. 12375060), and the Fundamental Research Funds for the Central Universities (Grants No. 63253097, No. 63263090).

\bibliography{Reference}

\clearpage
\onecolumngrid
\makeatletter
\setcounter{page}{1}
\setcounter{figure}{0}
\setcounter{equation}{0}
\setcounter{theorem}{0}

\setcounter{secnumdepth}{3}

\renewcommand{\thefigure}{S\@arabic\c@figure}
\renewcommand \theequation{S\@arabic\c@equation}
\renewcommand \thetable{S\@arabic\c@table}

\newpage

\begin{center} 
	{\large \bf Supplemental Material for: Gaussian Universality and Its Breakdown in Tensor-Network Machine Learning}
\end{center}

\renewcommand\thesection{S\Roman{section}}
{
  \hypersetup{linkcolor=blue}
  \tableofcontents
}

\section{Preliminaries and Proof Strategy}

\subsection{Weingarten Calculus for Random MPS }

The general formula for a matrix product state (MPS) with periodic boundary conditions is given by:
\begin{equation}
|\psi \rangle =\sum_{j_1,\cdots ,j_n}{\mathrm{tr}\left( A_{j_1}^{(1)}A_{j_2}^{(2)}\cdots A_{j_n}^{(n)} \right) |j_1,}\cdots ,j_n\rangle.
\end{equation}
where $|j_s\rangle $ represents the local state at the $s$-th site with physical dimension $d$. The $A_{j_s}^{\left( s \right)}$ are $D\times D$ matrices, where $D$ denotes the bond dimension. Following the unitary embedding scheme, we map each $D\times d\times D$ local tensor $A^{(s)}$ to a $Dd$-dimensional unitary matrix $U_s$~\cite{Perez-Garcia2007Matrix,Gross2010Quantum,Silvano2010Statistical, Silvano2010Typicality,collins2013matrix,Haferkamp2021Emergent} as illustrated in Fig.~\ref{fig:MPS} \cite{Perez-Garcia2007Matrix,Mele2024introduction,Haferkamp2021Emergent}.

\begin{figure}[h]
    \centering
    \includegraphics[width=0.5\linewidth]{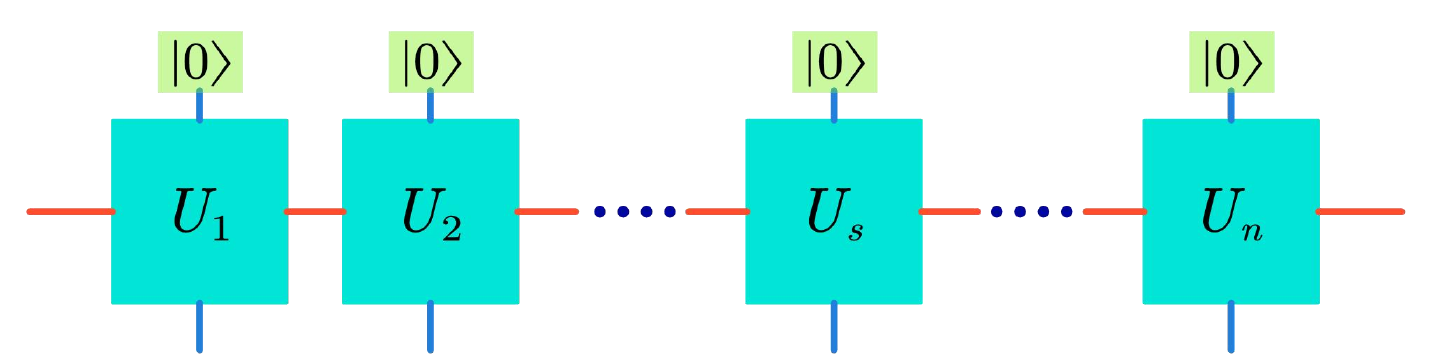}
    \caption{Matrix product state}
    \label{fig:MPS}
\end{figure}

Here, the unitary matrix \( U^{(s)} \in U(Dd) \) (\( s = 1, 2, \dots, n \)) can be expressed as \( C^d \otimes C^D \). Each connecting line in Fig.~\ref{fig:MPS} represents the contraction of a pair of subscripts. In our proof, the MPS adopts periodic boundary conditions, meaning the first and last matrices are also connected to form a ring structure. The dimension of the red line connecting the two \( U \) matrices is \( D \) , while the dimension of the blue line connecting \( U \) and \( | 0 \rangle \) is \( d \).

The expected value in the form of \( U_{i_1 j_1} \cdots U_{i_k j_k} \overline{U}_{i_1' j_1'} \cdots \overline{U}_{i_k' j_k'} \) under the Haar measure is given by \cite{Renes2004symmetric,Dankert2009exact,Harrow2009random,Collins2006Integration}:
\begin{equation}\label{Wgf}
\mathbb{E} \left( U_{i_1j_1}\cdots U_{i_kj_k}\overline{U}_{i_1\prime j_1\prime}\cdots \overline{U}_{i_k\prime j_k\prime} \right) =\sum_{\pi ,\sigma \in S_k}{\delta _{i_1,i_{\pi (1)}\prime}}\cdots \delta _{i_k,i_{\pi (k)}\prime}\delta _{j_1,j_{\sigma (1)}\prime}\cdots \delta _{j_k,j_{\sigma (k)}\prime}\,\mathrm{Wg}\left( \sigma \pi ^{-1},N \right),
\end{equation}
where \( \mathrm{Wg} \left( \sigma \pi^{-1}, N \right) \) is known as the Weingarten factor and $N$ is the dimension of $U$. This representation shows that the integration with respect to the Haar measure effectively connects the subscripts on each side of the matrices according to the permutations in the symmetric group $S_k$ and multiplies them by a Weingarten coefficient.

For the case where \( k = 2 \), the connection of two subscripts includes two possibilities: a symmetric connection \( S \) (i.e., \( U_1 \) connected to \( \overline{U}_1 \), \( U_2 \) connected to \( \overline{U}_2 \)) and an antisymmetric connection \( A \) (i.e., \( U_1 \) connected to \( \overline{U}_2 \), \( U_2 \) connected to \( \overline{U}_1 \)), which correspond to the identity permutation \( e \) and (12) in the second-order symmetric group~\cite{Mele2024introduction}.

\begin{figure}[htbp]
    \centering
    \includegraphics[width=0.4\linewidth]{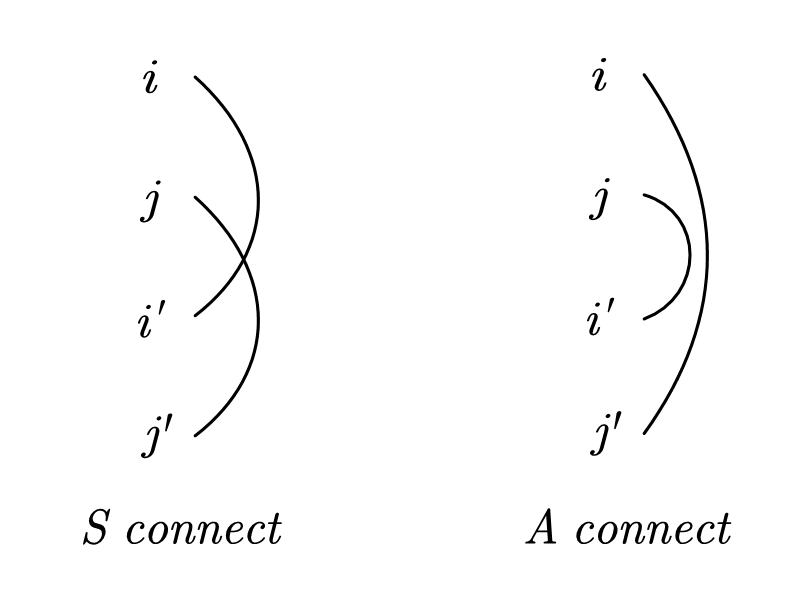}
    \caption{Subscript connection modes in the second-order case}
    \label{fig:121212}
\end{figure}

For the third-order case, the third-order symmetric group has six elements: the identity permutation \( e \), transposition (12), (13), (23), and 3-cycles (123) and (132), corresponding to six types of connections on each side of the matrix, as shown in Fig. \ref{fig:123}.

\begin{figure}[htbp]
    \centering
    \includegraphics[width=0.6\linewidth]{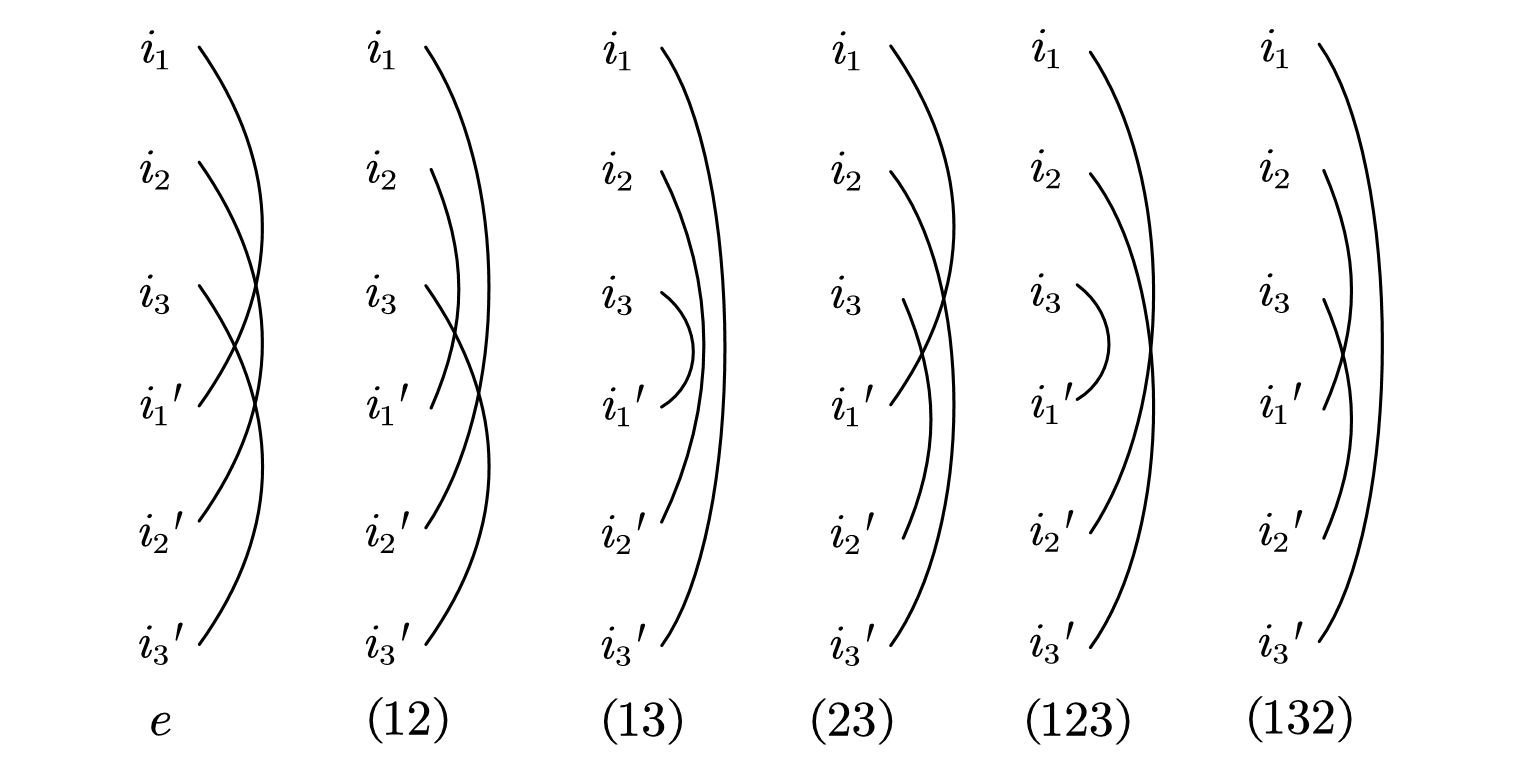}
    \caption{Subscript connection modes in the third-order case}
    \label{fig:123}
\end{figure}

The Weingarten matrix $\mathrm{Wg}$ is the inverse of a Gram matrix $G_N$ (with matrix elements $N^{\#(\pi \delta ^{-1})}$). Its matrix elements depend only on the conjugacy class corresponding to \( \pi \delta^{-1} \), which implies that \( \mathrm{Wg}(\delta \pi^{-1}, N) = \mathrm{Wg}(\pi \delta^{-1}, N) \). When the dimension $N$ tends to infinity, for a fixed order \( k \), the Weingarten factors asymptotically approach~\cite{Mele2024introduction,Collins2003moments}:
\begin{equation} \label{Wgj}
\mathrm{Wg} \left( \pi, N \right) = \mathrm{Moeb} \left( \pi \right) N^{-2k + \# \pi} + \mathcal{O} \left( N^{-2k + \# \pi - 2} \right),
\end{equation}
where $\#(\pi)$ represents the number of cycles in \( \pi \). For the identity permutation \( e \) in a \( k \)th-order symmetric group, \( \#e = k \), for a 2-cycle (12), \( \#(12) = k-1 \), and for a \( k \)-cycle \( (12\cdots k) \), \( \#(12\cdots k) = 1 \). According to the properties of symmetric groups, the inverse of a symmetric group element does not change its cycle number, so \( \#(\pi) = \#(\pi^{-1}) \). In fact, there is an exact bound for Weingarten the factor, which is given by the following lemma~\cite{Collins2017weingarten}.  

\begin{lemma} [Bound for Weingarten Factor]
For any $\sigma \in S_k$ and $N > \sqrt{6}k^{7/4}$, 
\begin{equation}\label{upperboundofwg}
\frac{1}{1-\frac{k-1}{N^2}}\le \frac{N^{2k-\#\sigma}\mathrm{Wg(}\sigma ,N)}{\mathrm{Moeb(}\sigma )}\le \frac{1}{1-\frac{6k^{7/2}}{N^2}}.
\end{equation}
In addition, the inequality on the left is valid for any $N\geqslant k$.
\end{lemma}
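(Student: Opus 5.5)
The plan is to prove the lemma through the monotone-factorization expansion of the Weingarten function, which Collins and Matsumoto use and which we take as given. The expansion comes from inverting the Gram matrix $G_N$ in the centre of $\mathbb{C}[S_k]$ via the Jucys--Murphy elements $J_1,\dots,J_k$. Expanding $N^{-k}\prod_{i}(1+J_i/N)^{-1}$ as a geometric series gives, for $N\ge k$ (where the series converges since the eigenvalues of $J_i$ are bounded by $k-1$),
\begin{equation*}
\mathrm{Wg}(\sigma,N)=(-1)^{|\sigma|}\sum_{g\ge 0}\vec h_{|\sigma|+2g}(\sigma)\,N^{-k-|\sigma|-2g},\qquad |\sigma|=k-\#\sigma .
\end{equation*}
Here $\vec h_r(\sigma)\ge 0$ counts monotone factorizations of $\sigma$ into $r$ transpositions $(s_1t_1)\cdots(s_rt_r)$ with $s_i<t_i$ and $t_1\le\cdots\le t_r$. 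The $g=0$ term is the leading asymptotic of Eq.~\eqref{Wgj}, so $(-1)^{|\sigma|}\vec h_{|\sigma|}(\sigma)=\mathrm{Moeb}(\sigma)$. Hence
\begin{equation*}
\frac{N^{2k-\#\sigma}\mathrm{Wg}(\sigma,N)}{\mathrm{Moeb}(\sigma)}=\sum_{g\ge0}\frac{\vec h_{|\sigma|+2g}(\sigma)}{\vec h_{|\sigma|}(\sigma)}\,N^{-2g},
\end{equation*}
a series with nonnegative coefficients and constant term $1$. Both inequalities then reduce to bounding the coefficient ratios $r_g=\vec h_{|\sigma|+2g}(\sigma)/\vec h_{|\sigma|}(\sigma)$ geometrically, from below by $(k-1)^g$ and from above by $(6k^{7/2})^g$. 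Summing the geometric series gives exactly the two stated bounds.

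\emph{Lower bound.} I would build an injection from pairs (minimal monotone factorization of $\sigma$, word $(i_1,\dots,i_g)\in\{1,\dots,k-1\}^g$) into monotone factorizations of length $|\sigma|+2g$. The map appends $(i_1k)(i_1k)(i_2k)(i_2k)\cdots(i_gk)(i_gk)$ to the factorization. Each appended pair squares to the identity, so the product is still $\sigma$. Every appended transposition has maximal element $k$, so monotonicity is preserved. The original factorization and the word are recovered by deleting the last $2g$ factors, so the map is injective. This gives $r_g\ge (k-1)^g$, and summing yields $\sum_g((k-1)/N^2)^g=1/(1-\frac{k-1}{N^2})$. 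This argument needs only convergence of the series, so it holds for all $N\ge k$, matching the final clause of the lemma.

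\emph{Upper bound (main obstacle).} Here I need $\vec h_{|\sigma|+2g}(\sigma)\le (6k^{7/2})^g\,\vec h_{|\sigma|}(\sigma)$ uniformly in $\sigma$, and this is the delicate step. I would derive it from the cut-and-join recursion satisfied by $\vec h$: removing the last transposition $(s\,t)$ either joins two cycles or cuts one. Each non-minimal factorization has an excess of $g$ "genus" steps, each of which costs at most a polynomial number of choices relative to a minimal factorization of a neighbouring permutation. Comparing $\vec h_{|\sigma'|}(\sigma')$ for $\sigma'=\sigma\cdot(st)$ with $\vec h_{|\sigma|}(\sigma)$ uses the product formula $\mathrm{Moeb}(\sigma)=\prod_{c}(-1)^{|c|-1}\mathrm{Cat}_{|c|-1}$ over cycles $c$ of $\sigma$. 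Catalan-ratio estimates such as $\mathrm{Cat}_{a+b}/(\mathrm{Cat}_a\mathrm{Cat}_b)\le O(\min(a,b)^{3/2})$ control how much the minimal count can grow under a cut or a join. Per unit of genus, the count of transposition choices ($\le k^2$) times these Catalan ratios ($\lesssim k^{3/2}$) gives a factor of order $k^{7/2}$, and careful constants give $6k^{7/2}$. Summing the geometric series then requires $6k^{7/2}<N^2$, i.e.\ $N>\sqrt6\,k^{7/4}$, which is exactly the stated hypothesis. If the recursion becomes unwieldy, the fallback is to cite the combinatorial estimate of Collins and Matsumoto~\cite{Collins2017weingarten} directly, since the lemma is quoted from there.
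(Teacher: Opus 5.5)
The paper does not prove this lemma; it quotes it from Collins--Matsumoto~\cite{Collins2017weingarten}. Your reduction is sound. For $N\ge k$ the Matsumoto--Novak series converges, and its $g=0$ coefficient is $(-1)^{|\sigma|}\vec h_{|\sigma|}(\sigma)=\mathrm{Moeb}(\sigma)$. So the ratio equals $\sum_{g\ge0}r_gN^{-2g}$ with $r_0=1$ and $r_g\ge0$. Your injection (append $(i_1k)(i_1k)\cdots(i_gk)(i_gk)$, recover by truncating) is valid and gives $r_g\ge(k-1)^g$. The left inequality, including the clause $N\ge k$, is therefore proved by your argument, which is more than the paper supplies.

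The gap is in the upper bound. The cut-and-join paragraph is a heuristic, and in the form you describe it does not close. The usable tool is the last-factor recursion $\vec h_r(\sigma)=\sum_{s<k}\vec h_{r-1}(\sigma(s\,k))+\delta_{\sigma(k),k}\,\vec h_r(\sigma^{\downarrow})$, where $\sigma^{\downarrow}\in S_{k-1}$ is $\sigma$ with its fixed point $k$ removed. Write $a_g(\sigma)=\vec h_{|\sigma|+2g}(\sigma)$ and take $\sigma(k)\neq k$. The recursion then reads $a_g(\sigma)=\sum_{s\sim k}a_g(\sigma(s\,k))+\sum_{s\not\sim k}a_{g-1}(\sigma(s\,k))$: first the cuts of the cycle through $k$, then the joins. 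At $g=0$ the cut terms alone already sum to $a_0(\sigma)$. Feeding in the hypothesis $a_g\le C^ga_0$ therefore returns only $a_g(\sigma)\le C^ga_0(\sigma)+C^{g-1}\sum_{s\not\sim k}a_0(\sigma(s\,k))$. This does not reproduce the hypothesis for any $C$ once $\sigma$ has a cycle avoiding $k$. The case $\sigma(k)=k$ fails the same way, since $a_0(\sigma)=a_0(\sigma^{\downarrow})$.

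So a local price per unit of genus (a transposition count times a Catalan ratio) does not by itself give a bound geometric in $g$. The join excess can be paid at many of the $|\sigma|+2g$ steps, and that accumulation has to be controlled globally. One way is to induct on $a_g(\sigma)\le C^g\lambda_g^{|\sigma|+\ell}\,a_0(\sigma)$, with $\ell$ the current ambient size and $1=\lambda_0\le\lambda_1\le\cdots$ bounded. The factor $\lambda_g$ lost at each cut, and at each step $\sigma\mapsto\sigma^{\downarrow}$, creates slack that absorbs the join terms. Such an induction does give a constant of order $k^{7/2}$. The join sum is $O(k^{5/2})$ relative to $a_0(\sigma)$ by a Catalan-ratio estimate, and keeping the weight bounded costs another factor $k$. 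Reaching the specific constant $6k^{7/2}$ needs bookkeeping you have not done.

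As written, your upper bound rests entirely on the fallback citation. That is exactly what the paper does and is acceptable for a quoted lemma, but the cut-and-join sketch should not be presented as its proof.
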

The Möbius coefficient \( \mathrm{Moeb}(\sigma) \) is determined by the cyclic structure of the symmetric group elements~\cite{Collins2006Integration}:
\begin{equation}
\mathrm{Moeb}(\sigma) = \prod_{C_i \in \sigma} \left( -1 \right)^{|C_i| - 1} c_{|C_i| - 1},\label{Moebcof}
\end{equation}
where \( C_1, C_2, \dots \) are the cycles of \( \sigma \), \( |C_i| \) is the length of the cycle, and \( c_n = \frac{(2n)!}{n!(n+1)!} \) is a Catalan number. Specifically, when \( \sigma \) is the identity permutation \( e \), the Möbius coefficient $\mathrm{Moeb}(e)=1$. Therefore, as $N$ tends to infinity, the Weingarten factor achieves its maximum value \( N^{-k} \) when \( \sigma = e \), i.e., \( \sigma \) is the identity permutation.

Additionally, since the Weingarten matrix is the inverse of the Gram matrix, we have the identity:
\begin{equation} \label{wginv}
\sum_{\delta \in S_k}{\mathrm{Wg}\left( \sigma \delta ^{-1},N \right) N^{\#(\delta \tau ^{-1})}}=\delta _{\sigma ,\tau}.
\end{equation}
\subsection{Proof Method}\label{ourProofMethod}
In this subsection, we introduce our proof strategy. We consider the output distribution with the local operator basis and the global operator basis as Eq.~(\ref{Cijl}) and Eq.~(\ref{Cijg}).
\begin{equation} \label{Cijl}
C\left( \hat{Q}_{l} \right) =\langle \psi |\hat{Q}_{}^{\left( 1 \right)}\otimes I^{\otimes (n-1)}|\psi \rangle  , 
\end{equation}
\begin{equation} \label{Cijg}
C\left( \hat{Q}_{g} \right) =\langle \psi |\hat{Q}_{}^{\left( 1 \right)}\otimes \hat{Q}_{}^{\left( 2 \right)}\otimes \cdots \otimes \hat{Q}_{}^{\left( n \right)}|\psi \rangle ,
\end{equation}
where $|\psi \rangle $ is a unitary embedding MPS, as shown Fig.~\ref{fig:MPS} with the unitaries $U$ sampled from the Haar measure. For the local output, we place the local operator at the first physical site. Due to the periodic boundary condition of the MPS, this choice does not affect our result. Here, $\hat{Q}^{(s)} = \bigotimes_{i=1}^t \sigma_i$ (with $\sigma_i \in \{X, Y, Z, I\}$) is a length-$t$ Pauli string acting on the $s$-th site of physical dimension $d=2^t$. In order to distinguish between local observables and global observables, we set $\mathrm{tr}(\hat{Q}_j^{(s)}) = 0$ to ensure that $\mathrm{tr}(\hat{Q}^{(s)})$ acts non-trivially on the corresponding site.

$C( \hat{Q}_{l})$ is represented diagrammatically in Fig.~\ref{fig:Cijl}.

\begin{figure}[h]
    \centering 
    \includegraphics[width=0.5\textwidth]{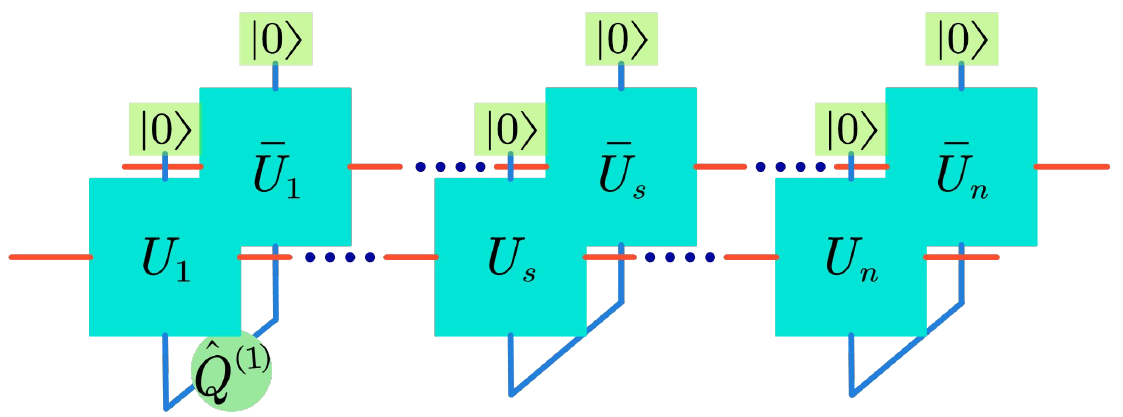}
    \caption{{Output distribution with local operator basis ($C(\hat{Q}_l)$)}}
    \label{fig:Cijl}
\end{figure} 
$C( \hat{Q}_{g})$ is represented diagrammatically in Fig.~\ref{fig:Cijg}.
\begin{figure}[h]
    \centering 
    \includegraphics[width=0.5\textwidth]{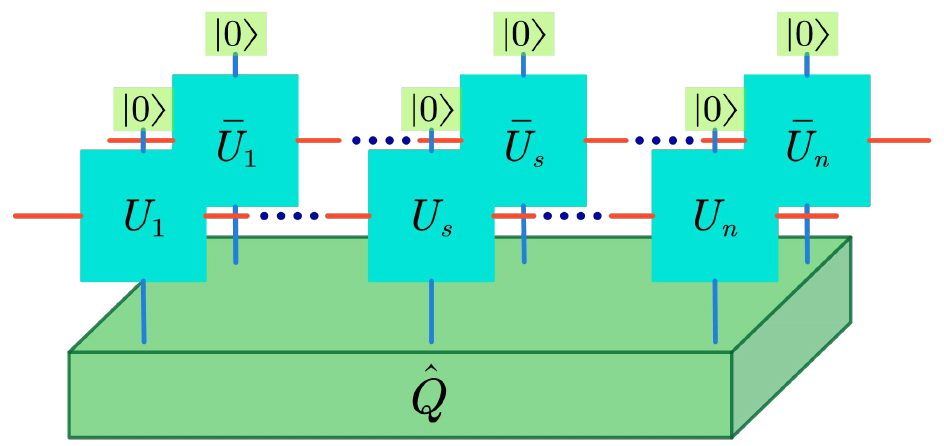}
    \caption{Output distribution with global operator basis ($C(\hat{Q}_g)$)}
    \label{fig:Cijg}
\end{figure}

First, we prove that $C( \hat{Q}_{l})$ approaches a Gaussian distribution as $d\rightarrow \infty $ and $D\rightarrow \infty $, respectively. The main method we use is to compare each order moment of $C( \hat{Q}_{l})$ with the corresponding moment of a Gaussian distribution, thereby determining the distribution of $C( \hat{Q}_{l})$. The \( k \)th-order moment of a distribution \( X \) is denoted as $\mathbb{E} (X^k)$.

The sufficiency of this method is guaranteed by our Lemma 2~\cite{kleiber2013Multivariate}.

\begin{lemma}[Carleman’s condition, Hamburger case] Let \( \gamma_k \) be the finite order moments of the distribution of a random variable \( X \) that can take values on the real line \( \mathbb{R} \). These moments uniquely determine the distribution of \( X \) if:
\begin{equation}
\sum_{k=1}^{\infty} \left( \gamma_{2k} \right)^{-1/2k} = \infty.
\end{equation}
\end{lemma}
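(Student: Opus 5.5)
The plan is to reduce uniqueness of the distribution to uniqueness of the characteristic function, and then to a quasi-analyticity statement. Suppose $X$ and $Y$ are real random variables with the same finite moments $\gamma_k$, and let $\varphi_X(t)=\mathbb{E}e^{itX}$ and $\varphi_Y(t)=\mathbb{E}e^{itY}$. Since all moments are finite, both characteristic functions are $C^\infty$ on $\mathbb{R}$. Set $h=\varphi_X-\varphi_Y$. For every $t$ and $k$,
\begin{equation*}
h^{(k)}(0)=i^k(\gamma_k-\gamma_k)=0,\qquad |h^{(k)}(t)|\le 2\beta_k ,
\end{equation*}
where $\beta_k=\mathbb{E}|X|^k=\mathbb{E}|Y|^k$ for even $k$. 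For odd $k$, I will use $\beta_k\le \beta_{k-1}^{1/2}\beta_{k+1}^{1/2}$, which follows from Cauchy--Schwarz, to bound the derivatives in terms of even moments. By L\'evy's inversion theorem, it suffices to prove that $h\equiv 0$.

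Next I will put $h$ into a Denjoy--Carleman class. Take $M_k=\beta_k$ (absolute moments). By H\"older's (Lyapunov's) inequality, $k\mapsto\log\beta_k$ is convex, so $(M_k)$ is log-convex and $M_k^{1/k}$ is nondecreasing. That monotonicity converts the hypothesis $\sum_k \gamma_{2k}^{-1/2k}=\infty$, which involves only the even moments $\gamma_{2k}=\beta_{2k}$, into $\sum_k M_k^{-1/k}=\infty$. Indeed, $M_{2k-1}^{-1/(2k-1)}\ge M_{2k}^{-1/2k}$, so each even term dominates only half of the full sum's contributions. The Denjoy--Carleman theorem states that the class $C\{M_k\}$ of smooth functions with $|f^{(k)}|\le A B^k M_k$ on $\mathbb{R}$ is quasi-analytic exactly when $\sum_k M_k^{-1/k}=\infty$, or equivalently $\sum_k (\inf_{j\ge k}M_j^{1/j})^{-1}=\infty$. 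Quasi-analytic means that a member whose derivatives all vanish at one point vanishes identically. Applying this to $h$ with $A=2$, $B=1$ gives $h\equiv0$, hence $X\stackrel{d}{=}Y$.

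The main obstacle is the Denjoy--Carleman step itself, which is a genuinely nontrivial piece of analysis. I would cite it, or else prove the needed direction directly via the Stieltjes transform and Carleman's inequality. A more elementary fallback covers what the paper actually needs, namely the case where the limiting moments are Gaussian, $\gamma_{2k}=(2k-1)!!\,\sigma^{2k}$. There the moment generating function $\mathbb{E}e^{sX}$ is finite for all $s$, so $\varphi_X$ extends to an entire function. Its Taylor series at any point $t_0$ has radius of convergence bounded below uniformly, because the Taylor remainder is controlled by $\beta_k |t|^k/k!\to0$. Analytic continuation from $t=0$ then forces $\varphi_X$ to be determined by its moments, with no quasi-analyticity theory required. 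I would present this special case explicitly and invoke the general theorem only for completeness.
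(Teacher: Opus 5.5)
Your argument is correct, but it is not the paper's route: the paper does not prove this lemma at all. It quotes Carleman's criterion from Kleiber--Stoyanov and only checks that the Gaussian moments $(2k-1)!!\,\sigma^{2k}$ make the series diverge. Your main route is the standard derivation: reduce to $h=\varphi_X-\varphi_Y$, dominate $|h^{(k)}|$ by a log-convex sequence $M_k$ built from the shared even moments, and apply Denjoy--Carleman. It relocates the hard analysis into Denjoy--Carleman rather than removing it.

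Two points would make it airtight. First, define the odd entries explicitly, e.g. $M_{2j+1}=\sqrt{\gamma_{2j}\gamma_{2j+2}}$, because $\mathbb{E}|X|^{k}$ and $\mathbb{E}|Y|^{k}$ need not agree for odd $k$. This interpolated sequence is still log-convex because $j\mapsto\log\gamma_{2j}$ is convex. Second, divergence of $\sum_k M_k^{-1/k}$ follows trivially from divergence of its even subseries, since all terms are positive. Monotonicity of $M_k^{1/k}$ is only needed to match the log-convex form of the Denjoy--Carleman criterion.

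Your fallback is genuinely more elementary, and it is exactly what the paper's method-of-moments step uses: determinacy of the Gaussian limit. Any $Y$ with Gaussian moments has $\mathbb{E}e^{s|Y|}<\infty$ for all $s$. Hence both characteristic functions are entire with the same Taylor coefficients at $0$, so they coincide; your uniform-radius remark is then superfluous.

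The general route buys the lemma at its stated strength. It covers Carleman-determinate laws with no finite moment generating function, e.g. $\gamma_{2k}^{1/2k}\asymp k\log k$, which the analytic-continuation argument cannot reach.
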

For a standard Gaussian distribution, its $k$th-order moment is:
\begin{equation}
\mathbb{E}\bigl[ \mathcal{N}^k(0,\sigma^2) \bigr] =
\begin{cases}
0, & \text{if } k \text{ is odd}, \\[4pt]
\dfrac{k!}{(k/2)!\,2^{k/2}}\,\sigma^k, & \text{if } k \text{ is even}.
\end{cases}
\end{equation}
which can be readily verified to satisfy the condition of Lemma 2. 
 
Because the $k$th-order moment ($k=1,2,\dots,\infty$) completely describe a distribution, naturally, we can characterize the deviation of this distribution from a Gaussian distribution through the deviation of its moments from those of a Gaussian distribution. 
 
Next, we will prove that the distribution sequence with the following form constitutes a Gaussian process:
\begin{equation}
\mathscr{C} _l=\left( C\left( \hat{O}_{l1} \right) ,\cdots ,C_l\left( \hat{O}_{li} \right) ,\cdots ,C_l\left( \hat{O}_{lp} \right) \right),
\end{equation}
where
\begin{equation}
    \hat{O}_{li}=\frac{1}{c_i}\sum_{j=1}^m{{c_i}_j\hat{Q}_{lj}},
\end{equation}
is an operator spanned by the finite operator set $\{ \hat{Q}_{j} \}_{j=1}^m$. For theoretical simplicity, we usually require any two operator bases in set $\{ \hat{Q}_j \}_{j=1}^m$ to satisfy $\hat{Q}_i\hat{Q}_j+\hat{Q}_j\hat{Q}_i=2\delta _{ij}I$, i.e., the Clifford algebra. We will see in a later section that this setting has a certain necessity and not every operator will lead to a Gaussian distribution. In addition, we choose the constant $c_i$ such that $\mathrm{tr}( \hat{O}^2_i) =d^n$. In particular, for $\hat{O}_i$ is spanned by $\{ \hat{Q}_j \}_{j=1}^m$, satisfy $\hat{O}^2_i=I$.

The sequence $\mathscr{C}_l$ constitutes a Gaussian process if any finite subset is jointly Gaussian, a condition that, according to the Cramér-Wold device~\cite{Billingsley1995Probability}, is equivalent to requiring an arbitrary linear combination of these variables to follow a Gaussian distribution.
\begin{equation}
c_1C\left( \hat{O}_{l1} \right) +c_2C\left( \hat{O}_{l2} \right) +\cdots +c_pC\left( \hat{O}_{lp} \right) \sim \mathcal{N} \left( 0,\sigma ^2 \right).
\end{equation}
Due to the linearity of quantum mechanics, this is equivalent to proving:
\begin{equation}\label{CWdeviceGP}
C_l\left( c_1\hat{O}_{l1}+c_2\hat{O}_{l2}+\cdots +c_p\hat{O}_{lp} \right) \sim \mathcal{N} \left( 0,\sigma ^2 \right),
\end{equation}
which is equivalent to proving:
\begin{equation}
    C\left( \frac{1}{c}\sum_j^m{c_j\hat{Q}_{lj}} \right) \sim \mathcal{N} \left( 0,\sigma ^2 \right) ,
\end{equation}
for arbitrary coefficients $\{c_j\}_{j=1}^m$. Here, $c$ is the normalization coefficient.

Consequently, establishing that the sequence $\mathscr{C}_l$ constitutes a GP under certain parameter conditions is equivalent to proving the asymptotic Gaussianity corresponding to any operator within the span of $\{ \hat{Q}_{lj} \}_{j=1}^m$. By mapping this multivariate problem into an effective single-variable moment analysis, we can rigorously apply the method of moments in an identical fashion. Furthermore, this mapping ensures that the finite-parameter deviations of $\mathscr{C}_l$ from the GP limit are completely characterized by the $k$th-order moments of the corresponding univariate distributions.

Strictly speaking, the states $\vert\psi\rangle$ generated within the unitary embedding ensemble require normalization. However, as demonstrated in Ref.~\cite{Haferkamp2021Emergent}, the squared norm $\langle\psi\vert\psi\rangle$ is exponentially concentrated around $1$, with its variance governed by $\mathcal{O}(d^{-n})$. In the large-parameter limit considered in our framework, the fluctuations of this normalization factor vanish; moreover, within the precision of the finite corrections we are evaluating, the normalization induces only an exponentially suppressed relative error that can be safely neglected. In particular, for local observables in the large-$d$ regime, the normalization factor contributes a relative correction of at most $\mathcal{O}(d^{-2})$ (since $n\geqslant 2$), which can also be neglected at our working precision. Therefore, in the subsequent derivations, we will directly decouple the normalization factor and focus exclusively on the measurement $\langle \psi \vert\hat{O}\vert\psi \rangle$.

For the global output $C(\hat{Q}_{g})$ and the sequence $\mathscr{C} _g=\{ C( \hat{O}_{gi} ) \} _{i=1}^{p}$ (with $\hat{O}_i=\frac{1}{c_i}\sum_j^m{{c_i}_j\hat{Q}_{gj}}$ and for $\forall i,j$ $\hat{Q}_i\hat{Q}_j+\hat{Q}_j\hat{Q}_i=2\delta _{ij}I$), the proof process follows the same steps as above. However, when $d\rightarrow \infty $, the $k$th-order moment of $C(\hat{Q}_{g})$ will not converge to the $k$th-order moment of a Gaussian distribution, thus proving that for general operators, this parameter condition usually does not guarantee a Gaussian distribution. 

\section{Statement of Main Results}
In this section, we rigorously state our main results. Here, Theorem~\ref{Theorem_Local_d} and Theorem~\ref{Theorem_Local_D} correspond to Theorem 1 in the main text, while Theorem~\ref{Theorem_Global} corresponds to Theorem 2.

We first analyze the Gaussian behavior of learning models with local observables. 
We define the local operator basis as $\hat{Q}_{l} = \hat{Q}^{(1)} \otimes I \otimes \cdots \otimes I$, where $\hat{Q}^{(1)}$ is a length-$t$ traceless Pauli string ($\sigma \in \{X, Y, Z, I\}$) of physical dimension $d=2^t$. Consider a sequence of operators $\{ \hat{O}_i \}_{i=1}^{p}$ spanned by this basis $\{ \hat{Q}_{lj} \}_{j=1}^{m}$, subject to the Clifford algebra $\hat{Q}_i \hat{Q}_j + \hat{Q}_j \hat{Q}_i = 2\delta_{ij}I$. In the large-$d$ regime at fixed $D$, we establish the following theorem:
\begin{theorem} \label{Theorem_Local_d}
Define the learning output with local observable $\hat{O}_{li}$ by $C(\hat{O}_{li})$. Let the output vector $\mathscr{C} _l=\{C(\hat{O}_{li})\}_{i=1}^{p}$ be the stochastic process generated by the Haar-random MPS. Then for the physical dimension $d\rightarrow\infty$, we have
\begin{equation}
\mathscr{C} _l\sim \mathcal{G} \mathcal{P} \left( 0,\varSigma \right), \,\qquad \varSigma _{ij}=\frac{\mathrm{tr(}\hat{O}_{i}^{(1)}\hat{O}_{j}^{(1)})}{D^2d^2}.
\end{equation}
Furthermore, the deviation of the distribution from Gaussianity is completely characterized by a  finite correction term $\mathcal{O} ( k^2D^2/d ) $ depending on the moment order $k$, valid under the condition $k^2 D^2 / d \ll 1$.
\end{theorem}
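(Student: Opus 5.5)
By the reduction in the Proof Method subsection (Cramér--Wold device plus linearity), it suffices to treat a single operator $\hat O=\hat O^{(1)}\otimes I^{\otimes(n-1)}$ with $\hat O^{(1)}=\frac{1}{c}\sum_j c_j\hat Q^{(1)}_j$. The Clifford relations give $(\hat O^{(1)})^2=I_d$ and $\operatorname{tr}\hat O^{(1)}=0$. The normalization factor $\langle\psi|\psi\rangle$ is decoupled as explained there: its contribution is at most $\mathcal O(d^{-2})$ for $n\ge 2$. So I would compute $\mu_k=\mathbb E[\langle\psi|\hat O|\psi\rangle^k]$ and compare it with $\frac{k!}{(k/2)!2^{k/2}}\sigma^k$, where $\sigma^2=\operatorname{tr}((\hat O^{(1)})^2)/(D^2d^2)=1/(D^2d)$. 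Carleman's condition (Lemma 2) then upgrades convergence of moments to convergence in distribution.

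\textbf{Transfer-matrix reduction.} Writing $\langle\psi|\hat O|\psi\rangle=\operatorname{tr}\big(E_{\hat O}^{(1)}E^{(2)}\cdots E^{(n)}\big)$ with $E_X^{(s)}=\sum_{j,j'}X_{j'j}A^{(s)}_{j}\otimes\overline{A^{(s)}_{j'}}$, the $k$-th power becomes a trace of a product of $k$-fold tensor powers. Independence of the $U_s$ lets me average site by site with the Weingarten formula \eqref{Wgf} in $U(Dd)$, $N=Dd$. Each site then becomes a transfer operator acting on the $(D^{2})^{\otimes k}$ virtual space and indexed by pairs $(\pi_s,\sigma_s)\in S_k^2$. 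Because the input leg is the fixed state $|0\rangle$, the $\sigma$-side contributes a factor $1$ on the physical leg, and the $\pi$-side contracts with either $I_d$ (unobserved sites) or $\hat O^{(1)}$ (site 1). This gives a partition function of a 1D permutation spin model on a ring of $n$ sites.

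\textbf{Unobserved sites and the observed site.} For an unobserved site the physical contraction gives $d^{\#(\pi)}$. Together with $\mathrm{Wg}\sim (Dd)^{-2k+\#}$, the dominant configuration is $\pi=\sigma=e$ with weight $d^{k}(Dd)^{-k}$. Any other configuration is suppressed by at least a factor $1/d$, up to combinatorial factors polynomial in $k$ and $D$. So to leading order each unobserved site acts as the completely depolarizing channel on each copy, and I will bound its deviation using the Weingarten bound \eqref{upperboundofwg}. At site 1 the physical factor is $\prod_{\text{cycles }c}\operatorname{tr}\big((\hat O^{(1)})^{|c|}\big)$. This vanishes if $\pi$ has a fixed point (traceless) or any odd cycle, and equals $d^{\#\pi}$ when all cycles are even. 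The leading terms therefore have $\pi$ a fixed-point-free involution, i.e.\ a perfect matching, giving $d^{k/2}$. There are $(k-1)!!$ such matchings.

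\textbf{Counting and the main obstacle.} I then check that, for each matching, the optimal $\sigma$ and the contraction around the ring against the depolarized neighbours produce exactly $\sigma^{k}=(D^2d)^{-k/2}$. For $k=2$ this reproduces $\operatorname{tr}(\hat O^2)/(D^2 d^2)$, and the matchings factor across pairs, yielding the Gaussian moment. The main obstacle is the uniform error control. I must show that the sum over all non-leading $(\pi_s,\sigma_s)$ configurations, over all $n$ sites, is bounded by $\mathcal O(k^2D^2/d)$ relative to the leading term. The $D^2$ comes from virtual-loop factors that can gain up to $D^{2}$ relative to the leading term when bond permutations mismatch. The $k^2$ counts the transpositions separating a configuration from the leading one.

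I would attempt this bound by expanding each site's transfer operator as (depolarizing part) $+$ (remainder of norm $\le C k^2D^2/d$). Each remainder is bounded with Lemma 1 (valid since $Dd>\sqrt6k^{7/4}$ once $k^2D^2/d\ll1$). Summing the resulting geometric-type series over sites should keep the $n$-dependence controlled, since remainders on unobserved sites also carry the $1/d$ suppression. Odd moments vanish by the symmetry $\hat O\to-\hat O$, implemented by a Clifford unitary absorbed into the Haar measure. Finally, joint Gaussianity with covariance $\varSigma_{ij}=\operatorname{tr}(\hat O_i^{(1)}\hat O_j^{(1)})/(D^2d^2)$ follows by polarization of the variance formula together with the Cramér--Wold reduction \eqref{CWdeviceGP}.
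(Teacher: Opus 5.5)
Your leading-order computation matches the paper's. After averaging site by site, the unobserved transfer matrix is, to leading order, the rank-one map with first row $\mathrm{Te}_{e,\alpha}\approx D^{\#\alpha-k}$. At site 1 the perfect matchings with $\delta=\pi$ give $(k-1)!!\,(D^2d)^{-k/2}$.

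\textbf{The gap: the $n$-uniformity mechanism.} The gap is in the remainder step, which is exactly what carries the $n$-independent correction $\mathcal{O}(k^2D^2/d)$. Suppose you write $\mathrm{Te}=P+R$ with $\|R\|\le Ck^2D^2/d$ and expand $(P+R)^{n-1}$. You get $\binom{n-1}{j}$ words with $j$ remainders, so the error is of order $(1+Ck^2D^2/d)^{n-1}-1\sim nk^2D^2/d$. The $1/d$ suppression of each remainder does not turn this into a geometric series.

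What does is the exact identity $\mathrm{Te}\,|e\rangle=|e\rangle$. The first column of $\mathrm{Te}$ is $\delta_{\tau,e}$, by $\sum_\tau\mathrm{Wg}(\sigma\tau^{-1},dD)(dD)^{\#\tau}=\delta_{\sigma,e}$, so $\mathrm{Te}$ is block upper triangular. Take $P:=|e\rangle\langle e|\mathrm{Te}$, not the bare $\pi=\sigma=e$ Weingarten term (for that term $P|e\rangle=\mathrm{Wg}(e,dD)(dD)^k|e\rangle\neq|e\rangle$). Then $R=0\oplus\mathrm{Te}'$, $P^2=P$, $RP=0$, and $(P+R)^{n-1}=R^{n-1}+\sum_{b=0}^{n-2}PR^b$. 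This is precisely the paper's splitting $\mathrm{tr}(\mathrm{A}\,\mathrm{Te}^{n-1})=\langle L_k|\mathrm{A}|e\rangle+\mathrm{tr}(\mathrm{A}'\mathrm{Te}'^{\,n-1})-\ldots$. Only after this splitting does the sum over sites become geometric in $\mathrm{Te}'$.

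\textbf{Second issue: the norm used for $R$.} The leading term weights $\mathrm{A}_{\delta,e}$ by $D^{\#\delta-k}$, which equals $D^{-k/2}$ on matchings. An unweighted estimate fails. For instance, $|\langle e|\mathrm{Te}\,R^b\mathrm{A}|e\rangle|\le D^{-1}\|R\|_1^b\,\|\mathrm{A}|e\rangle\|_1$ overshoots the leading term by a factor of order $D^{k/2-1}$. In the row-sum norm, $\|P\|_\infty\approx\overline{D}/D^k=\prod_{i<k}(1+i/D)$, which is far from $1$ at fixed $D$.

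The paper avoids this by entrywise domination. It uses $|\mathrm{Te}'_{\delta,\alpha}|\le\overline{\mathrm{Te}'}_{\delta,\alpha}$ with $(\overline{\mathrm{Te}'})^2\le\mathcal{O}(k^2D^2/d)\,\overline{\mathrm{Te}'}$ entrywise. This comes from the Weingarten bound, $|\mathrm{Moeb}|\le 4^l$, and the Stirling bound $(k^2/2)^l/l!$. It adds matching entrywise bounds on $\langle u|$ and $\mathrm{A}''$. This self-bounding inequality is the actual source of $k^2D^2/d$, and your sketch needs it in place of an unspecified norm bound.

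\textbf{Minor point on odd moments.} You need no symmetry argument for odd $k$: every $\pi\in S_k$ then has an odd cycle, so your site-1 factor already vanishes. Also, a Clifford unitary with $V\hat O^{(1)}V^\dagger=-\hat O^{(1)}$ need not exist for generic $c_j$ (e.g.\ $X,Y,Z$ on one qubit), although a non-Clifford one does.
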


Unlike the large-$d$ case, in the large-$(D,n)$ regime at fixed $d$, the operator restrictions for $C(\hat{O}_l)$ can be further relaxed as we mentioned in the main text. Consider a sequence of generic local observables $\{ \hat{O}_{li} \}_{i=1}^{p}$ taking the form $\hat{O}_{li} = \hat{O}_i^{(1)} \otimes I \otimes \cdots \otimes I$, subject to the traceless condition $\mathrm{tr}( \hat{O}_i^{(1)} ) = 0$. In the large-$(D,n)$ regime, we establish the following theorem:
\begin{theorem} \label{Theorem_Local_D}
Define the learning output with local observable $\hat{O}_{li}$ by $C(\hat{O}_{li})$. Let the output vector $\mathscr{C}_l = \{C(\hat{O}_{li})\}_{i=1}^{p}$ be the stochastic process generated by the Haar-random MPS. In the asymptotic limit $n \to \infty$ and $D \to \infty$, we have 
\begin{equation}
\mathscr{C}_l \sim \mathcal{GP}(0,\varSigma), \qquad
\varSigma_{ij}=
\begin{cases}
\dfrac{\operatorname{tr}(\hat{O}_{i}^{(1)}\hat{O}_{j}^{(1)})}{D^2d^2}, & D^2 \ll d^n, \\[2ex]
\dfrac{\operatorname{tr}(\hat{O}_{i}^{(1)}\hat{O}_{j}^{(1)})}{d^{n+1}}, & D^2 \gg d^n.
\end{cases}
\end{equation}
Furthermore, the finite-parameter deviation of the $k$th-order moment from asymptotic Gaussianity is bounded by:
\begin{itemize}
    \item[(i)] $\mathcal{O}\left(\frac{k^3}{D^2d}\right) + \mathcal{O}\left(\frac{\mathrm{poly}(k) D^2}{d^n}\right)$ for $D^2 \ll d^n$, valid under the conditions $nk^2/D \ll 1$ and $\mathrm{poly}(k)D^2/d^n \ll 1$;
    \item[(ii)] $\mathcal{O}\left(\frac{k^2 d^{n/2}}{D}\right) + \mathcal{O}\left(\frac{k^3}{d^n}\right)$ for $D^2 \gg d^n$, valid under the conditions $k^2d^{n/2}/D \ll 1$ and $k^3/d^n \ll 1$.
\end{itemize}
\end{theorem}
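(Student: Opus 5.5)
The proof follows the strategy of Sec.~\ref{ourProofMethod}. By the Cram\'er--Wold reduction in Eq.~(\ref{CWdeviceGP}), it suffices to treat a single traceless local observable $\hat O=\hat O^{(1)}\otimes I^{\otimes(n-1)}$. We normalize it so that $\operatorname{tr}(\hat O^{(1)2})=d$, since linear combinations of traceless single-site observables remain traceless single-site observables. We then compute $\mu_k=\mathbb E[\langle\psi|\hat O|\psi\rangle^k]$ and show that it matches the Gaussian moments with the stated variance. Because the Gaussian moments satisfy Carleman's condition (Lemma~2), this determines the limit law. As justified earlier, the normalization $\langle\psi|\psi\rangle$ is decoupled.

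\textbf{Reduction to a transfer-matrix chain.} Writing $\langle\psi|\hat O|\psi\rangle^k$ as a $k$-replica contraction and averaging each $U_s\otimes\bar U_s$ independently with the Weingarten formula (\ref{Wgf}), we obtain a periodic 1D ``spin model'' with permutation-valued spins $(\pi_s,\sigma_s)\in S_k\times S_k$ on each site. The result is
\begin{equation}
\mu_k=\operatorname{tr}\bigl(T_{\hat O}\,T^{\,n-1}\bigr),
\end{equation}
where $T$ is the $k!\times k!$ transfer matrix. It has entries of the form $\sum_{\sigma}\mathrm{Wg}(\sigma\pi^{-1},Dd)\,D^{\#(\sigma\tau^{-1})}\,d^{\#(\pi)}$, with the physical leg contracted against $|0\rangle$, which gives $d^{\#\pi}$ against the identity. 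In $T_{\hat O}$ the factor $d^{\#\pi}$ is replaced by $\prod_{c\in\pi}\operatorname{tr}(\hat O^{(1)|c|})$, and tracelessness kills every $\pi$ that has a fixed point.

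Using the Gram-inverse identity (\ref{wginv}) and the asymptotics (\ref{Wgj}), we show that $T=D^{-k}\,(\mathbb 1+\Delta)$ up to a rank-one piece. The piece $\Delta$ has entries $O(k^2/D^2d)$, or $O(1/D)$ for nearest-neighbour mismatches. Its dominant eigenvector is uniform over the configurations that survive the trace.

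For $\hat O$, the leading contributions to $T_{\hat O}$ come from $\pi$ a fixed-point-free involution, i.e.\ a perfect matching. For these, $\operatorname{tr}(\hat O^{(1)2})^{k/2}=d^{k/2}$. Their number, $k!/((k/2)!2^{k/2})$, produces the Gaussian moment. Longer cycles of $\pi$ cost a factor $d^{-1}$ or $D^{-2}$ from the Möbius/Catalan weights and $\#\pi$ counting, and these give the $O(k^3/D^2d)$ correction.

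\textbf{The two regimes.} The two cases come from comparing the bulk factor $T^{n-1}$ with the boundary. Spin configurations along the chain carry a ``domain wall'' cost. Let $\pi_s$ be the permutation on site $s$.
\begin{itemize}
\item If all $\pi_s$ equal the matching $\pi$ on the observed site, each bulk site contributes a relative weight of order $d^{-k/2}$ compared with the identity, so this configuration is exponentially suppressed in $n$.
\item If the bulk relaxes to $e$ and there is a domain wall next to site $1$, the wall costs $D^{-k}$ per wall. This gives a variance $\sim 1/D^2d^2$ times $\operatorname{tr}(\hat O^{(1)2})$.
\end{itemize}
Comparing the weights gives variance $\sim 1/D^2d$ when $D^2\ll d^n$. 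When $D^2\gg d^n$, the configuration with no domain wall, i.e.\ global permutation $\pi$ everywhere, dominates. This reproduces the Haar value $1/d^{n}$, scaled to $\operatorname{tr}(\hat O^{(1)}\hat O^{(1)})/d^{n+1}$.

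The cross-over terms give the remaining errors:
\begin{itemize}
\item $O(\mathrm{poly}(k)D^2/d^n)$ in the first regime;
\item $O(k^2 d^{n/2}/D)+O(k^3/d^n)$ in the second, the last from Weingarten corrections at $N=d^n$.
\end{itemize}
Throughout, we control the $N$-dependence of all Weingarten factors with the bound (\ref{upperboundofwg}). This requires $Dd>\sqrt6k^{7/4}$, which holds in the limit at fixed $k$.

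\textbf{Main obstacle.} The hard step is the rigorous bound on the non-matching, multi-domain-wall configurations summed over all $n$ sites. Each such configuration is small, but there are $(k!)^{2n}$ of them. I would first try to bound the spectral gap of $T$: show that the subleading eigenvalues of $D^kT$ are at most $O(k^2/D)$, uniformly in $n$. Then $T^{n-1}$ is the projector onto the dominant eigenvector plus an error $n\cdot O(k^2/D)$, which is where the validity condition $nk^2/D\ll1$ enters. The $d^{\#\pi}$ and $\operatorname{tr}(\hat O^{(1)|c|})$ weights are then handled only at the boundary site. The spectral bound itself comes from perturbing around the exact rank structure given by the Gram identity (\ref{wginv}).
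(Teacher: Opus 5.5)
\paragraph{Gap: the spectral picture of the transfer matrix is wrong.} Your outer framework matches the paper's: the Cram\'er--Wold reduction to a single traceless single-site observable, the method of moments with Carleman, the chain $\mu_k=\operatorname{tr}(\mathrm{A}\,\mathrm{Te}^{n-1})$, and perfect matchings supplying $(k-1)!!$. Your domain-wall heuristic for the two variances is also qualitatively right. The gap is in the quantitative core, where the spectral picture you propose is false.

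The transfer matrix is not $D^{-k}(I+\Delta)$ up to a rank-one piece, and its dominant eigenvector is not uniform. By the Gram-inverse identity its $e$-column is exactly $\mathrm{Te}_{\tau,e}=\delta_{\tau,e}$, so its top eigenvalue is exactly $1$ with right eigenvector $|e\rangle$. At large $D$ it is close to the strongly non-uniform diagonal $\mathrm{diag}(d^{\#\delta-k})$, which your own heuristic uses. It is block upper-triangular in the fixed-point grading, and the eigenvalues of the block with $m$ fixed points are bounded by $d^{-(k-m)/2}[1+\mathcal O(k^2/D)]$.

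The spectral gap is therefore set by $d$, not $D$. For $k=2$ the second eigenvalue is exactly $(D^2d-d)/(D^2d^2-1)\approx 1/d$, so your claim that the subleading eigenvalues of $D^kT$ are $\mathcal O(k^2/D)$ fails. Moreover, if $T^{n-1}$ did collapse onto the $\lambda=1$ projector, every moment would equal the $n$-independent $\lambda=1$ term. That term is exactly the $k$th Haar moment of $\hat O^{(1)}\otimes I_{D^2}$ in dimension $dD^2$, so case (ii), with variance $\propto d^{-n}$, could never arise.

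The $D^2$-versus-$d^n$ dichotomy is precisely a competition between the $\lambda=1$ sector and the low-fixed-point sectors. The latter have eigenvalues exponentially small in $n$, but their overlaps with $\mathrm{A}$ are larger by powers of $D^2$. For $k=2$ with $\operatorname{tr}(\hat O^{(1)2})=d$, the overlaps are $\approx 1/d$ versus $\approx 1/(dD^2)$.

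\paragraph{What the paper does instead.} In regime (i):
\begin{itemize}
\item It computes the $\lambda=1$ contribution using the left eigenvector supplied by the metric $H=G_{dD}\,\mathrm{diag}(d^{-\#\sigma})\,G_{dD}$, in which $\mathrm{Te}$ is self-adjoint. This yields the covariance $\operatorname{tr}(\hat O_i^{(1)}\hat O_j^{(1)})/(D^2d^2)$ and the $\mathcal O(k^3/D^2d)$ correction from non-pairing cycle types.
\item It bounds each remaining sector by combining the Gerschgorin eigenvalue bound above with two estimates for eigenvectors supported on permutations with at least $m$ fixed points: a lower bound on $\langle v|H|v\rangle$ and an upper bound on $\langle v|H\mathrm{A}|v\rangle$. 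This gives relative errors $\mathcal O[(\mathrm{poly}(k)D^2d^2/d^n)^{(k-m)/2}]$.
\item The condition $nk^2/D\ll1$ enters when the factor $1+\mathcal O(k^2/D)$ in the eigenvalue bound is raised to the $n$th power.
\end{itemize}

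In regime (ii) the leading term does not live in the top eigen-sector at all, so no ``projector plus small error'' argument can work. The paper instead:
\begin{itemize}
\item dominates $\mathrm{Te}$ entrywise by $XB$ with $X=\mathrm{diag}(d^{\#\delta-k})$;
\item commutes $X$ through $B$ at the price $B\to B^{(i)}$, whose column sums are $1+\mathcal O(k^2d^i/D)$;
\item splits $\mathrm{Te}^{n-1}$ symmetrically around $\mathrm{A}$, which is exactly what produces $\mathcal O(k^2d^{n/2}/D)$.
\end{itemize}
The leading term is then the Haar moment in dimension $d^n$. The $\mathcal O(k^3/d^n)$ term comes from non-pairing permutations (the ratio of 2-associated Stirling numbers), not from Weingarten corrections.

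Your proposal states all of these error terms but gives no mechanism that could produce any of them.
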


Notably, when the observables are restricted to the space of Clifford generators, these finite corrections reduce to:
\begin{itemize}
    \item $\mathcal{O}\left(\frac{k^2}{D^2 d}\right) + \mathcal{O}\left(\frac{\mathrm{poly}(k) D^2}{d^n}\right)$ for $D^2 \ll d^n$, valid under the conditions $nk^2/D \ll 1$ and $\mathrm{poly}(k)D^2/d^n \ll 1$;
    \item $\mathcal{O}\left(\frac{k^2 d^{n/2}}{D}\right) + \mathcal{O}\left(\frac{k^2}{d^n}\right)$ for $D^2 \gg d^n$, valid under the conditions $k^2d^{n/2}/D \ll 1$ and $k^2/d^n \ll 1$,
\end{itemize}
which are consistent with the results presented in our main text. 

Next, we extend our analysis to learning models with global observables, $\{ \hat{O}_{gi} \}_{i=1}^p$. In this case, each observable is spanned by a set of operators basis $\{ \hat{Q}_{gj} \}_{j=1}^m$ that satisfy the Clifford algebra $\hat{Q}_i \hat{Q}_j + \hat{Q}_j \hat{Q}_i = 2\delta_{ij}I$. The global basis operators $\hat{Q}_{g} = \bigotimes_{s=1}^n \hat{Q}^{(s)}$ act globally across all $n$ sites. Here, the local component at each site $s$, defined as $\hat{Q}^{(s)} = \bigotimes_{i=1}^t \sigma_i$ (with $\sigma_i \in \{X, Y, Z, I\}$), is a length-$t$ traceless Pauli string corresponding to the physical dimension $d=2^t$. We establish the following theorem:

\begin{theorem}\label{Theorem_Global} 
Define the learning output with global observable $\hat{O}_{gi}$ by $C(\hat{O}_{gi})$. Let the output vector $\mathscr{C} _g=\{C(\hat{O}_{gi}\}_{i=1}^{p}$ be the stochastic process generated by the Haar-random MPS. Then $\mathscr{C}_g$ follow the distributions below: 
\begin{itemize}
\item[(1)]In the asymptotic regime where $D/n \to \infty$ and $n \to \infty$, we have 
\begin{equation}
\mathscr{C} _g\sim \mathcal{G} \mathcal{P} \left( 0,\varSigma \right) ,\qquad \varSigma _{ij}=\frac{\mathrm{tr(}\hat{O}_{gi}\hat{O}_{gj})}{d^{2n}}
\end{equation}
Furthermore, the deviation of the distribution from Gaussianity is completely characterized by a finite correction term $\mathcal{O} (nk!m^{k/2}/D)+\mathcal{O} ( k^2/d^n) $ depending on the moment order $k$ and the size of the basis set $m$, valid under the condition $nk^2/D\ll1$ and $k^2/d^n\ll1$.

\item[(2)] In the regime where $d \to \infty$ at finite $D$, the distribution corresponding to the global operators basis $\hat{Q}_j$ is non-Gaussian, i.e., $\forall j$,  $C_g(\hat Q_j) \nsim \mathcal{N}(\mu, \sigma^2)$. For $nk^2\ll D^2$, the $k$th-order moment deviates from the exact Gaussian moment by $\mathcal{O} \Big( \frac{nk^2}{D^2} \Big)$. Specifically, the distribution exactly reduces to the $n$th-order Gaussian chaos $Z_n = \prod_{i=1}^n g_i$ with $g_i \sim \mathcal{N}(0, \sigma^2)$ when $D=1$.
\end{itemize}

\end{theorem}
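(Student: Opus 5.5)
The plan is the method of moments combined with the Cram\'er--Wold reduction of Sec.~\ref{ourProofMethod}. For part (1), fix real coefficients and set $\hat O=\frac1c\sum_{j=1}^m c_j\hat Q_{gj}$. The Clifford relations give $\hat O^2=I$, so it suffices to show that $\mu_k=\mathbb{E}[C(\hat O)^k]$ converges to $(k-1)!!\,\sigma^k$ for even $k$ and vanishes for odd $k$, with $\sigma^2=\operatorname{tr}(\hat O^2)/d^{2n}$. Gaussian moments satisfy the Carleman condition (Lemma 2), so moment convergence identifies the limit, and Cram\'er--Wold then upgrades this to the GP statement.

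To compute $\mu_k$, write $C(\hat O)^k=\langle\psi|^{\otimes k}\hat O^{\otimes k}|\psi\rangle^{\otimes k}$ and apply the Weingarten formula (\ref{Wgf}) independently to each $U_s^{\otimes k}\otimes\bar U_s^{\otimes k}$. This turns $\mu_k$ into a periodic 1D ``spin model'' with one permutation pair $(\pi_s,\sigma_s)\in S_k^2$ per site. Each site carries the weight $\mathrm{Wg}(\sigma_s\pi_s^{-1},Dd)$. Each bond between neighbouring sites carries $D^{\#(\sigma_s\pi_{s+1}^{-1})}$. Each physical leg carries $\operatorname{tr}_{\pi_s}\big((\hat O^{(s)})^{\otimes k}\big)$, and the auxiliary $|0\rangle$ legs give trivial factors.

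The leading configurations at large $D$ are the ones where all permutations lie in a common fixed-point-free involution $\tau$ (a pairing), with $\pi_s=\sigma_s=\tau$ for every $s$. Using $\operatorname{tr}(\hat O^2)=d^n$, the Möbius asymptotics (\ref{Wgj}) and the identity (\ref{wginv}), each pairing should contribute $(d^n/d^{2n})^{k/2}(1+\mathcal O(k^2/d^n))$. There are $(k-1)!!$ pairings, which reproduces the Gaussian moment.

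Two kinds of corrections remain. The first is a domain wall, where the permutation changes between neighbouring sites. It costs at least one factor $D^{-1}$ from the bond contraction, and since there are $n$ positions for it, a single wall contributes $\mathcal O(n/D)$. This is why the statement requires $D/n\to\infty$. The second comes from non-pairing permutations at the physical legs. For a product of Clifford strings, $\operatorname{tr}_\pi(\hat O^{\otimes k})$ expands into at most $m^{k/2}$ words per cycle structure. Together with the $k!$ permutations, this gives the prefactor $k!m^{k/2}$. Anticommutation kills the cross terms, so a word survives only when the strings pair up, and the surviving terms are controlled by the Weingarten bound of Lemma 1 (valid for $N=Dd>\sqrt6k^{7/4}$). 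Summing the domain-wall series geometrically under $nk^2/D\ll1$ should give the bound $\mathcal O(nk!m^{k/2}/D)+\mathcal O(k^2/d^n)$. The main obstacle is this bookkeeping: showing that configurations with $r$ walls are bounded by $(c\,n\,\mathrm{poly}(k)/D)^r$ uniformly in $k$. I would first establish it through a transfer-matrix bound on $\mathbb C[S_k]$, where the leading eigenvalue belongs to the pairing-diagonal sector and the gap is $\mathcal O(1/D)$.

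For part (2), fix $D$ and $n$, take a single string $\hat Q=\bigotimes_s\hat Q^{(s)}$, and let $d\to\infty$. At large $N=Dd$ the site Weingarten weights force $\pi_s=\sigma_s$. The physical traces then factorize across sites, and each gives $d^{\#}$ only for pairings, because $\hat Q^{(s)}$ is traceless with square equal to the identity. The sum therefore becomes a periodic transfer-matrix chain over pairings $\tau_s$, with bond weight $D^{\#(\tau_s\tau_{s+1}^{-1})-k}$ after normalization.

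For $D=1$ all bonds are trivial and the pairings are independent at each site. Hence $\mu_k=((k-1)!!)^n\sigma^k$, which are exactly the moments of $Z_n=\prod_i g_i$. Because $Z_n$ is determinate (it has sub-exponential tails), this identifies the limit as $Z_n$. For general $D$, the same chain equals $(k-1)!!\,\sigma^k$ times $1+\sum_{\text{walls}}$. The wall terms are positive and of order $n\binom{k}{2}/D^2$, since a transposition changes the cycle count by one on each of two bonds and so costs $D^{-2}$. This gives both the deviation $\mathcal O(nk^2/D^2)$ and its non-vanishing at finite $D$: already for $k=4$ the excess kurtosis is a strictly positive function of $(D,n)$ for $n\ge2$. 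That rules out $C_g(\hat Q_j)\sim\mathcal N(\mu,\sigma^2)$.
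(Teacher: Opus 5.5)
Your architecture is the paper's: Carleman plus Cram\'er--Wold, diagonal dominance of the site matrices at large $D$, and at large $d$ a pairing chain with bond weight $D^{\#(\tau_s\tau_{s+1}^{-1})-k}$, which is exactly the paper's $\mathbb{G}(D)$. Using $\hat O^2=I$ on uniform configurations is a valid shortcut for the paper's set-partition/Stirling computation, because when every site carries the same $\pi$ the site traces recombine into $\mathrm{tr}(\mathrm{P}_\pi\hat O^{\otimes k})$.

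\textbf{Part (1).} The gap is that the estimate carrying the whole quantitative content, the bound on non-uniform configurations, is only announced, and the spectral-gap route you sketch does not apply. A sum of global strings does not give per-site leg weights $\mathrm{tr}_{\pi_s}((\hat O^{(s)})^{\otimes k})$, so you must expand $\hat O^{\otimes k}$ multilinearly. Each word then has its own site-dependent matrices $\mathrm{A}^{(s)}$, built from $Q_s=\hat Q^{(s)}_{j_1}\otimes\cdots\otimes\hat Q^{(s)}_{j_k}$. The moment is therefore a trace of a product of different matrices, and no single transfer-matrix spectrum controls it. Even for one string, the top of the spectrum is the $(k-1)!!$-fold cluster of pairings with equal leading diagonal entries $d^{-k/2}$, not a gapped eigenvalue.

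What works, and what the paper does, is a norm argument:
\begin{itemize}
\item split $\mathrm{A}^{(s)}=\mathrm{A}^{(s)}_{\mathrm{d}}+\mathrm{A}^{(s)}_{\mathrm{od}}$ with $\mathrm{A}^{(s)}_{\mathrm{d}}=\mathrm{diag}(\mathrm{tr}(\mathrm{P}_\delta Q_s)/d^k)$;
\item bound the row sums of $\mathrm{A}^{(s)}_{\mathrm{od}}$ by $d^{-k/2}\mathcal{O}(k^2/D)$, using Lemma 1 and the M\"obius and Stirling-number counting bounds;
\item expand the trace using $\|\cdot\|_\infty$-submultiplicativity and $|\mathrm{tr}\,M|\le k!\,\|M\|_\infty$.
\end{itemize}
This is where $nk^2/D$ and the $k!$ come from; the coefficient sum then supplies $m^{k/2}$.

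Separately, on a ring a non-constant configuration changes at least twice, so single-wall terms do not exist. Your $\mathcal{O}(n/D)$ and $\mathcal{O}(n\binom{k}{2}/D^2)$ are valid only as upper bounds. In part (2) the true leading deviation is a two-wall term, $\kappa=(1+2/D^2)^n+2(1-1/D^2)^n-3=3n(n-1)/D^4+\cdots$. Positivity of all entries of $\mathbb{G}(D)$ still gives $\kappa>0$ for $n\ge2$, so your non-Gaussianity conclusion stands.

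\textbf{Part (2).} The step that fails is the identification of the $D=1$ limit: $Z_n=\prod_{i=1}^n g_i$ is \emph{not} moment-determinate for $n\ge3$. Its density decays like $\exp(-\tfrac n2|x|^{2/n})$ up to algebraic factors, which is heavier than exponential. The Carleman series $\sum_k\gamma_{2k}^{-1/(2k)}\sim\sum_k k^{-n/2}$ converges, and Krein's integral $\int\frac{-\log f_{Z_n}(x)}{1+x^2}\,dx$ is finite. So $Z_n$ is indeterminate, and matching the moments $((k-1)!!)^n d^{-kn/2}$ does not pin down the law.

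The fix avoids moments. For $D=1$ the unitary embedding gives exactly $|\psi\rangle=\bigotimes_sU_s|0\rangle$. Hence $C(\hat Q)=\prod_s\langle0|U_s^\dagger\hat Q^{(s)}U_s|0\rangle$ is a product of independent Haar outputs, each of which, multiplied by $\sqrt d$, converges in distribution to $\mathcal N(0,1)$. Independence and the continuous mapping theorem then give $d^{n/2}C(\hat Q)\to Z_n$. The paper itself only exhibits the moment match at this point, so this argument has to be supplied, not quoted.
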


\section{Exact Formulation of the $k$th-Order Moments}

In this section, we evaluate the $k$th-order moments $\mathbb{E}[ X^k ]$ by assuming that the unitaries $U$ in the MPS are drawn from the Haar measure. We define the random variable $X$ as either a local output $X_l = \langle \psi \vert{}\hat{O}^{(1)}\otimes I^{\otimes (n-1)}\vert{}\psi \rangle$ or a global output $X_g = \langle \psi \vert{}\hat{O}^{(1)}\otimes \hat{O}^{(2)}\otimes \cdots \otimes \hat{O}^{(n)}\vert{}\psi \rangle$, where $\hat{O}^{(s)}$ ($s=1,2,\dots,n$) denotes an arbitrary local traceless operator. The exact expressions derived here lay the theoretical foundation for all subsequent conclusions in our analysis. 

\subsection{ \( k \)-Order Moment of $X_l$}\label{Korder}

 We first consider $X_l=\langle \psi |\hat{O}^{\left( 1 \right)}\otimes I^{\otimes (n-1)}|\psi \rangle $. $X_l^k$ can be represented as shown in Fig.~\ref{fig:Cijk}.

\begin{figure}[h]
    \centering
    \includegraphics[width=0.5\linewidth]{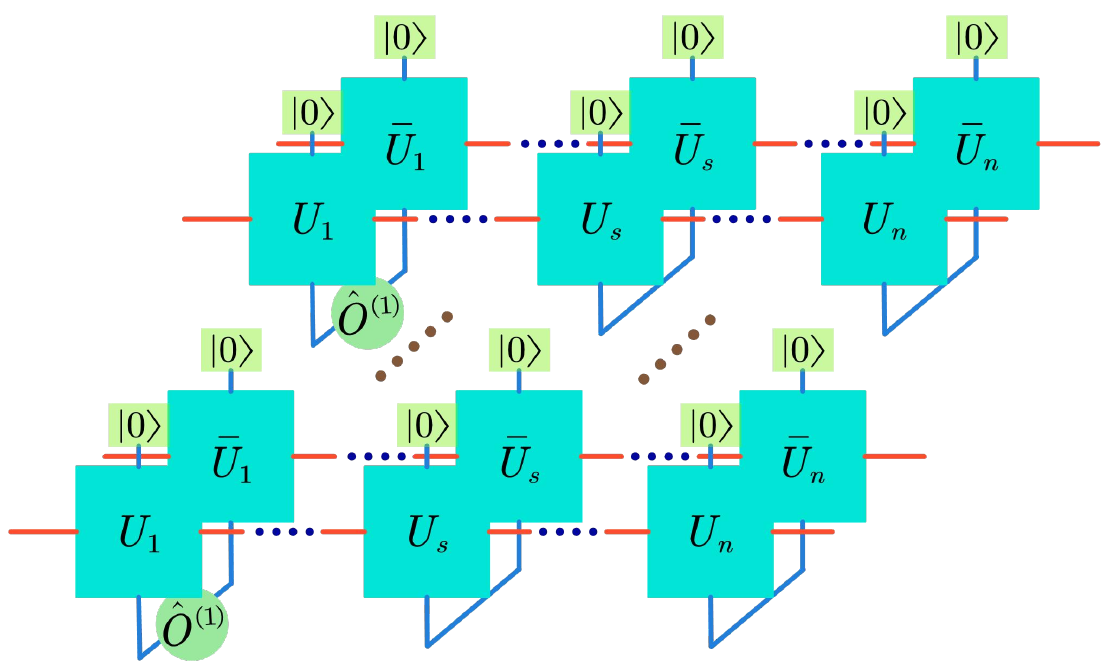}
    \caption{ Representation diagram of $X_l^k$}
    \label{fig:Cijk}
\end{figure}

We now consider each site separately. Under the condition that $U$ satisfies the Haar measure, the expected value of tensor contraction at each site is equivalent to alternative $U$ becoming  some connections (that is, the $\delta$ functions in Eq.~(\ref{Wgf})) and multiplying it by a Weingarten coefficient Wg. The resulting contractions are shown in Fig.~\ref{fig:Osite} for the observation site and in Fig.~\ref{fig:NOsite} for the non-observation sites.

\begin{figure}[htbp]
    \centering
    \includegraphics[width=0.77\linewidth]{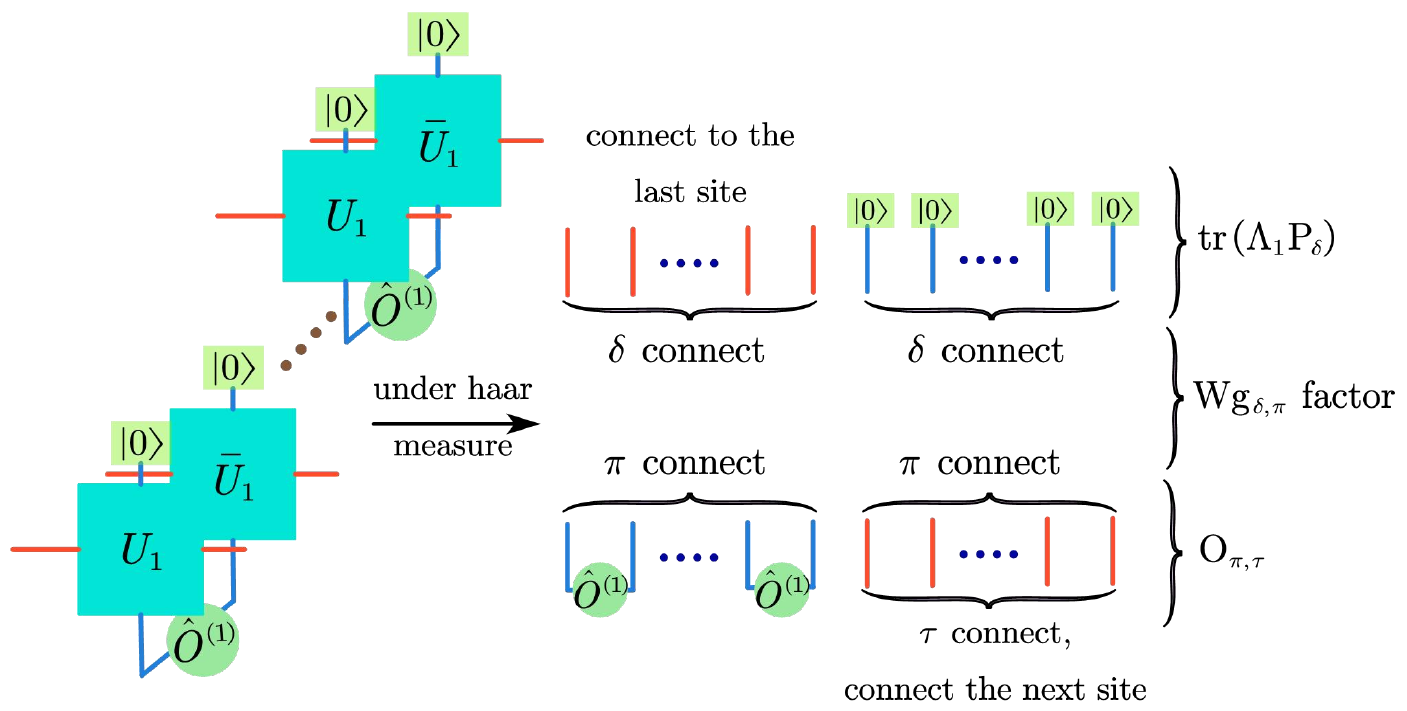}
    \caption{Representation diagram of the observation site}
    \label{fig:Osite}
\end{figure}

Each site can be represented by a matrix, and the contraction between sites can be represented by the product of these matrices. Because the MPS has a periodic boundary condition, the expectation $\mathbb{E}[ X_l^k ] $ can be expressed as the trace of the product of these matrices.

For the observation site, its corresponding matrix element is:
\begin{equation}
\mathrm{A}_{\delta ,\tau}=\sum_{\pi \in S_k}{\mathrm{tr}}\left( \Lambda _1\mathrm{P}_{\delta} \right) \mathrm{Wg}_{\delta ,\pi}\mathrm{O}_{\pi ,\tau}=\sum_{\pi \in S_k}{\mathrm{tr}}\left( \Lambda _1\mathrm{P}_{\delta} \right) \mathrm{Wg}\left( \delta \pi ^{-1},dD \right) \mathrm{O}_{\pi ,\tau},
\end{equation}
where \( \mathrm{tr} \left( \Lambda_1 \mathrm{P}_{\delta} \right) \) is the result of contraction of local state vector across \( k \) replicas, $\Lambda _1=|0\rangle \otimes |0\rangle \cdots \otimes |0\rangle $, \( \mathrm{P}_{\delta} \) is the permutation operator corresponding to \( \delta \), which is an element of the $k$th-order symmetric group. For any permutation operator \( \mathrm{P}_{\delta} \), $\mathrm{tr}\left( \Lambda _1\mathrm{P}_{\delta} \right) =1$. $\mathrm{O}_{\pi, \tau}$ is the tensor contraction corresponding to the relevant part in Fig.~\ref{fig:Osite} and it can be expressed by a matrix. Here, we use $k$th-order symmetric group elements to mark matrix subscripts. 

For a \( d \)-dimensional state vector, the permutation operator of \( \delta \in S_k \) can be expressed as:
\begin{equation}
\mathrm{P}_{\delta}^{\left( d \right)}=\sum_{i_1,\cdots ,i_k=0}^{d-1}{|}i_{\delta ^{-1}(1)},\cdots ,i_{\delta ^{-1}(k)}\rangle \langle i_1,\cdots ,i_k|,
\end{equation}  

The value of \( \mathrm{O}_{\pi, \tau} \) is determined by the ring structure formed by connecting the two ends:
\begin{align}
\mathrm{O}_{\pi ,\tau}=\mathrm{tr}\left( \mathrm{P}_{\pi}^{\left( D \right)}\left( \mathrm{P}_{\tau}^{\left( D \right)} \right) ^T \right) \mathrm{tr}\left( \mathrm{P}_{\pi}^{\left( d \right)}O \right),
\label{O}
\end{align}
where $\mathrm{tr}\left( \mathrm{P}_{\pi}^{(D)} \left( \mathrm{P}_{\tau}^{(D)} \right)^T \right) = \adjustbox{valign=c}{\includegraphics[width=0.13\linewidth]{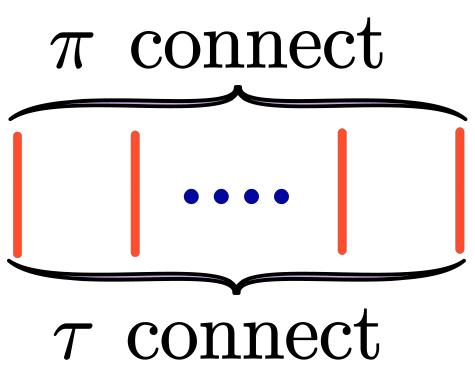}}$ and $\mathrm{tr}\left( \mathrm{P}_{\pi}^{\left( d \right)}O \right)=\adjustbox{valign=c}{\includegraphics[width=0.13\linewidth]{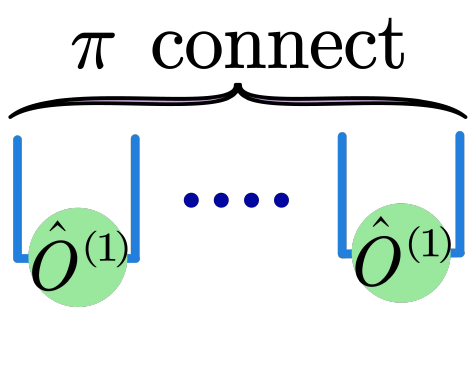}}$, ($O=\hat{O}^{(1)}\otimes \hat{O}^{(1)}\otimes \cdots \otimes \hat{O}^{(1)}$). 
$\mathrm{tr}\left( \mathrm{P}_{\pi}^{\left( D \right)}\left( \mathrm{P}_{\tau}^{\left( D \right)} \right) ^T \right) =G_{\pi ,\tau}$ is an element of the Gram matrix\cite{Collins2006Integration}, so $\mathrm{tr}\left( \mathrm{P}_{\pi}^{\left( D \right)}\left( \mathrm{P}_{\tau}^{\left( D \right)} \right) ^T \right) =D^{\#\left( \pi \tau ^{-1} \right)}$, $\#\left( \pi \tau ^{-1} \right) $ is the number of cycles in $\pi \tau ^{-1}$, which is exactly the number of ring structures formed by $\pi$ connection and $\tau$ connection.

  Take the 2-order case as an example, let $\pi =$(12) and $\tau=$(1)(2), then $\mathrm{tr}\left( \mathrm{P}_{\left( 12 \right)}^{\left( D \right)}\left( \mathrm{P}_{\left( 1 \right) \left( 2 \right)}^{\left( D \right)} \right) ^T \right)  = \adjustbox{valign=c}{\includegraphics[width=0.13\linewidth]{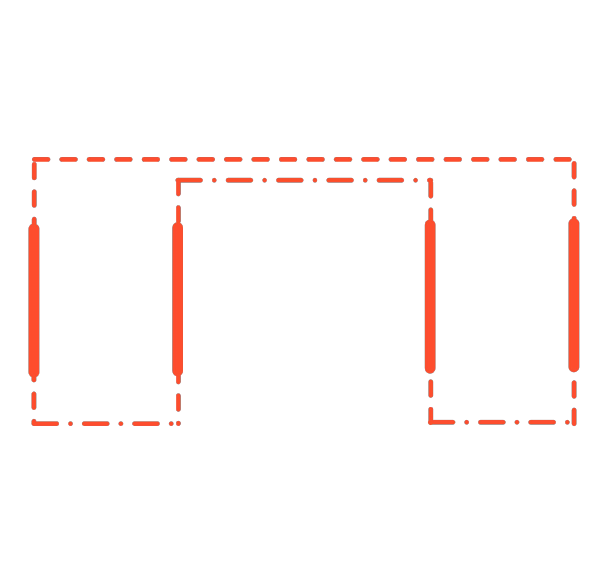}}=D^{\#\left( 12 \right)}=D^1$.  

For $\mathrm{tr}\left( \mathrm{P}_{\pi}^{\left( d \right)}O \right) $, we have $\mathrm{tr}\left( \mathrm{P}_{\pi}^{\left( d \right)}O \right) =\prod_{(t_1,t_2,\cdots ,t_m)\in \pi}{\mathrm{tr}}\left( \hat{O}^{(1)}_{t_1}\hat{O}^{(1)}_{t_2}\cdots \hat{O}^{(1)}_{t_m} \right) $, where  \( (t_1, t_2, \cdots, t_m) \) represents cycles in \( \pi \). Then,
\begin{equation}\label{trOgeneral}
\mathrm{O}_{\pi ,\tau}=D^{\#\left( \pi \tau ^{-1} \right)}\prod_{(t_1,t_2,\cdots ,t_m)\in \pi}{\mathrm{tr}}\left( \hat{O}^{(1)m} \right).
\end{equation}

 Given that the traceless Pauli-string operators $\hat{Q}^{(s)}$ satisfy $\mathrm{tr}(\hat{Q}^{(s)})=0$ and $(\hat{Q}^{(s)})^2=I$, Eq.~(\ref{trOgeneral}) can then be simplified as follows:
\begin{equation}\label{OPITAU}
\mathrm{O}_{\pi ,\tau}=d^{\#\pi}D^{\#\left( \pi \tau ^{-1} \right)},
\end{equation}
where $\pi$ is an element of the $k$th-order symmetric group with no odd cycles, we denote this set as $\mathcal{E} _k$. Specifically, when \( k \) is odd, \( \mathrm{O}_{\pi, \tau} \) is a zero matrix. 
\begin{figure}[htbp]
    \centering
    \includegraphics[width=0.77\linewidth]{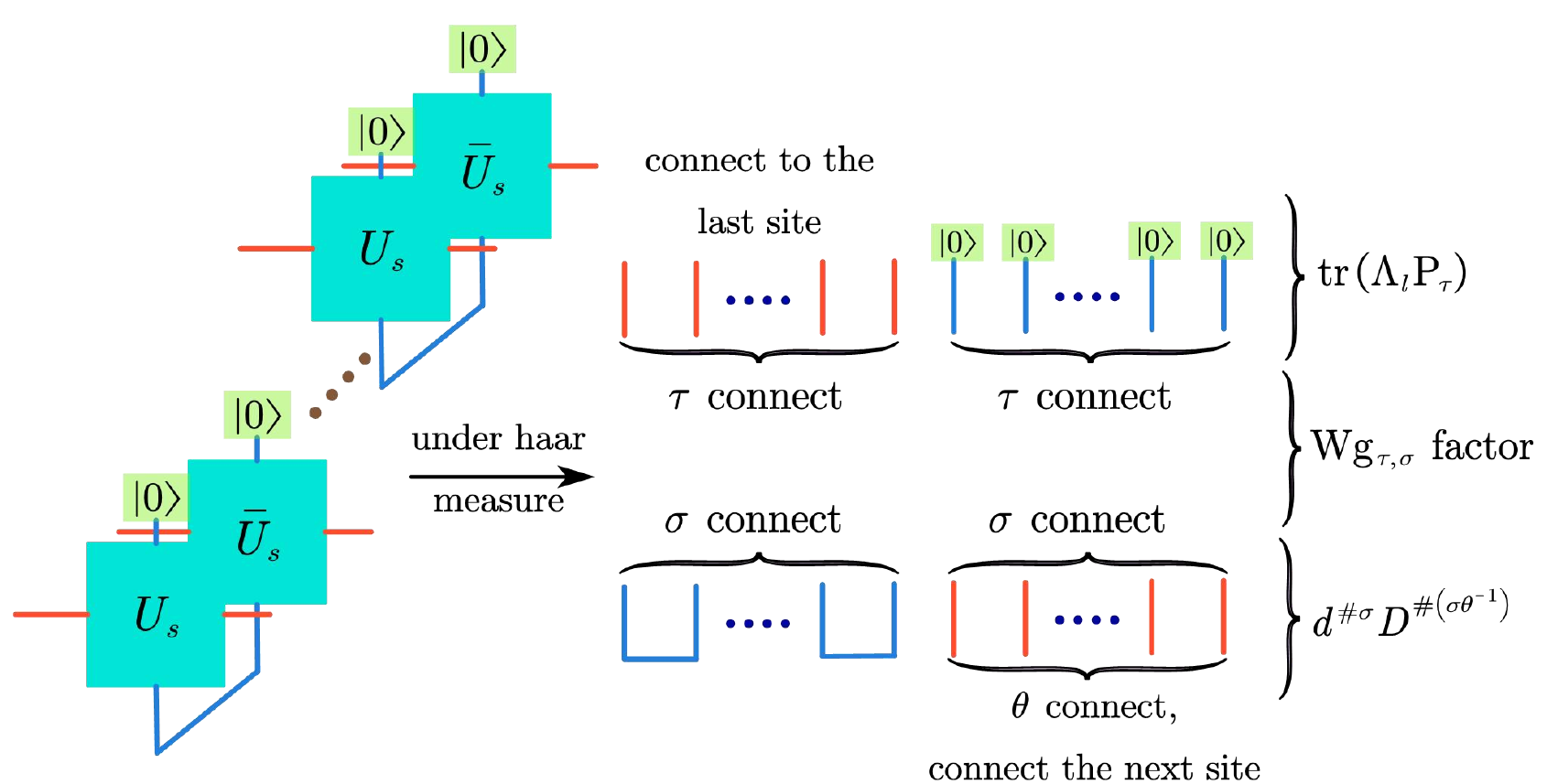}
    \caption{Representation diagram of the non-observation site}
    \label{fig:NOsite}
\end{figure}

Similar to the previous analysis, for the non-observation site, its corresponding matrix element is:
\begin{equation}\label{tematrixform}
\mathrm{Te}_{\sigma ,\theta}=\sum_{\tau \in S_k}{\mathrm{Wg}\left( \sigma \tau ^{-1},dD \right) d^{\#\tau}D^{\#\left( \tau \theta ^{-1} \right)}}.
\end{equation}
According to Eq.~(\ref{wginv}), we can set \( \theta = e \) in \( \mathrm{Te}_{\tau, \theta} \), so we can get that the first column of the \( \mathrm{Te} \) matrix, that is, the column corresponding to \( \theta = e \), is 
\begin{equation}\label{firstclomnofTe}
\mathrm{Te}_{\tau ,e}=\delta _{\tau ,e},
\end{equation}
thus the first element in the first column is 1, and its other elements are 0.

 Connecting all sites, the \( k \)th-order moment of $C(\hat{O}_l)$ is represented by the trace of the matrix product:
\begin{align}\label{Clocal}
\mathbb{E} \left( X_l ^k \right)   &=\mathrm{tr}\left( \mathrm{A}\times \mathrm{Te}^{n-1} \right) 
\\
&=\sum_{\delta ,\alpha \in S_k}{\sum_{\pi \in S_k}{\mathrm{Wg}\left( \delta \pi ^{-1},dD \right) \mathrm{tr}\left( \mathrm{P}_{\pi}^{\left( d \right)}O \right) D^{\#\left( \pi \alpha ^{-1} \right)}\left( \mathrm{Te}^{n-1} \right) _{\alpha ,\delta}}}.\label{localoperatortrace}
\end{align}
Specifically, for the local traceless Pauli string $\hat{Q}_l$ we can yield:
\begin{equation}\label{S25}
    \mathbb{E} \left( C\left( \hat{Q}_l \right) ^k \right) =\sum_{\delta ,\alpha \in S_k}{\sum_{\pi \in \mathcal{E} _k}{\mathrm{Wg}\left( \delta \pi ^{-1},dD \right) d^{\#\pi}D^{\#\left( \pi \alpha ^{-1} \right)}\left( \mathrm{Te}^{n-1} \right) _{\alpha ,\delta}}}
\end{equation}
where $k$ is even and when $k$ is odd, $\mathbb{E} ( C( \hat{Q}_l ) ^k )=0$.

Similarly, we can evaluate the mixed moments composed of outputs associated with distinct observables, $\mathbb{E}[X_{l1} X_{l2} \cdots X_{lk}]$, where the product $X_{l1} X_{l2} \cdots X_{lk}$ shares a structure analogous to that of $X_l^k$. The corresponding tensor network representation closely mirrors that in Fig.~\ref{fig:Cijk}, with the only difference being that the operators acting on distinct replica copies need not be identical. Analogously, we define the modified matrix at the observation site as
\begin{equation}
\mathrm{A}_{\delta ,\alpha}=\sum_{\pi \in S_k}{\mathrm{Wg}\left( \delta \pi ^{-1},dD \right) \mathrm{tr}\left( \mathrm{P}_{\pi}^{\left( d \right)}O_k \right) D^{\#\left( \pi \alpha ^{-1} \right)}},
\end{equation}
where $O_k = \hat{O}_1^{(1)} \otimes \hat{O}_2^{(1)} \otimes \cdots \otimes \hat{O}_k^{(1)}$, while the matrices at all other sites remain the standard Te. In particular, for $k=2$, this mixed moment reduces directly to the covariance of the distribution:
\begin{equation}\label{S30}
\mathbb{E} \left( X_{li}X_{lj} \right) =\sum_{\delta ,\alpha \in S_k}{\sum_{\pi \in S_k}{\mathrm{Wg}\left( \delta \pi ^{-1},dD \right) \mathrm{tr}\left( \mathrm{P}_{\pi}^{\left( d \right)}O_2 \right) D^{\#\left( \pi \alpha ^{-1} \right)}\left( \mathrm{Te}^{n-1} \right) _{\alpha ,\delta}}}.
\end{equation}

\subsection{ \( k \)-Order Moment of $X_g$}\label{Korderg}   
  
We next consider $X_g = \langle \psi \vert{}\hat{O}^{(1)}\otimes \hat{O}^{(2)}\otimes \cdots \otimes \hat{O}^{(n)}\vert{}\psi \rangle$. The product $X_g^k$ can be represented as shown in Fig. \ref{fig:Cgmk}. 
\begin{figure}[h]
    \centering
    \includegraphics[width=0.5\linewidth]{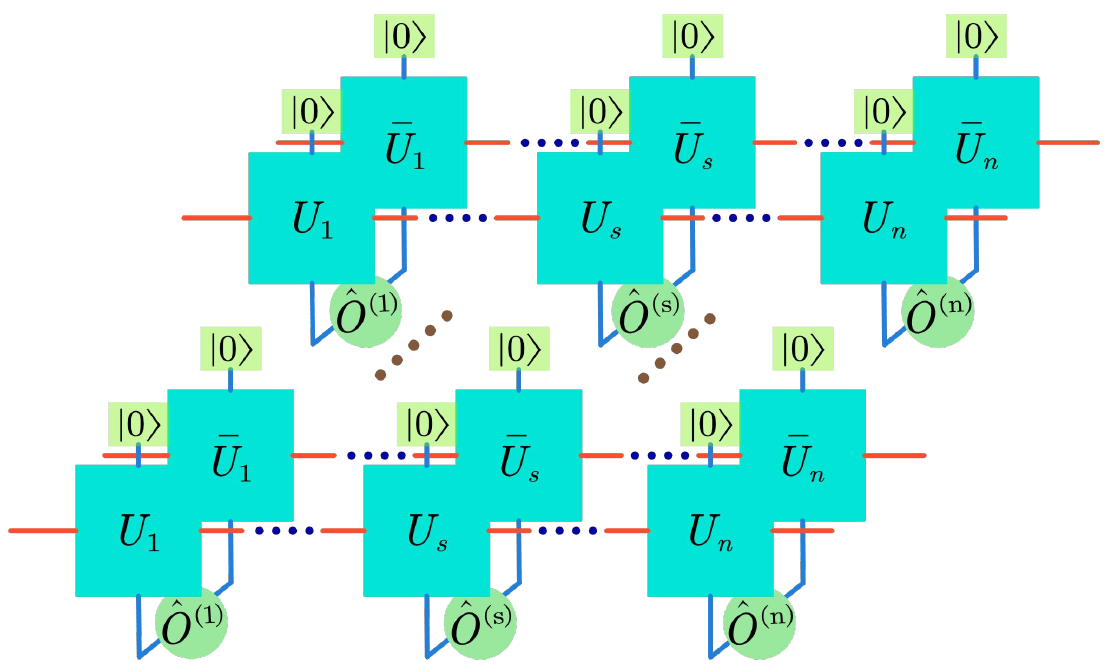}
    \caption{ Representation diagram of $X_g^k$}
    \label{fig:Cgmk}
\end{figure}
The matrix corresponding to each site is similar. A local site is shown as Fig. \ref{fig:Nositeg}. 

\begin{figure}[h]
    \centering
    \includegraphics[width=0.77\linewidth]{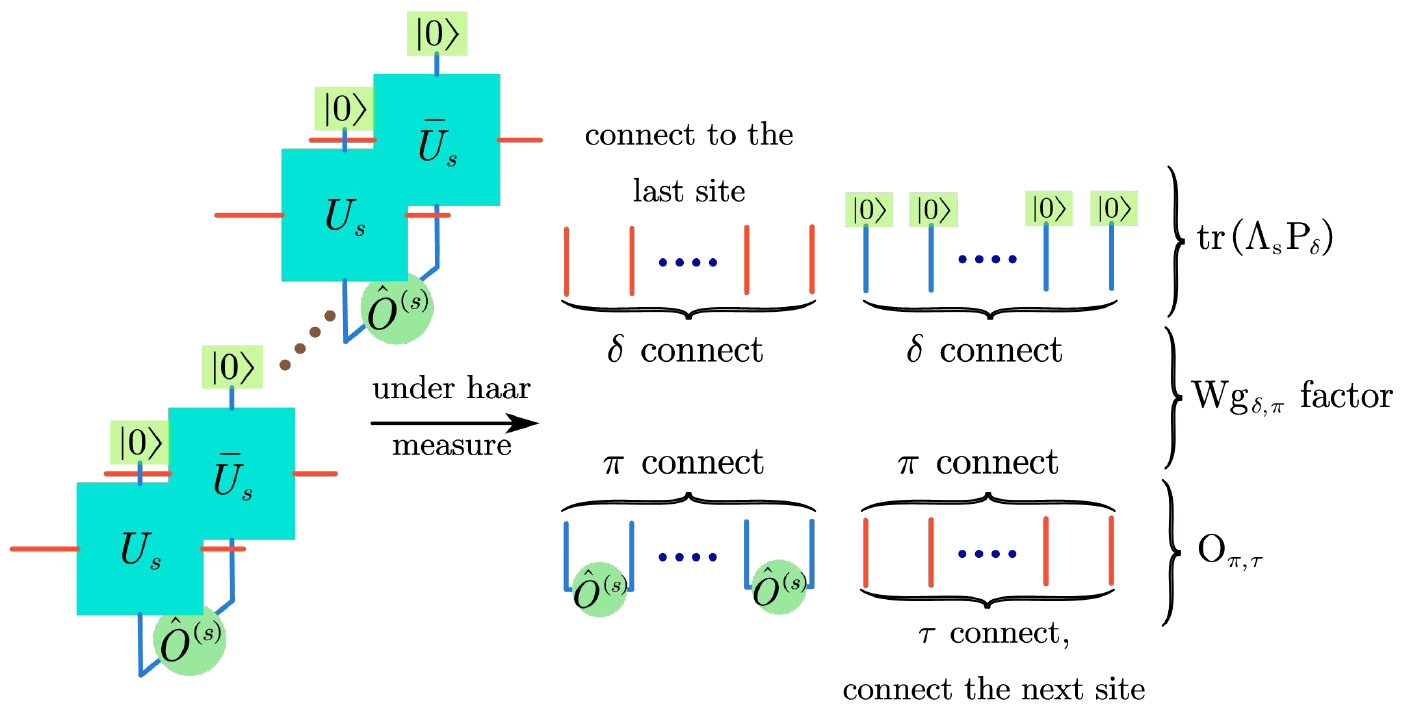}
    \caption{Representation diagram of a local observation site in $X_g^k$}
    \label{fig:Nositeg}
\end{figure}  

The matrix corresponding to this site is:
\begin{align}
\mathrm{A}_{_{\delta ,\tau}}^{\left( s \right)}&=\sum_{\pi \in S _k}{\mathrm{tr}}\left( \Lambda _s\mathrm{P}_{\delta} \right) \mathrm{Wg}\left( \delta \pi ^{-1},dD \right) \mathrm{O}_{\pi ,\tau}
\\
&=\sum_{\pi \in S _k}{\mathrm{Wg}\left( \delta \pi ^{-1},dD \right) \mathrm{tr}\left( \mathrm{P}_{\pi}O_s \right) D^{\#\left( \pi \tau ^{-1} \right)}},
\end{align}
where $O_s=\hat{O}^{\left( s \right)}\otimes \hat{O}^{\left( s \right)}\otimes \cdots \otimes \hat{O}^{\left( s \right)}$, and $\mathrm{tr}\left( \mathrm{P}_{\pi}O_s \right) =\prod_{(t_1,t_2,\cdots ,t_m)\in \pi}{\mathrm{tr}( \hat{O}^{( s ) m} )}$. For traceless Pauli-string operators $\hat{Q}^{(s)}$, it can be simplified as:
\begin{equation}\label{S31}
    \mathrm{A}_{\delta,\tau}=\sum_{\pi \in \mathcal{E} _k}{\mathrm{Wg}\left( \delta \pi ^{-1},dD \right) d^{\#\pi}D^{\#\left( \pi \tau ^{-1} \right)}},
\end{equation}
where $\pi\in \mathcal{E} _k$ and when $k$ is odd, $\mathrm{A}$ is a zero matrix. Connect all the sites, then, the \( k \)th-order moment of $X_{g}$ is:
\begin{equation}\label{Cglobal}
\mathbb{E} \left(  X_g^k \right) =\mathrm{tr}\left( \prod_s{\mathrm{A}^{\left( s \right)}} \right). 
\end{equation} 

\section{Gaussian and Non-Gaussian Regimes in Large-Parameter Limits}
  In this section, we investigate whether the local output $X_l$ and the global output $X_g$ exhibit asymptotic Gaussianity under various large-parameter limits.

\subsection{Local Output in the Large-$d$ Limit} 
In the limit $d \rightarrow \infty$, we evaluate the dominant contribution to the transfer matrix, \( \mathrm{Te} \) in Eq.~(\ref{tematrixform}). Substituting the asymptotic expansion of the Weingarten factor from Eq.~(\ref{Wgj}) into \( \mathrm{Te} \) yields:
\begin{equation}
\mathrm{Te}_{\delta ,\alpha}=\sum_{\sigma \in S_k}{\left[ \frac{\mathrm{Moeb}\left( \delta \sigma ^{-1} \right)}{\left( Dd \right) ^{2k-\#\left( \delta \sigma ^{-1} \right)}}+\mathcal{O} \left( \frac{1}{\left( Dd \right) ^{2k-\#\left( \delta \sigma ^{-1} \right) +2}} \right) \right] d^{\#\sigma}D^{\#\left( \sigma \alpha ^{-1} \right)}.}
\end{equation}
Retaining only the leading-order terms, this simplifies to:
\begin{equation} \label{Tej}
\mathrm{Te}_{\delta ,\alpha}\approx \sum_{\sigma \in S_k}{\frac{\mathrm{Moeb}\left( \delta \sigma ^{-1} \right)}{\left( Dd \right) ^{k-\#\left( \delta \sigma ^{-1} \right)}}\frac{1}{d^{k-\#\sigma}}\frac{1}{D^{k-\#\left( \sigma \alpha ^{-1} \right)}}.}
\end{equation}

Equation~(\ref{Tej}) reveals that in the large-\( d \) limit, the dominant contribution arises when \( \delta \sigma^{-1} = \sigma = e \), localizing the leading-order terms of \( \mathrm{Te} \) strictly to its first row. Furthermore, as shown in Eq.~(\ref{firstclomnofTe}), the first column of \( \mathrm{Te} \) trivially reduces to 1 for the first element and vanishes otherwise (assuming the matrix indices \( \tau \) and \( \theta \) are ordered by descending number of cycles \( \#\delta \) and \( \#\alpha \)):
\[
\left( \begin{matrix}
1 & 1/D & \cdots & 1/D^k \\
0 & 1/d & \cdots & 1/d \\
\vdots & \vdots & \ddots & \vdots \\
0 & 1/d^k & \cdots & 1/d^k \\
\end{matrix} \right)
\]
Consequently, the dominant terms of the \( \mathrm{Te}^{n-1} \) matrix are confined to the first row, yielding:
\begin{equation}
\mathrm{Te}_{e, \alpha} \approx  \frac{D^{\#\alpha}}{D^k}.
\end{equation}

Applying this asymptotic form to evaluate Eq.~(\ref{localoperatortrace}) and Eq.~(\ref{trOgeneral}), we obtain the leading-order terms of \( k \)th-order moment:
\begin{align}
\mathbb{E} \left( X_l ^k \right)  &\approx \mathrm{tr}\left( \mathrm{A}_{\delta ,e}\left( \mathrm{Te}^{n-1} \right) _{e,\alpha} \right) 
\\
&=\sum_{\delta \in S_k}{\sum_{\pi \in S_k}{\mathrm{Wg}\left( \delta \pi ^{-1},dD \right) \mathrm{tr}\left( \mathrm{P}_{\pi}O \right) D^{\#\left( \pi e^{-1} \right)}\left( D^{\#\delta}/D^k \right)}}
\\
&\approx \sum_{\delta \in S_k}{\sum_{\pi \in S_k}{\left( \mathrm{Moeb}\left( \delta \pi ^{-1} \right) \left( Dd \right) ^{-2k+\#\left( \delta \pi ^{-1} \right)} \right) \prod_{(t_1,t_2,\cdots ,t_m)\in \pi}{\mathrm{tr}\left( \hat{O}^{(1)m} \right) D^{\#\pi}\left( D^{\#\delta}/D^k \right)}}}.
\end{align}
When $\delta \pi^{-1} = e$, i.e. $\delta=\pi$, the above formula yields the leading order terms, 
\begin{align}
   \mathbb{E} \left( X_{l}^{k} \right) &\approx \sum_{\pi \in S_k}{\left( \mathrm{Moeb}\left( e \right) \left( Dd \right) ^{-2k+k} \right) \prod_{(t_1,t_2,\cdots ,t_m)\in \pi}{\mathrm{tr}\left( \hat{O}^{(1)m} \right) D^{\#\pi}\left( D^{\#\pi}/D^k \right)}}
\\
&=\sum_{\pi \in S_k}{\frac{D^{2\#\pi}}{D^{2k}d^k}\prod_{(t_1,t_2,\cdots ,t_m)\in \pi}{\mathrm{tr}\left( \hat{O}^{(1)m} \right)}}\label{S36}
\end{align}

Given that $\mathrm{tr}(\hat{O}^{(1)})=0$, the first moment vanishes, yielding $\mathbb{E}(X_l)=0$. Then we need to consider how these moments to asymptotically match those of a centered Gaussian distribution, i.e. $(k-1)!!\sigma ^k$.

The structure of Eq.~(\ref{S36}) closely parallels the $k$th-order moments of the Haar-random output $C_\mathrm{Haar} = \langle 0|U^{\dagger}\hat{O}U|0\rangle$. To calculate this, we use the $k$-fold Haar-twirling formula~\cite{Collins2006Integration}. In large physical dimension, we yield
\begin{align}
    \mathbb{E}_{U(d)} [C_{\mathrm{Haar}}^{k}]&=\sum_{\sigma ,\pi \in S_k}{\mathrm{Wg}}(\sigma \pi ^{-1},d)\mathrm{tr}\left( |0\rangle \langle 0|^{\otimes k}\mathrm{P}_{\sigma} \right) \mathrm{tr}\left( \hat{O}^{\otimes k}\mathrm{P}_{\pi} \right) 
\\
&=\left[ \sum_{\sigma \in S_k}{\mathrm{Wg}}(\sigma ,d) \right] \sum_{\pi \in S_k}{\mathrm{tr}\left( \hat{O}^{\otimes k}\mathrm{P}_{\pi} \right)}
\\
&\approx \frac{1}{d^k}\sum_{\pi \in S_k}{\mathrm{tr}\left( \hat{O}^{\otimes k}\mathrm{P}_{\pi} \right)}
\\
&
=\frac{1}{d^k}\sum_{\pi \in S_k}{\prod_{(t_1,t_2,\dots ,t_m)\in \pi}{\mathrm{tr}\left( \hat{O}^m \right)}},\label{S40}
\end{align}
where in the second line, for $\forall \sigma \in S_k$, $\mathrm{tr}\left( |0\rangle \langle 0|^{\otimes k}\mathrm{P}_{\sigma} \right) $=1 and $\forall \pi \in S_k$, $\sum_{\sigma \in S_k}{\mathrm{Wg}}(\sigma \pi ^{-1},d)=\sum_{\sigma \in S_k}{\mathrm{Wg}}(\sigma ,d)$.

For $C_\mathrm{Haar}$ to asymptotically approach a Gaussian distribution, a sufficient and nearly necessary condition is that $\hat{O}$ possesses a flat operator spectrum~\cite{Meckes2020Random,Anschuetz2025Unified}:
\begin{equation}\label{flatspectrualcondition}
    \frac{\|\hat{O}\|_{\infty}}
    {\sqrt{\mathrm{tr}(\hat{O}^{2})}}
    \rightarrow 0,
\end{equation}
where $\|\hat{O}\|_{\infty}$ denotes the operator norm, i.e.,
$\|\hat{O}\|_{\infty}=\max_i|\lambda_i(\hat{O})|$ for Hermitian $\hat{O}$.

Indeed, for a Hermitian operator $\hat{O}$,
\begin{equation}
    \frac{|{\rm tr}(\hat{O}^{m})|}
    {({\rm tr}(\hat{O}^{2}))^{m/2}}
    \leq
    \left(
    \frac{\|\hat{O}\|_{\infty}}
    {\sqrt{{\rm tr}(\hat{O}^{2})}}
    \right)^{m-2},
    \qquad m\geq 2 ,
\end{equation}
which shows that flat-spectral condition guarantees the term in Eq.~(\ref{S40}) corresponding to the disjoint transpositions (e.g., of the form $(12)(34)(56)$), dominate as the leading-order contributions, while all other terms are strictly suppressed. Hereafter, we denote the set of all such disjoint transpositions in $S_k$ as $\mathcal{P}_k$. In the $k$th-order symmetric group $S_k$, the combinatorial multiplicity of such disjoint transpositions is exactly $(k-1)!!$. Under flat-spectral condition, we yield:
\begin{align}
    \mathbb{E} _{U\left( d \right)}\left( C_\mathrm{Haar}^k \right) &\approx \frac{1}{d^k}\sum_{\pi \in S_k}{\prod_{(t_1,t_2,\dots ,t_m)\in \pi}{\mathrm{tr}\left( \hat{O}^m \right)}}
\\
&=\frac{k!}{\left( k/2 \right) !2^{k/2}}\frac{1}{d^k}\mathrm{tr}( \hat{O}^2 ) ^{k/2}+\sum_{\pi \in S_k\backslash\mathcal{P} _k}{\prod_{(t_1,t_2,\cdots ,t_m)\in \pi}{\mathrm{tr}( \hat{O}^m )}}
\\
&\approx \frac{k!}{\left( k/2 \right) !2^{k/2}}\frac{1}{d^k}\mathrm{tr}( \hat{O}^2 ) ^{k/2},
\end{align}
which matches the $k$th-order moment of a centered Gaussian distribution.

Similarly, for MPS local outputs, not all operators $\hat{O}^{(1)}$ yield asymptotic Gaussianity in the large-$d$ limit. As indicated by Eq.~(\ref{S36}), at finite $D$, this emergence also requires a flat-spectrum condition. The most natural operator fulfilling this requirement is our operator basis $\hat{Q}_l$, i.e. traceless Pauli string. Consequently, evaluating the $k$th-order moments $\mathbb{E} ( C( \hat{Q}_l ) ^k )$ via Eq.~(\ref{OPITAU}) yields the leading order contribution:
\begin{align}
    \mathbb{E} \left( C\left( \hat{Q}_l \right) ^k \right) &\approx \sum_{\pi \in \mathcal{E} _k}{\frac{D^{2\#\pi}}{D^{2k}d^k}d^{\#\pi}}
\\
&=\frac{k!}{\left( k/2 \right) !2^{k/2}}\frac{1}{D^kd^{k/2}}\left[ 1+\mathcal{O} \left( \frac{1}{D^2d} \right) \right] ,
\end{align}
which exactly matches the $k$th-order moment of a Gaussian distribution with variance \( \sigma^2 = 1/(dD^2) \). In subsequent sections, we will explicitly calculate the ratio of the remainder term to the leading order term and demonstrate that this  finite corrections decay polynomially with \( d \), extending the validity of this Gaussian approximation to finite but sufficiently large \( d \).

Furthermore, extending our focus to the distribution sequence $\{ C( \hat{O}_{li} ) \}_{i=1}^p$, where each observable is spanned by the set $\{ \hat{Q}_{lj} \}_{j=1}^m$. As established in Sec.~\ref{ourProofMethod}, proving that the sequence $\{ C( \hat{O}_{li} ) \}_{i=1}^p$ constitutes a joint Gaussian distribution reduces to demonstrating $C( \frac{1}{c}\sum_{j=1}^m{c_j\hat{Q}_{lj}} ) \sim \mathcal{N} ( 0,\sigma ^2 )$. Owing to the Clifford algebra, we immediately have:
\begin{align}
    (\frac{1}{c}\sum_{j=1}^m{c_j}\hat{Q}_{lj})^2&=(\frac{1}{c}\sum_{j=1}^m{c_j}\hat{Q}_{lj}^{\left( 1 \right)})^2\otimes I\otimes \cdots \otimes I
\\
&=I\otimes I\otimes \cdots \otimes I.
\end{align}
Then, following an identical calculation as that for the $k$th-order moments of $(\frac{1}{c}\sum_{j=1}^m{c_j}\hat{Q}_{lj})^2$, we obtain:
\begin{equation}
    \mathbb{E} \left[ C\left( \frac{1}{c}\sum_{j=1}^m{c_j}\hat{Q}_{lj} \right) ^k \right] \approx \frac{k!}{\left( k/2 \right) !2^{k/2}}\frac{1}{D^kd^{k/2}}\left[ 1+\mathcal{O} \left( \frac{1}{D^2d} \right) \right] .
\end{equation}

Analogous to the result in Eq.(\ref{S36}), taking the large-$d$ limit in Eq.(\ref{S30}), we obtain the covariance between the outputs as
\begin{align}
\mathbb{E} \left( C\left( \hat{O}_{li} \right) C\left( \hat{O}_{lj} \right) \right) &\approx \sum_{\pi \in \left\{ \left( 12 \right) ,e \right\}}{\frac{D^{2\#\pi}}{D^4d^2}\mathrm{tr}\left( \mathrm{P}_{\pi}O_2 \right)}
\\
&=\frac{\mathrm{tr}\left( \hat{O}_{i}^{\left( 1 \right)}\hat{O}_{j}^{\left( 1 \right)} \right)}{D^2d^2}+0
\end{align}
where the first and second terms correspond to the permutations $\pi = (12)$ and $\pi = e$, respectively, with the identity permutation contribution vanishing identically due to the tracelessness of the observables, $\operatorname{tr}(O_2) = \operatorname{tr}(\hat{O}_i^{(1)}) \operatorname{tr}(\hat{O}_j^{(1)}) = 0$.

Thus, we obtain the result stated in Theorem~\ref{Theorem_Local_d} in the asymptotic limit.
   
\subsection{Local Output in the Large-$(D,n)$ Limit with $D^2 \gg d^n$} 

Unlike the \( d \rightarrow \infty \) limit, the asymptotic Gaussian distribution emerging in the \( D \rightarrow \infty \) regime depends explicitly on the number of physical sites \( n \). We can clearly illustrate this by evaluating the second moment ($k=2$) via Eq.~(\ref{S25}):
\begin{equation}\label{localoperatork2}
    \mathbb{E} \left( C\left( \hat{Q}_l \right) ^2 \right) =\mathrm{tr}\left( \mathrm{A}\times \mathrm{Te}^{n-1} \right) ,
\end{equation}
where the matrix $\mathrm{A}$ for $k=2$ is given by:
\begin{equation}
    \left( \begin{matrix}
	\frac{1}{D^2d^2-1}&		\frac{-1}{Dd\left( D^2d^2-1 \right)}\\
	\frac{-1}{Dd\left( D^2d^2-1 \right)}&		\frac{1}{D^2d^2-1}\\
\end{matrix} \right) \left( \begin{matrix}
	0&		0\\
	0&		d\\
\end{matrix} \right) \left( \begin{matrix}
	D^2&		D\\
	D&		D^2\\
\end{matrix} \right) ,
\end{equation}
and the transfer matrix $\mathrm{Te}$ for $k=2$:
\begin{equation}
    \left( \begin{matrix}
	\frac{1}{D^2d^2-1}&		\frac{-1}{Dd\left( D^2d^2-1 \right)}\\
	\frac{-1}{Dd\left( D^2d^2-1 \right)}&		\frac{1}{D^2d^2-1}\\
\end{matrix} \right) \left( \begin{matrix}
	d^2&		0\\
	0&		d\\
\end{matrix} \right) \left( \begin{matrix}
	D^2&		D\\
	D&		D^2\\
\end{matrix} \right). 
\end{equation}
Directly evaluating Eq.~(\ref{localoperatork2}) yields:
\begin{equation}
    \mathbb{E} \left( C\left( \hat{Q}_l \right) ^2 \right) =\frac{-1}{D^2d^2-1}+\frac{D^2de^{n-1}}{D^2d^2-1}+\frac{Dd}{D^2d^2-1}\left( \frac{Dd+D}{D^2d+1}-\frac{De^{n-1}\left( d+1 \right)}{D^2d+1} \right) , 
\end{equation}
where $e=\frac{D^2d-d}{D^2d^2-1}$ denotes the second-largest eigenvalue of the $\mathrm{Te}$ matrix (the maximum eigenvalue being 1). In the large-$D$ regime, this expression simplifies to:
\begin{equation}\label{2momentresulatsoflarge_D}
    \mathbb{E} \left( C\left( \hat{Q}_l \right) ^2 \right) \approx \frac{1}{D^2d}+\frac{1}{d^n}.
\end{equation}
Equation (\ref{2momentresulatsoflarge_D}) indicates that the leading term of the variance behaves differently depending on whether $D^2\gg d^n$ or $D^2\ll d^n$, leading to distinct scaling behaviors as $n$ increases. 

Here, we focus on the regime $D \rightarrow \infty$ with $D^2 \gg d^n$, deferring the discussion of the $D^2 \ll d^n$ limit to the subsequent subsection.   

Parallel to the large-$d$ analysis, we first extract the leading-order contribution of $\mathrm{Te}$ for large $D$. From Eq.~(\ref{Tej}), the dominant term emerges when $\delta \sigma ^{-1}=\sigma \alpha ^{-1}=e$, which is on the diagonal of Te and $\mathrm{Te}_{\delta ,\delta}\approx d^{\#\delta}/d^k$.

In the large-$D$ limit, the asymptotic scaling of the summation in Eq.~(\ref{Tej}) with respect to $D$ and $d$ can be bounded as follows: 
\begin{align}
\sum_{\sigma \in S_k}{\frac{1}{\left( Dd \right) ^{k-\#\left( \delta \sigma ^{-1} \right)}}\frac{1}{d^{k-\#\sigma}}\frac{1}{D^{k-\#\left( \sigma \alpha ^{-1} \right)}}}&\leqslant \sum_{\sigma \in S_k}{\frac{d^{k-\#\sigma +\#\delta}}{d^k}\frac{d^{\#\sigma}}{d^k}\frac{1}{D^{k-\#\left( \delta \sigma ^{-1} \right)}}\frac{1}{D^{k-\#\left( \sigma \alpha ^{-1} \right)}}}
\\
&=\frac{d^{\#\delta}}{d^k}\sum_{\sigma \delta \in S_k}{\frac{1}{D^{k-\#\sigma}}\frac{1}{D^{k-\#\left( \sigma \delta \alpha ^{-1} \right)}}}
\\
&=\frac{d^{\#\delta}}{d^k}\sum_{\sigma \in S_k}{\frac{1}{D^{k-\#\sigma}}\frac{D^{\#\left( \sigma \delta \alpha ^{-1} \right)}}{D^k}}
\\
&\leqslant \frac{d^{\#\delta}}{d^k}\sum_{\sigma \in S_k}{\frac{1}{D^{k-\#\sigma}}\frac{D^{k-\#\sigma +\#\delta \alpha ^{-1}}}{D^k}}
\\
&=\mathcal{O} \left( \frac{d^{\#\delta}}{d^k}\frac{D^{\#\delta \alpha ^{-1}}}{D^k} \right)  . \label{orderofTematrix}
\end{align}
This derivation utilizes the translation invariance of the summation over the symmetric group $\sum_{\sigma}$ and the triangle inequality for the Cayley distance in the symmetric group~\cite{Dummit2004Abstract}:
\begin{equation}\label{triangleineq}  
\#\delta +\#\sigma -k\leqslant \#\left( \delta \sigma ^{-1} \right) \leqslant k-\left| \#\delta -\#\sigma \right|,   
\end{equation}
where $\delta ,\sigma \in S_k$. These two group theoretic properties will be employed frequently in subsequent proofs.

Although the upper bound in Eq.~(\ref{orderofTematrix}) is highly rough, it effectively illustrates the asymptotic magnitude of the elements in $\mathrm{Te}$ as $D \rightarrow \infty$:
\begin{equation}
\left( \begin{matrix}{}
	1&		\frac{1}{D}&		\cdots&		\frac{D^{\#\alpha}}{D^k}&		\cdots\\
	0&		\cdots&		\cdots&		\cdots&		\cdots\\
	\vdots&		\cdots&		\cdots&		\frac{d^{\#\delta}D^{\#\left( \delta \alpha ^{-1} \right)}}{d^kD^k}&		\cdots\\
	0&		\cdots&		\cdots&		\cdots&		\cdots\\
\end{matrix} \right). 
\end{equation}
The first element is exactly 1, with the remainder of the first column being identically 0. Elements in the first row scale approximately as $\frac{D^{\#\alpha}}{D^k}$, while elements in the remaining rows scale as $\frac{d^{\#\delta}D^{\#\delta \alpha ^{-1} }}{d^kD^k}$. 
Neglecting higher-order terms about $1/D$, $\mathrm{Te}$ effectively becomes diagonal with elements \( d^{\#\delta}/d^k \). Consequently, the dominant contribution to the $\mathrm{Te}^n$ matrix takes the form:
\begin{equation}
    \left( \begin{array}{llll}
	1&		\cdots&		\cdots&		\cdots\\
	0&		\frac{1}{d^n}&		\cdots&		\cdots\\
	\cdots&		\cdots&		\frac{d^{n\#\delta}}{d^{nk}}&		\cdots\\
	0&		\cdots&		\cdots&		\ddots\\
\end{array} \right) .\label{Te_Norder}
\end{equation}
Substituting this form into Eq.~(\ref{Clocal}) gives:
\begin{align}
\mathbb{E} \left( X_l ^k \right)  &=\mathrm{tr}\left( \mathrm{A}\times\mathrm{Te}^{n-1} \right) 
\\
&\approx \sum_{\delta =\alpha \in S_k}{\sum_{\pi \in S_k}{\mathrm{Wg}\left( \delta \pi ^{-1},Dd \right) \prod_{(t_1,t_2,\cdots ,t_m)\in \pi}{\mathrm{tr}\left( \hat{O}^{(1)m} \right)}D^{\#\left( \pi \alpha ^{-1} \right)}\left( \frac{d^{\#\alpha}}{d^k} \right) ^{n-1}}}.
\end{align}
Incorporating the asymptotic Weingarten factor from Eq.~(\ref{Wgj}) yields:
\begin{align}
\mathbb{E} \left( X_l ^k \right)  &\approx \sum_{\delta \in S_k}{\sum_{\pi \in S _k}{\left( \mathrm{Moeb}\left( \delta \pi ^{-1} \right) \left( Dd \right) ^{-2k+\#\left( \delta \pi ^{-1} \right)} \right) \prod_{(t_1,t_2,\cdots ,t_m)\in \pi}{\mathrm{tr}\left( \hat{O}^{(1)m} \right)}D^{\#\left( \pi \delta ^{-1} \right)}\left( \frac{d^{\#\delta}}{d^k} \right) ^{n-1}}} \label{Dinfinity}
\end{align}
The leading-order contribution in Eq.~(\ref{Dinfinity}) is localized at \( \pi \delta^{-1} = e \), reducing the expression to:
\begin{align}
\mathbb{E} \left( X_l ^k \right) &\approx \sum_{\pi \in S_k}{\mathrm{Moeb}\left( e \right) \left( Dd \right) ^{-2k+\#e}\prod_{(t_1,t_2,\cdots ,t_m)\in \pi}{\mathrm{tr}\left( \hat{O}^{(1)m} \right) D^{\#e}\left( \frac{d^{\#\pi}}{d^k} \right) ^{n-1}}}
\\
&=\sum_{\pi \in S_k}{\left( \frac{d^{\#\pi}}{d^k} \right) ^{n-1}\frac{1}{d^k}\prod_{(t_1,t_2,\cdots ,t_m)\in \pi}{\mathrm{tr}\left( \hat{O}^{(1)m} \right)}}
\\
&=\frac{1}{d^{nk}}\sum_{\pi \in S_k}{\prod_{(t_1,t_2,\dots ,t_m)\in \pi}{\mathrm{tr}\left[ \left( \hat{O}^{\left( 1 \right)}\otimes I\otimes \cdots \otimes I \right) ^m \right]}},
\label{EKD}
\end{align}

The result above is consistent with the structure of Eq.~(\ref{S40}). It is worth noting that, for such a local operator $\hat{O}=\hat{O}^{(1)}\otimes I^{\otimes(n-1)}$ with an arbitrary operator $\hat{O}^{(1)}$, the flat-spectrum condition in Eq.~(\ref{flatspectrualcondition}) is readily satisfied. The spectrum of $\hat{O}$ consists of the eigenvalues of $\hat{O}^{(1)}$, each repeated
$d^{n-1}$ times. Therefore,
\begin{equation}
    {\rm tr}(\hat{O}^{2})
    =
    d^{n-1}{\rm tr}(\hat{O}^{(1)2}),
    \qquad
    \|\hat{O}\|_{\infty}
    =
    \|\hat{O}^{(1)}\|_{\infty},
\end{equation}
and hence
\begin{equation}
    \frac{\|\hat{O}\|_{\infty}}
    {\sqrt{{\rm tr}(\hat{O}^{2})}}
    =
    \frac{\|\hat{O}^{(1)}\|_{\infty}}
    {d^{(n-1)/2}\sqrt{{\rm tr}(\hat{O}^{(1)2})}} .
\end{equation}

Thus, for sufficiently large system size $n$, any local operator
$\hat{O}^{(1)}$ satisfies the flat-spectrum condition. Incorporating the condition $\mathrm{tr}(\hat{O}^{(1)}) = 0$, simplifying Eq.~(\ref{EKD}) yields:
\begin{equation}\label{S67}
    \frac{1}{d^{nk}}\sum_{\pi \in S_k}{\prod_{(t_1,t_2,\dots ,t_m)\in \pi}{\mathrm{tr}\left[ \left( \hat{O}^{\left( 1 \right)}\otimes I\otimes \cdots \otimes I \right) ^m \right]}}=\frac{k!}{\left( k/2 \right) !2^{k/2}}\frac{1}{d^{nk/2}}\left[ 1+\mathcal{O} \left( \frac{1}{d^n} \right) \right]  .
\end{equation}

The leading term in Eq.~(\ref{S67}) corresponding to $\pi \in \mathcal{P}_k$ exactly matches the \( k \)-th moment of a Gaussian distribution with variance \( \sigma^2 = 1/d^n \). In the large-$n$ limit, the sub-leading corrections vanish relative to the dominant term. In subsequent section, we rigorously demonstrate that the ratio of these residual terms to the leading-order contribution in $\mathbb{E} ( C( \hat{O}_l ) ^k ) $ approaches zero for $\left( D,n \right) \rightarrow \infty $, and we also derive explicit corrections for finite $(D,n)$.

Since our analysis applies to arbitrary traceless operators, any linear combination of them naturally preserves the established derivation, thereby extending our results to a full GP. Furthermore, evaluating Eq.~(\ref{S30}) under the asymptotic limits considered in this section readily yields the asymptotic covariance between distinct outputs: $\mathbb{E} ( C( \hat{O}_{li} ) C( \hat{O}_{lj} ) ) \approx \mathrm{tr}( \hat{O}_{i}^{( 1)}\hat{O}_{j}^{( 1 )} ) /d^{n+1}$.

\subsection{Local Output in the Large-$(D,n)$ Limit with $D^2 \ll d^n$}\label{spectraldecomposition}

From Eq.~(\ref{Te_Norder}), we can see that as $n\rightarrow \infty $ (for fixed $D$), the dominant part of the $\mathrm{Te}^n$ matrix shifts to the first row. According to the properties of the Te matrix, we can analyse the dominant term of $\mathbb{E} ( X_l ^k ) $ from the perspective of the spectral decomposition of the Te matrix.

What we are concerned with is:
\begin{equation}
    \mathbb{E} \left( X_l^k \right) =\mathrm{tr}\left( \mathrm{A}\times \mathrm{Te}^{n-1} \right) .  
\end{equation}
Since the Weingarten matrix is the inverse of the Gram matrix (i.e. $\mathrm{Wg}_{d}^{-1}=G_{d}$ with element $d^{\#\delta \sigma ^{-1}}$), we can rewrite these matrix with Gram matrix as follows:
\begin{equation}
    \mathrm{Te}=G_{dD}^{-1}XG_D.
    \label{eq:Te_WXG}
\end{equation}
and 
\begin{equation}
 \mathrm{A}=G_{dD}^{-1}X_OG_D
   \label{eq:BE_def_compact}
\end{equation}
where $X=d^{\#\sigma}$ and $X_O=\mathrm{tr}\left( \mathrm{P}_{\pi}O \right) $ are diagonal.
In Ref.~\cite{Daniel2023Typical}, it was shown that, Te is diagonalizable and after a suitable ordering of the permutation indices according to their fixed-point structure, Te is block upper triangular. More precisely, if
\begin{equation}
    \mathrm{Fix}(\sigma)=\{i:\sigma(i)=i\},
\end{equation}
then the matrix elements of Te vanish unless the fixed points of the column index are also fixed points of the row index:
\begin{equation}
    \mathrm{Te}_{\tau\sigma}=0
    \quad\text{unless}\quad
    \mathrm{Fix}(\sigma)\subseteq \mathrm{Fix}(\tau).
\label{eq:fixed_point_triangular}
\end{equation}
Let $N_f(\sigma)$ represent the fixed point number of symmetric group elements $\sigma$. Then when the subscript $\sigma$ of Te matrix are arranged from large to small according to $N_f(\sigma)$, Te matrix can be written as an upper triangle in the form of:
\begin{equation}
    \left( \begin{matrix}
	1&		U_{k,k-2}&		U_{k,k-3}&		\cdots&		U_{k,0}\\
	0&		B_{k-2}&		U_{k-2,k-3}&		\cdots&		U_{k-2,0}\\
	0&		0&		B_{k-3}&		\cdots&		U_{k-3,0}\\
	\vdots&		\vdots&		\vdots&		\ddots&		\vdots\\
	0&		0&		0&		\cdots&		B_0\\
\end{matrix} \right) ,\label{Tematrixform}
\end{equation}
where $B_i$, $i=0,1,\cdots ,k-2$ represents a diagonal block matrix corresponding to $N_f(\sigma)=i$.
Thus the spectrum of Te can be obtained from its diagonal blocks. Among them, the eigenvalue 1 is the largest eigenvalue of the Te matrix, and it is non-degenerate~\cite{Daniel2023Typical}. Its right eigenvector is therefore the identity permutation vector:
\begin{equation}
    \mathrm{Te}|e\rangle=|e\rangle.
\end{equation}

In addition, the eigenvalues of the transfer matrix $\mathcal{T}_e$ are non-negative. As shown later in Sec.~\ref{reslargeDNglobal}, in the large-$D$ limit, the remaining eigenvalues are upper bounded in magnitude by $\frac{1}{d}[ 1+\mathcal{O} ( \frac{k^2}{D} ) ]$.

We now explain how to evaluate the contribution of a given spectral component of Te to
\(\mathrm{tr}(\mathrm{A}\mathrm{Te}^n)\). 

The matrix Te is not symmetric in the standard permutation basis, but it is self-adjoint with respect to the positive metric
\begin{equation}
    H=G_{dD}X^{-1}G_{dD}.
    \label{eq:H_metric_def}
\end{equation}
Indeed,
\begin{equation}
H\mathrm{Te}=G_{dD}X^{-1}G_{dD}\,G_{dD}^{-1}XG_D=G_{dD}G_D,
\end{equation}
whereas
\begin{equation}
\mathrm{Te}^{\dagger}H=G_DX\,G_{dD}^{-1}G_{dD}X^{-1}G_{dD}=G_DG_{dD}.
\end{equation}
Since $G_{dD}$ and $G_D$ are convolution matrices generated by class functions on \(S_k\) (the $G_D$ matrix can be diagonalized by a matrix that is independent of $D$), they commute:
\begin{equation}
    G_{dD}G_D=G_DG_{dD}.
\end{equation}
Hence
\begin{equation}
H\mathrm{Te}=\mathrm{Te}^\dagger H.
    \label{eq:H_self_adjoint}
\end{equation}
Let $|v_{f}^{\left( i \right)}\rangle $ be a right eigenvector of Te, associated with an eigenvalue $\lambda _{f}^{\left( i \right)}$:
\begin{equation}
    \mathrm{Te}|v_{f}^{\left( i \right)}\rangle =\lambda _{f}^{\left( i \right)}|v_{f}^{\left( i \right)}\rangle .
    \label{eq:vf_right_eigenvector}
\end{equation}
Here, $f$ denotes the matrix block in the Te matrix associated with $f$ fixed points and $i$ denotes the $i$-th eigenvalue in this diagonal block. 
The corresponding left eigenvector can be chosen as
\begin{equation}
 \langle L_{f}^{\left( i \right)}|=\frac{\langle v_{f}^{\left( i \right)}|H}{\langle v_{f}^{\left( i \right)}|H|v_{f}^{\left( i \right)}\rangle}=\frac{\langle v_{f}^{\left( i \right)}|G_{dD}X^{-1}G_{dD}}{\langle v_{f}^{\left( i \right)}|G_{dD}X^{-1}G_{dD}|v_{f}^{\left( i \right)}\rangle}.
    \label{eq:left_eigenvector_general}
\end{equation}
Then
\begin{equation}
    \langle L_{f}^{\left( i \right)}|\mathrm{Te}=\lambda _{f}^{\left( i \right)}\langle L_{f}^{\left( i \right)}|,\qquad \langle L_{f}^{\left( i \right)}|v_{f}^{\left( i \right)}\rangle =1.
\end{equation}
The spectral component associated with this chosen eigenvector is:
\begin{equation}
    \Pi _{f}^{\left( i \right)}=|v_{f}^{\left( i \right)}\rangle \langle L_{f}^{\left( i \right)}|=\frac{|v_{f}^{\left( i \right)}\rangle \langle v_{f}^{\left( i \right)}|H}{\langle v_{f}^{\left( i \right)}|H|v_{f}^{\left( i \right)}\rangle}.
    \label{eq:spectral_projector_single_vf}
\end{equation}
The contribution of this component to \(\mathrm{Te}^n\) is
\begin{equation}
 \left. \mathrm{Te}^n \right|_{v_{f}^{\left( i \right)}}={\lambda _{f}^{\left( i \right)}}^n\Pi _{v_{f}^{\left( i \right)}}.
\end{equation}
Therefore
\begin{align}
    \left. \mathrm{tr}\!\left( \mathrm{ATe}^{n-1} \right) \right|_{v_{f}^{\left( i \right)}}&={\lambda _{f}^{\left( i \right)}}^{n-1}\mathrm{tr}\!\left( \mathrm{A}\Pi _{v_{f}^{\left( i \right)}} \right) 
\\
&={\lambda _{f}^{\left( i \right)}}^{n-1}\frac{\langle v_{f}^{\left( i \right)}|H\mathrm{A}|v_{f}^{\left( i \right)}\rangle}{\langle v_{f}^{\left( i \right)}|H|v_{f}^{\left( i \right)}\rangle}.
    \label{eq:trace_BE_Te_vf_general_1}
\end{align}
Using Eqs.~\eqref{eq:BE_def_compact} and \eqref{eq:H_metric_def}, we find
\begin{align}
    H\mathrm{A}&=G_{dD}X^{-1}G_{dD}G_{dD}^{-1}X_OG_D
\\
&=G_{dD}X^{-1}X_OG_D.
\end{align}
Since both \(X\) and $X_O$ are diagonal in the permutation basis,
\begin{equation}
  X^{-1}X_O=\frac{1}{d^{\#\pi}}\prod_{(t_1,t_2,\dots ,t_m)\in \pi}{\mathrm{tr}\left[ \left( \hat{O}^{\left( 1 \right)} \right) ^m \right]}.
\end{equation}
Substituting this into Eq.~\eqref{eq:trace_BE_Te_vf_general_1}, we obtain the compact formula
\begin{equation}
    \left. \mathrm{tr}\!\left( \mathrm{ATe}^{n-1} \right) \right|_{v_{f}^{\left( i \right)}}={\lambda _{f}^{\left( i \right)}}^{n-1}\frac{\langle v_{f}^{\left( i \right)}|G_{dD}X^{-1}X_OG_D|v_{f}^{\left( i \right)}\rangle}{\langle v_{f}^{\left( i \right)}|G_{dD}X^{-1}G_{dD}|v_{f}^{\left( i \right)}\rangle}.
    \label{eq:trace_BE_Te_vf_general}
\end{equation}

Notably, as the Te matrix is upper triangular, it admits a complete set of eigenvectors. For each such eigenvector, let $(v_m^{(i)})_\delta$ denote its element at position $\delta$, which satisfies $(v_m^{(i)})_\delta = 0$ whenever $f(\delta) < m$

As a special case, take the eigenvector associated with the simple eigenvalue \(1\). 
This right eigenvector is
\begin{equation}
    |v_k\rangle=|e\rangle,
    \qquad
    \lambda_k=1.
\end{equation}
we have 
\begin{equation}
    \Pi _k=|e\rangle \langle L_k|=\frac{|e\rangle \langle e|H}{\langle e|H|e\rangle}
\end{equation}
Then Eq.~\eqref{eq:trace_BE_Te_vf_general} gives
\begin{align}
    \left.
    \mathrm{tr}\!\left(
    \mathrm{A}\mathrm{Te}^{n-1}
    \right)
    \right|_{|e\rangle}
    &=
    \frac{
    \langle e|G_{dD}X^{-1}X_OG_D|e\rangle
    }{
    \langle e|G_{dD}X^{-1}G_{dD}|e\rangle
    }.
    \label{eq:trace_BE_Te_e_general_formula}
\end{align}
The numerator is
\begin{align}
    \langle e|G_{dD}X^{-1}X_OG_D|e\rangle
    &=
    \sum_{\pi \in S_k}{G_{dD}}(e,\pi )G_D(\pi ,e)\frac{1}{d^{\#\pi}}\mathrm{tr}\left( \mathrm{P}_{\pi}O \right)               \\
    &=
    \sum_{\pi\in S_k}
    (dD)^{\#\pi}D^{\#\pi}  \frac{1}{d^{\#\pi}}\prod_{(t_1,t_2,\dots ,t_m)\in \pi}{\mathrm{tr}\left[ \left( \hat{O}^{\left( 1 \right)} \right) ^m \right]}                 \notag\\
    &=
   \sum_{\pi \in S_k}{D^{2\#\pi}}\prod_{(t_1,t_2,\dots ,t_m)\in \pi}{\mathrm{tr}\left[ \left( \hat{O}^{\left( 1 \right)} \right) ^m \right]}.
\end{align}
The denominator is
\begin{align}
    \langle e|G_{dD}X^{-1}G_{dD}|e\rangle
    &=
    \sum_{\sigma\in S_k}
    G_{dD}(e,\sigma)d^{-\#\sigma}G_{dD}(\sigma,e)
    \notag\\
    &=
    \sum_{\sigma\in S_k}
    d^{\#\sigma}D^{2\#\sigma}.
\end{align}
Therefore
\begin{equation}
\left. \mathrm{tr}\!\left( \mathrm{ATe}^{n-1} \right) \right|_{\lambda =1}=\frac{\sum_{\pi \in S_k}{D^{2\#\pi}\prod_{(t_1,t_2,\dots ,t_m)\in \pi}{\mathrm{tr}\left[ \left( \hat{O}^{\left( 1 \right)} \right) ^m \right]}}}{\sum_{\sigma \in S_k}{d^{\#\sigma}D^{2\#\sigma}}}.   \label{eq:trace_BE_Te_lambda1_explicit}
\end{equation}

Because the absolute value of the other eigenvalues of the Te matrix are less than $\frac{1}{d}[ 1+\mathcal{O} ( \frac{k^2}{D} ) ]$, when $n$ or $d$ is large enough, the dominant term of $\mathrm{tr}(\mathrm{ATe}^{n-1})$ is $ \mathrm{tr}( \mathrm{ATe}^{n-1} )|_{\lambda =1}$. 

Then when $D$ is large and $d^n \gg D^2$ or $d$ is large:
\begin{align}
    \mathrm{tr}\!\left( \mathrm{ATe}^n \right) &\approx \left. \mathrm{tr}\!\left( \mathrm{ATe}^n \right) \right|_{\lambda =1}
\\
&=\sum_{\pi \in S_k}{\prod_{(t_1,t_2,\dots ,t_m)\in \pi}{\left( D^2 \right) ^m\mathrm{tr}\left[ \left( \hat{O}^{\left( 1 \right)} \right) ^m \right] /\sum_{\sigma \in S_k}{d^{\#\sigma}D^{2\#\sigma}}}}
\\
&=\sum_{\pi \in S_k}{\prod_{(t_1,t_2,\dots ,t_m)\in \pi}{\mathrm{tr}\left[ \left( \hat{O}^{\left( 1 \right)}\otimes \mathcal{I} _{D^2} \right) ^m \right] /\sum_{\sigma \in S_k}{d^{\#\sigma}D^{2\#\sigma}}}}
\\
&=\frac{\left( dD^2 \right) ^k}{\overline{dD^2}}\sum_{\pi \in S_k}{\prod_{(t_1,t_2,\dots ,t_m)\in \pi}{\mathrm{tr}\left[ \left( \hat{O}^{\left( 1 \right)}\otimes \mathcal{I} _{D^2} \right) ^m \right] /}}\left( dD^2 \right) ^k
\\
&=\left( 1+\mathcal{O} \left( \frac{k^2}{dD^2} \right) \right) \sum_{\pi \in S_k}{\prod_{(t_1,t_2,\dots ,t_m)\in \pi}{\mathrm{tr}\left[ \left( \hat{O}^{\left( 1 \right)}\otimes \mathcal{I} _{D^2} \right) ^m \right] /}\left( dD^2 \right) ^k}\label{Tedominantofe1} 
\end{align}
where $\mathcal{I} _{D^2}$ is an identity matrix of dimension $D^2$ and in the third line we employ the generating function for Stirling numbers~\cite{Richard2011Combinatorics}:
\begin{equation}\label{stelingnum}
\overline{Dd}=\sum_{\sigma \in S_k}{\left( Dd \right) ^{\#\sigma}}=\prod_{i=0}^{k-1}{\left( Dd+i \right)}.
\end{equation}

The result in Eq.~(\ref{Tedominantofe1}) is also consistent with the structure of Eq.~(\ref{S40}). Analogous to the scenario in the preceding section, the operator in Eq.~(\ref{Tedominantofe1}) can be regarded as a local operator $\hat{O}^{(1)}\otimes \mathcal{I}_{D^2}$. Consequently, for sufficiently large $D^2$, this composite operator inherently satisfies the flat-spectrum condition regardless of the arbitrary local operator $\hat{O}^{(1)}$. Incorporating the condition $\mathrm{tr}(\hat{O}^{(1)}) = 0$, simplifying Eq.~(\ref{Tedominantofe1}) yields:
\begin{align}
    \mathrm{tr}\!\left( \mathrm{ATe}^n \right) \approx \left[ 1+\mathcal{O} \left( \frac{k^2}{dD^2} \right) \right] \left[ 1+\mathcal{O} \left( \frac{1}{dD^2} \right) \right] \left( \frac{k!}{\left( k/2 \right) !2^{k/2}}\frac{1}{d^{k/2}D^k} \right).\label{S101}
\end{align}

The leading term in Eq.~(\ref{S101}) corresponding to $\pi \in \mathcal{P}_k$ exactly matches the $k$th-order moment of a Gaussian distribution with variance \( \sigma^2 = 1/dD^2 \), while the remaining terms are suppressed by $1/D^2$. We will derive explicit corrections for finite $(D,n)$ in a subsequent section.

 \subsection{Global Output in the Large-$d$ Limit}
 
In this subsection, we demonstrate that in the $d \rightarrow \infty$ limit at finite $D$, $C(\hat{Q}_g)$ does not converge to a Gaussian distribution. This deviation can be explicitly captured by the excess kurtosis $\kappa$, a standard metric for Gaussianity. 

For a zero mean distribution, the excess kurtosis is defined as:
\begin{equation}
\kappa =\frac{\mathbb{E}\left( x^4 \right)}{\sigma ^4}-3,
\end{equation}
where $\mathbb{E}(x^4) \equiv m_4$ is the fourth moment and $\sigma^2 \equiv m_2$ is the variance. A true Gaussian distribution strictly yields $\kappa = 0$. To verify the non-Gaussianity of $C(\hat{Q}_g)$ in the large-$d$ limit, we can directly verify that its excess kurtosis does not vanish.  

Returning to Eq.~(\ref{S31}) and Eq.~(\ref{Cglobal}), the second and fourth moments of $C(\hat{Q}_g)$ are governed by:
  \begin{equation}\label{S112}
  \mathrm{E}\left( C\left( \hat{Q}_g \right) ^2 \right) =\mathrm{tr}\left( \mathrm{A}^n \right).
\end{equation}   
For $k=2$, the matrix $\mathrm{A}$ takes the form:
\begin{align}
\mathrm{A}&=\sum_{\pi}{\mathrm{Wg}\left( \delta \pi ^{-1},dD \right) d^{\#\pi}D^{\#\pi \tau ^{-1}}}
\\
&=\left( \begin{matrix}
	\frac{1}{D^2d^2-1}&		\frac{-1}{Dd\left( D^2d^2-1 \right)}\\
	\frac{-1}{Dd\left( D^2d^2-1 \right)}&		\frac{1}{D^2d^2-1}\\
\end{matrix} \right) \left( \begin{matrix}
	0&		0\\
	0&		d\\
\end{matrix} \right) \left( \begin{matrix}
	D^2&		D\\
	D&		D^2\\
\end{matrix} \right) 
\\
&=\left( \begin{matrix}
	\frac{-1}{\left( D^2d^2-1 \right)}&		\frac{-D^2d}{Dd\left( D^2d^2-1 \right)}\\
	\frac{Dd}{D^2d^2-1}&		\frac{D^2d}{D^2d^2-1}
\end{matrix} \right).
\end{align}
The eigenvalues of $\mathrm{A}$ are $0$ and $\frac{D^2d-1}{D^2d^2-1}$. Consequently,  
\begin{align}
\mathbb{E}\left( C\left( \hat{Q}_g \right) ^2 \right) &=\mathrm{tr}\left( \mathrm{A}^n \right) =\sum_i{\lambda _{i}^{n}}
\\
&=\left( \frac{D^2d-1}{D^2d^2-1} \right) ^n.
\end{align}
In the large-$d$ limit, the second moment scales asymptotically as $m_2\sim d^{-n}$.

For the $k=4$, the matrix $\mathrm{A}$ possesses six distinct eigenvalues.    
In the asymptotic limit $d\rightarrow \infty$, these eigenvalues simplify to $e_1=0$ (multiplicity 15), $e_2\sim (1-D^{-2})d^{-3}$ (multiplicity 3), $e_3\sim \mathcal{O} \left( d^{-3} \right) $ (multiplicity 1), $e_4\sim (1+2D^{-2})d^{-2}$ (multiplicity 1), $e_5\sim \mathcal{O} \left( d^{-3} \right) $ (multiplicity 2), and $e_6\sim (1-D^{-2})d^{-2}$ (multiplicity 2). Consequently, the leading-order scaling of the fourth moment is $m_4\sim \left( \frac{1+2/D^2}{d^2} \right) ^n+2\left( \frac{1-1/D^2}{d^2} \right) ^n$.

Using these moments, the asymptotic excess kurtosis $\kappa_4$ of $C(\hat{Q}_g)$ evaluates to:
\begin{align}
\kappa &=\left( \left( \frac{1+2/D^2}{d^2} \right) ^n+2\left( \frac{1-1/D^2}{d^2} \right) ^n \right) d^{2n}-3
\\
&=\left( 1+\frac{2}{D^2} \right) ^n+2\left( 1-\frac{1}{D^2} \right) ^n-3\ne 0.
\end{align}
This explicitly demonstrates that for $d\rightarrow \infty$ at finite $D$, $\kappa \neq 0$, confirming that $C(\hat{Q}_g)$ is strictly non-Gaussian.  

To gain deeper insight, we analyze the asymptotic behavior of arbitrary higher-order moments $\mathbb{E} ( C( \hat{Q}_g ) ^k )  $ in the large-$d$ limit.   

Following Eq.~(\ref{S112}), evaluating the asymptotic eigenvalues of $\mathrm{A}_{\delta,\tau} = \sum_{\pi \in \mathcal{E}_k} \mathrm{Wg}(\delta \pi^{-1}, dD) d^{\#\pi} D^{\#(\pi \tau^{-1})}$ becomes essential. Here, the maximal number of cycles, $\#\pi = k/2 \equiv m$, corresponds precisely to $\pi$ are disjoint transpositions. Hence, in the large-$d$ limit, the matrix $\mathrm{A}$ is well-approximated by:
\begin{equation}
\mathrm{M}_{\delta,\tau}=\sum_{\pi\in\mathcal{P}_{2m}}\mathrm{Wg}(\delta\pi^{-1},dD)d^{\#\pi}D^{\#(\pi\tau^{-1})},\qquad \delta,\tau\in S_{2m}, 
\end{equation}
where $\mathcal{P}_{2m}$ denotes the set of disjoint permutations in $S_{2m}$. 

Because the Weingarten matrix is the formal inverse of the Gram matrix, the above formulation can be recast as:
\begin{equation}
    \mathrm{M}=d^m G_{dD}^{-1} X_P G_D,
\end{equation}
where $X_P$ is a diagonal matrix whose diagonal entries are $1$ on the index set $\mathcal{P}_{2m}$ and $0$ elsewhere.

This factorization naturally bounds the rank such that $\operatorname{rank} (\mathrm{M}) \le |\mathcal{P}_{2m}|=(2m-1)!!$. Consequently, at least $(2m)!-(2m-1)!!$ eigenvalues are identically zero. 

To extract the non-trivial eigenvalues, we exploit the linear algebra property that cyclic permutations of matrix products (e.g. $AB$ and $BA$) preserve non-zero eigenvalues and their algebraic multiplicities. We thus consider the modified matrix:
\begin{equation}
    \mathrm{B}=d^m X_P G_D G_{dD}^{-1} X_P, 
\end{equation}
which acts entirely within the $\mathcal{P}_{2m}$ subspace.

According to Eq.~(\ref{Wgj}), in the large-$d$ limit, the inverse Gram matrix expands as:
\begin{equation}
   G_{dD}^{-1}=(dD)^{-2m}\left[ I+\mathcal{O} (d^{-1}) \right] . 
\end{equation}
Substituting this expansion into $\mathrm{B}$ yields: 
\begin{align}
    \mathrm{B}&= d^m X_P G_D G_{dD}^{-1} X_P \nonumber\\
    &= d^m(dD)^{-2m} X_P G_D X_P+\mathcal{O}(d^{-m-1}) \nonumber\\ 
    &= d^{-m}\mathbb{G}(D)+\mathcal{O}(d^{-m-1}),
\end{align}
where we have introduced the sub-Gram matrix $\mathbb{G}(D)$:
\begin{equation}
    \mathbb{G}(D)=D^{-2m}X_PG_DX_P. 
\end{equation}
Element-wise, we can yield: 
\begin{equation}
    \mathbb{G}_{\pi,\rho}(D)=D^{\#(\pi\rho^{-1})-2m} \quad \pi,\rho \in \mathcal{P}_{2m}  \label{eqGamma_def_appendix} 
\end{equation}
The explicit diagonalization of this matrix follows from known combinatorial results: 

\begin{lemma}[Banica--Curran pairing Gram matrix]
\label{lem:BC_pairing_gram}
Let $\mathcal{P}_o(2m)$ be the set of all pair partitions of $[2m]$. For $\pi,\rho \in \mathcal{P}_o(2m)$, define
\begin{equation}
G^{(o)}_{2m,n}(\pi,\rho) = n^{|\pi\vee\rho|},
\label{eq:BC_gram_def}
\end{equation}
where $\pi\vee\rho$ is the join of $\pi$ and $\rho$ in the lattice of set partitions, and $|\pi\vee\rho|$ is the number of blocks of this join. Then
\begin{equation}
G^{(o)}_{2m,n} = \sum_{\lambda\vdash m} \left[ \prod_{(i,j)\in\lambda} \left(n+2j-i-1\right) \right] P_{2\lambda},
\label{eq:BC_diagonalization}
\end{equation}
where $\lambda\vdash m$ runs over all partitions of $m$ and $P_{2\lambda}$ is the projection onto the irreducible $S_{2m}$-module $V_{2\lambda}$ inside $\mathbb{C}[\mathcal{P}_o(2m)]$. Consequently, the eigenvalue associated with $V_{2\lambda}$ is
\begin{equation}
f_n(\lambda) = \prod_{(i,j)\in\lambda} \left(n+2j-i-1\right),
\label{eq:fn_lambda}
\end{equation}
with multiplicity $f^{2\lambda} = \dim V_{2\lambda}$.
\end{lemma}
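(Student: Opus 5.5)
The plan is to read $G^{(o)}_{2m,n}$ as an intertwiner on the permutation module $\mathbb{C}[\mathcal{P}_o(2m)]$. That reduces the statement to two facts: a multiplicity-free decomposition, and the evaluation of one scalar per irreducible component.

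\emph{Step 1: equivariance.} The group $S_{2m}$ acts on pair partitions by relabelling points. This action preserves joins: $|(g\pi)\vee(g\rho)|=|\pi\vee\rho|$. Hence the matrix $G^{(o)}_{2m,n}$ commutes with the permutation representation on $\mathbb{C}[\mathcal{P}_o(2m)]$. Fix the base pairing $\pi_0=\{1,2\},\{3,4\},\dots$. Its stabilizer is the hyperoctahedral group $H_m=\mathbb{Z}_2\wr S_m$, so $\mathbb{C}[\mathcal{P}_o(2m)]\cong \mathrm{Ind}_{H_m}^{S_{2m}}\mathbf{1}$.

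\emph{Step 2: multiplicity-freeness.} It is classical (Macdonald, Ch.~VII) that $(S_{2m},H_m)$ is a Gelfand pair, with
\[
\mathrm{Ind}_{H_m}^{S_{2m}}\mathbf{1}=\bigoplus_{\lambda\vdash m}V_{2\lambda},
\]
each $V_{2\lambda}$ appearing exactly once. By Schur's lemma, $G^{(o)}_{2m,n}$ therefore acts as a scalar $f_n(\lambda)$ on each $V_{2\lambda}$. This already gives the form $\sum_\lambda f_n(\lambda)P_{2\lambda}$, and the multiplicity of $f_n(\lambda)$ is $\dim V_{2\lambda}=f^{2\lambda}$.

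\emph{Step 3: the scalar.} The entry $|\pi\vee\rho|$ is the number of cycles of the graph $\pi\cup\rho$. It therefore equals the length $\ell$ of the coset type of $\rho$ relative to $\pi$; this coset type is the partition of $m$ recording half the cycle lengths. To extract the eigenvalue, apply $G$ to the zonal spherical vector $\omega^\lambda=\sum_\rho \omega^\lambda(\rho)\,\rho$ and read off the $\pi_0$ coordinate:
\[
f_n(\lambda)\,\omega^\lambda(\pi_0)=\sum_{\rho}\omega^\lambda(\rho)\,n^{\ell(\mathrm{type}(\rho))}.
\]
Since $p_\mu(1^n)=n^{\ell(\mu)}$, the right-hand side is, up to the factor $|H_m|$, the expansion of the zonal polynomial $Z_\lambda=|H_m|\sum_\rho \omega^\lambda(\rho)\,p_{\mathrm{type}(\rho)}$, evaluated at $x=1^n$. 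Macdonald's evaluation formula for the Jack polynomial at $\alpha=2$ gives
\[
Z_\lambda(1^n)=\prod_{(i,j)\in\lambda}\bigl(n+2(j-1)-(i-1)\bigr).
\]
This is exactly $\prod_{(i,j)\in\lambda}(n+2j-i-1)$, once the normalization $\omega^\lambda(\pi_0)=1$ and the factor $|H_m|$ are matched.

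As a cross-check, I would verify $\lambda=(m)$. The trivial component has eigenvalue equal to a row sum, $\sum_\rho n^{|\pi_0\vee\rho|}=n(n+2)\cdots(n+2m-2)$, which matches the formula. One can also check positive integer $n$ via Schur--Weyl duality for $O(n)$. There $G$ is the Gram matrix of the pairing tensors $\xi_\pi\in(\mathbb{C}^n)^{\otimes 2m}$, which forces $f_n(\lambda)\ge 0$, with equality exactly when $\ell(\lambda)>n$.

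\emph{Main obstacle.} The main obstacle is Step 3: keeping the normalizations straight between the spherical functions $\omega^\lambda$, the constant $|H_m|$, and the $Z_\lambda$ normalization, so that the scalar comes out exactly as $\prod(n+2j-i-1)$ with no stray hook-product factor. If I get stuck, I would first fix the normalization with the two checks above, $\lambda=(m)$ and $\lambda=(1^m)$. For $\lambda=(1^m)$ the sign-type spherical function should give $\prod_{i=1}^m(n-i+1)$. I would then invoke polynomial identity in $n$ for the remaining cases.
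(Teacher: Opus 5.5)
Your argument is correct, but it follows a genuinely different route from the paper's. The paper gives no derivation. It cites Theorem~4.1 of Banica--Curran and states that the diagonal form is the spectral decomposition used in their proof of the Gram determinant formula. You instead prove the decomposition from classical ingredients:
\begin{itemize}
\item the $S_{2m}$-equivariance of $G^{(o)}_{2m,n}$;
\item the multiplicity-free decomposition of $\mathbb{C}[\mathcal{P}_o(2m)]\cong\mathrm{Ind}_{H_m}^{S_{2m}}\mathbf{1}$ for the Gelfand pair $(S_{2m},H_m)$;
\item Schur's lemma;
\item the identification $f_n(\lambda)=Z_\lambda(1^n)$, obtained from $p_\mu(1^n)=n^{\ell(\mu)}$ and the $\alpha=2$ case of $J^{(\alpha)}_\lambda(1^n)=\prod_{(i,j)\in\lambda}\bigl(n+\alpha(j-1)-(i-1)\bigr)$.
\end{itemize}
Your route makes visible the features the paper actually uses downstream. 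The projections $P_{2\lambda}$ are manifestly independent of $n$, so substituting $n=D^2$ for $\mathbb{G}(D)$ and reading off the multiplicities $f^{2\lambda}$ is immediate. The eigenvalue formula is also explained rather than quoted. The citation's advantage is only that it is shorter.

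On the normalization you flag: when you sum over pairings (equivalently, cosets of $H_m$), the correct relation is $Z_\lambda=\sum_{\rho\in\mathcal{P}_o(2m)}\omega^\lambda(\rho)\,p_{\mathrm{type}(\rho)}$, with no factor $|H_m|$. Macdonald's $|H_m|$ multiplies the sum over coset types $\mu$ weighted by $z_{2\mu}^{-1}$. It cancels when you convert to a sum over pairings, because a double coset of type $\mu$ contains exactly $|H_m|/z_{2\mu}$ cosets. This constant does not depend on $\lambda$, so your single $\lambda=(m)$ check, $\sum_\rho n^{|\pi_0\vee\rho|}=n(n+2)\cdots(n+2m-2)=Z_{(m)}(1^n)$, already fixes it for every $\lambda$. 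The $(1^m)$ check and the Schur--Weyl positivity remark are consistent but not needed. The polynomial-identity step is needed only to pass from positive integers $n$ to general $n$, and in the paper's application $n=D^2$ is an integer anyway.
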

Here, $\lambda\vdash m$ denotes an integer partition of $m$ (i.e., $\lambda=(\lambda_1,\lambda_2,\ldots)$ with $\sum \lambda_i =m$ and $\lambda_1 \ge \lambda_2 \ge \ldots > 0$), and $(i,j) \in \lambda$ corresponds to traversing all boxes in the Young diagram of $\lambda$.
\begin{proof}
This is the orthogonal (pairing) case of Theorem~4.1 in Ref.~\cite{Banica2010Gram}. In their notation, $\mathcal{P}_o$ denotes the category of all pairings, and the corresponding Gram matrix is defined by Eq.~\eqref{eq:BC_gram_def}. The diagonal form in Eq.~\eqref{eq:BC_diagonalization} is the spectral decomposition used in the proof of their determinant formula.
\end{proof}

We now relate the matrix $\mathbb{G}(D)$ in Eq.~(\ref{eqGamma_def_appendix}) to $G^{(o)}_{2m,n}$. A disjoint permutation can be represented equivalently as a pair partition via
\begin{equation}
(a_1b_1)\cdots(a_mb_m) \quad\longleftrightarrow\quad \big\{\{a_1,b_1\},\ldots,\{a_m,b_m\}\big\}.
\end{equation}
Thus, we formally identify $\mathcal{P}_{2m}$ with $\mathcal{P}_o(2m)$ in what follows.

\begin{lemma}
\label{lem:join_cycles}
For any $\pi,\rho \in \mathcal{P}_{2m}$, one has
\begin{equation}
\#(\pi\rho^{-1}) = 2|\pi\vee\rho|.
\label{eq:cycles_join_identity}
\end{equation}
\end{lemma}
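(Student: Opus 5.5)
We prove the identity by examining the cycle structure of a product of two fixed-point-free involutions. Since every $\rho\in\mathcal{P}_{2m}$ is an involution, $\rho^{-1}=\rho$, so we must count the cycles of $\pi\rho$. Consider the auxiliary multigraph $\Gamma_{\pi,\rho}$ on vertex set $[2m]$ whose edges are the $m$ pairs of $\pi$ together with the $m$ pairs of $\rho$. Every vertex has degree exactly two, one $\pi$-edge and one $\rho$-edge. Hence $\Gamma_{\pi,\rho}$ is a disjoint union of alternating cycles of even length $2\ell$. We count a doubled edge, where $\pi$ and $\rho$ share a pair, as an alternating cycle of length two. The connected components of $\Gamma_{\pi,\rho}$ are exactly the blocks of the join $\pi\vee\rho$, because the join is the finest partition coarser than both, that is, the partition into connected components of the union of the two relations. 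So the number of components equals $|\pi\vee\rho|$.

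It then suffices to show that each component contributes exactly two cycles of $\pi\rho$. Take a component with vertices labelled along the alternating cycle as $a_1,a_2,\dots,a_{2\ell}$, with $\rho(a_{2i-1})=a_{2i}$ and $\pi(a_{2i})=a_{2i+1}$ (indices mod $2\ell$). Then $\pi\rho$ maps $a_{2i-1}\mapsto \pi(a_{2i})=a_{2i+1}$. Thus the odd-labelled vertices $a_1,a_3,\dots,a_{2\ell-1}$ form one $\ell$-cycle of $\pi\rho$. Similarly $\pi\rho(a_{2i})=\pi(a_{2i-1})=a_{2i-2}$, so the even-labelled vertices form a second $\ell$-cycle, traversed in the opposite direction. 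For $\ell=1$, where $\pi$ and $\rho$ share a pair, both points are fixed by $\pi\rho$, giving two $1$-cycles. The two cycles are distinct and together exhaust the component. Since the components partition $[2m]$ and $\pi\rho$ preserves each one, summing gives $\#(\pi\rho^{-1})=2|\pi\vee\rho|$.

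The only step requiring care is the identification of the components of $\Gamma_{\pi,\rho}$ with the blocks of $\pi\vee\rho$ in the partition lattice. We settle it by noting that a block of any common coarsening must be closed under both pairings, hence under connectivity in $\Gamma_{\pi,\rho}$. Conversely, the component partition is itself coarser than both $\pi$ and $\rho$. Given the lemma, $\mathbb{G}_{\pi,\rho}(D)=D^{2|\pi\vee\rho|-2m}=D^{-2m}G^{(o)}_{2m,D^2}(\pi,\rho)$. Lemma~\ref{lem:BC_pairing_gram} therefore applies with $n=D^2$, which is how the result will be used afterwards.
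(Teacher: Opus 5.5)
Your proof is correct and follows essentially the same route as the paper. Both arguments superimpose the two pairings as a 2-regular two-coloured graph, identify its connected components with the blocks of $\pi\vee\rho$, and check that $\pi\rho$ restricted to each alternating cycle splits into exactly two cycles, one on each alternate vertex class. Your two-sided justification of the join identification and your separate check of the shared-pair case $\ell=1$ are slightly more explicit than the paper's version, but the argument is the same.
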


\begin{proof}
Note that $\rho^{-1}=\rho$. Superimpose the two pairings $\pi$ and $\rho$ as a two-colored graph on the vertex set $[2m]$, with one edge from $\pi$ and one edge from $\rho$ incident to each vertex. Every connected component is an alternating even cycle (allowing the length-two case when $\pi$ and $\rho$ share a pair).

The blocks of $\pi\vee\rho$ are precisely the connected components of this graph. Indeed, the join $\pi\vee\rho$ is the smallest set partition coarser than both $\pi$ and $\rho$, which is the equivalence relation generated by the pair relations of $\pi$ and $\rho$. Hence, two vertices belong to the same block of $\pi\vee\rho$ if and only if they lie in the same connected component of the superposition graph. Thus, $|\pi\vee\rho|$ is the number of connected components.

It remains to count the cycles of $\pi\rho$ on a single connected component. Suppose one component has $2r$ vertices, labeled as $a_1,b_1,a_2,b_2,\ldots,a_r,b_r$ with indices modulo $r$, such that $\pi(a_s)=b_s$ and $\rho(b_s)=a_{s+1}$. Then $\rho(a_s)=b_{s-1}$, which implies
\begin{align}
\pi\rho(a_s) &= \pi(b_{s-1}) = a_{s-1}, \\
\pi\rho(b_s) &= \pi(a_{s+1}) = b_{s+1}.
\end{align}
Hence, this connected component contributes exactly two cycles to $\pi\rho$: one supported on $\{a_1,\ldots,a_r\}$ and another on $\{b_1,\ldots,b_r\}$. Since every connected component contributes two cycles, we obtain $\#(\pi\rho) = 2|\pi\vee\rho|$. Using $\rho^{-1}=\rho$ yields Eq.~\eqref{eq:cycles_join_identity}.
\end{proof}

Synthesizing Lemma~\ref{lem:BC_pairing_gram} and Lemma~\ref{lem:join_cycles}, we obtain the explicit diagonalization of $\mathbb{G}(D)$. Consequently, the asymptotic eigenvalues of $\mathrm{A}$ in the large-$d$ limit are given by:
\begin{equation}\label{dinfinGmatrixevlen}
    \lambda _{\mu ,a}(\mathrm{A})=d^{-k/2}D^{-k}\prod_{(i,j)\in \mu}{\left( D^2+2j-i-1 \right)}+O(d^{-k/2-1}), 
\end{equation}
where the index $a=1,\ldots,f^{2\mu}$ accounts for the multiplicity of the eigenvalue associated with the partition $\mu\vdash m$. The remaining $(2m)!-(2m-1)!!$ eigenvalues are identically zero. By examining the case of $k=4$, this general formula perfectly recovers the asymptotic eigenvalues of $\mathrm{A}$ derived explicitly earlier.  

For finite $k$ with $nk^2\ll D^2$, Eq.~(\ref{dinfinGmatrixevlen}) indicates that the eigenvalues are dominated by $D$, converging to $d^{-k/2}$. Substituting this result back into Eq.~(\ref{Cglobal}), the deviation of the $k$th-order moment from that of a true Gaussian distribution scales as:
\begin{equation}
    \mathcal{O} \left( \frac{nk^2}{D^2} \right) .
\end{equation}
This scaling reveals that taking $(d,n) \rightarrow \infty$ alone is insufficient to guarantee Gaussianity for $C(\hat{Q}_g)$; however, if one subsequently takes the limit $D^2/n\rightarrow \infty$, the Gaussian distribution is naturally recovered. 

Conversely, in the strict $D = 1$ limit, only the single-row partition (i.e. $\mu =(k/2)$ with Young diagram boxes $(i,j)=(1,1), (1,2), \ldots, (1,k/2)$) yields a non-zero eigenvalue; all other partitions contain the box $(2,1)$, which strictly sends the eigenvalue to zero. Substituting this into Eq.~(\ref{Cglobal}) yields:
\begin{equation}
   \mathbb{E} \left( C\left( \hat{Q}_g \right) ^k\right) \sim d^{-kn/2}\left[ \left( k-1 \right) !! \right] ^n=d^{-kn/2}\left[ \frac{k!}{\left( k/2 \right) !2^{k/2}} \right] ^n,
\end{equation}
which exactly matches the $k$th-order moment of an $n$th-order Gaussian chaos $Z_n=g_1g_2\cdots g_n$, where $g_i\sim \mathcal{N}\left( 0,\sigma ^2 \right)$.

If we strictly take $(d,n) \rightarrow \infty$ while keeping $D$ finite, Eq.~(\ref{S112}) dictates that the $k$th-order moment is dominated by the maximum eigenvalue $\lambda_{\max}$. Identifying the single-row partition as the source of this maximum eigenvalue, in large $k$ regime, we evaluate:
\begin{align}
\mathbb{E}\left( C\left( \hat{Q}_g \right) ^k \right) &\approx \left( \lambda _{\max} \right) ^n
\\
&=\left( d^{-k/2}\prod_{r=0}^{k/2-1}{\left( 1+\frac{2r}{D^2} \right)}+\mathcal{O} \left( d^{-k/2-1} \right) \right) ^n
\\
&\approx \left( d^{-k/2}\prod_{r=0}^{k/2-1}{\left( 1+\frac{2r}{D^2} \right)} \right) ^n
\\
&=d^{-kn/2}\left( \frac{\Gamma \left( k/2+D^2/2 \right)}{\Gamma \left( D^2/2 \right)}\left( \frac{2}{D^2} \right) ^{k/2} \right) ^n
\\
&\approx \left( \frac{\sqrt{2\pi}}{\Gamma \left( D^2/2 \right)}\left( \frac{k}{eD^2d} \right) ^{k/2}\left( \frac{k}{2} \right) ^{\frac{D^2-1}{2}} \right) ^n,
\end{align}
where $\Gamma(\cdot)$ is the Gamma function. In the final line, we apply Stirling's approximation $\Gamma(x) \sim \sqrt{2\pi} x^{x-1/2} e^{-x}$ and $k \gg D^2$.

For an $n$th-order Gaussian chaos, the moments obey the bound~\cite{Janson_1997_Gaussian}:
\begin{equation}
    \frac{\mathbb{E} \left( \left| X \right|^k \right) ^{1/k}}{\mathbb{E} \left( \left| X \right|^2 \right) ^{1/2}}\leqslant \left( k-1 \right) ^{n/2}.
\end{equation}
While a standard Gaussian distribution scales as $\frac{\mathbb{E}(|X|^k)^{1/k}}{\mathbb{E}(|X|^2)^{1/2}} \sim k^{1/2}$, for our operator we find: 
\begin{equation}
    \frac{\mathbb{E}\left( C\left( \hat{Q}_g \right) ^k \right) ^{1/k}}{\mathbb{E}\left( C\left( \hat{Q}_g \right) ^2 \right) ^{1/2}}\sim \left( \frac{k}{eD^2} \right) ^{n/2},
\end{equation}
which confirms that in the large-$d$ limit, the higher-order moments of $C(\hat{Q}_g)$ closely mimic those of an $n$th-order Gaussian chaos.  

In conclusion, under the joint limit $(d,n) \rightarrow \infty$ at finite $D$, the global measurement distribution $C(\hat{Q}_g)$ does not approach a Gaussian. Instead, it interpolates between a pure Gaussian and an $n$th-order Gaussian chaos $Z_n = \prod_{i=1}^n g_i$. Taking $D \rightarrow \infty$ recovers the pure Gaussian limit, whereas setting $D=1$ perfectly reduces to the Gaussian chaos. 

Furthermore, we note that such an non-Gaussianity can extend to more general cases. Consider an arbitrary operator formed by a finite linear combination of Pauli strings. The resulting asymptotic distribution naturally takes the composite form $C(\hat{O}_g) = \sum_i c_i C(\hat{Q}_{gi})$, which generally remains non-Gaussian. This result demonstrates a clear breakdown of Gaussian universality for global observables in this regime.

\subsection{Global Output in the Large-$(D,n)$ Limit}\label{sectionglobal}
In this subsection, we evaluate the $k$th-order moment of the global output in the large-$(D,n)$ limit, which requires computing the trace of the matrix product:

\begin{equation}\label{S145}
\mathbb{E}\left( X_g ^k \right) =\mathrm{tr}\left( \prod_s{\mathrm{A}^{\left( s \right)}} \right) . 
\end{equation}  

First, we establish that in the large-$D$ limit, the matrix $\mathrm{A}^{(s)}$ becomes diagonal. Recall its definition:
\begin{align}
    \mathrm{A}^{(s)}_{\delta,\tau}&=\sum_{\pi \in S_k}{\mathrm{Wg}\left( \delta \pi ^{-1},dD \right) \mathrm{tr}\left( \mathrm{P}_{\pi}O_s \right) D^{\#\left( \pi \tau ^{-1} \right)}}
\\
&\approx \sum_{\pi \in S_k}{\mathrm{Moeb}\left( \delta \pi ^{-1} \right) \left( Dd \right) ^{-2k+\#\left( \delta \pi ^{-1} \right)}\prod_{(t_1,t_2,\cdots ,t_m)\in \pi}{\mathrm{tr}}\left( \hat{O}^{(s)m} \right) D^{\#\left( \pi \tau ^{-1} \right)}},
\end{align}
where the matrix elements attain their leading-order contributions when $\delta =\pi =\tau $, confirming that the dominant entries reside strictly on the diagonal. Substituting this asymptotic form into Eq.~(\ref{S145}) yields:
\begin{align}
    \mathbb{E} \left( X_{g}^{k} \right) &\approx \sum_{\pi}{\prod_s{\frac{1}{d^k}}}\prod_{(t_1,t_2,\cdots ,t_m)\in \pi}{\mathrm{tr}}\left( \hat{O}^{(s)m} \right) 
\\
&=\sum_{\pi}{\frac{1}{d^{kn}}}\prod_{(t_1,t_2,\cdots ,t_m)\in \pi}{\mathrm{tr}}\left( \hat{O}^m \right),
\end{align}
where $\hat{O}=\hat{O}^{(1)}\otimes \hat{O}^{(2)}\otimes \cdots \otimes \hat{O}^{(n)}$. The result above is also consistent with the structure of Eq.~(\ref{S40}). This necessitates imposing constraints on the global operator $\hat{O}$. We assume $\hat{O}$ is spanned by a set of operator bases $\{ \hat{Q}_{gj} \}_{j=1}^m$ that satisfy the Clifford algebra. Let us first consider the $k$th-order moments associated with these operators bases in the large-$D$ limit:
\begin{equation}
\mathbb{E} \left( C\left( \hat{Q}_g \right) ^k \right) = \mathrm{tr}\left( \mathrm{A}^n \right),
\end{equation}
where the matrix $\mathrm{A}$ is given by:
\begin{equation}
\mathrm{A}_{\delta,\tau}=\sum_{\pi \in \mathcal{E} _k}{\mathrm{Wg}\left( \delta \pi ^{-1},dD \right) d^{\#\pi}D^{\#\left( \pi \tau ^{-1} \right)}}.
\end{equation}

We can formally decompose this as a matrix product $\mathrm{A}=WX^{\prime}Y$. Through similarity transformations, $\mathrm{A}$ is similar to $X^{\prime}YW$, which is in turn similar to $(X^{\prime})^{1/2}YW(X^{\prime})^{1/2}$. Since $Y$ and $W$ are simultaneously diagonalizable and both are symmetric, their product $YW$ is symmetric. Consequently, the transformed matrix $(X^{\prime})^{1/2}YW(X^{\prime})^{1/2}$ is also symmetric, guaranteeing that $\mathrm{A}$ is diagonalizable. Furthermore, because $X^{\prime}$, $Y$, and $W$ are all positive definite or positive sdefinite, the eigenvalues of $\mathrm{A}$ are strictly non-negative.     

In the limit $D \rightarrow \infty$, the matrix $\mathrm{A}$ becomes diagonally dominant, allowing its eigenvalues to be robustly approximated by its diagonal entries. Specifically, for $\delta \notin \mathcal{E}_k$, the diagonal elements are suppressed by at least $\mathcal{O}(D^{-1})$ and thus vanish asymptotically. Conversely, when $\delta \in \mathcal{E}_k $, the diagonal elements scale asymptotically as $\mathrm{A}_{\pi ,\pi} \approx d^{\#\pi}/d^k$.   

Applying this diagonal approximation, the $k$th-order moment of $\mathbb{E} ( C( \hat{Q}_g ) ^k ) $ evaluates to:  
\begin{align} 
\mathbb{E} \left( C\left( \hat{Q}_g \right) ^k \right) &=\sum_i{\lambda _{i}^{n}}
\\
&\approx \sum_{\pi \in \mathcal{E}_k }{\mathrm{A}_{\pi ,\pi}^{n}}
\\
&=\sum_{\pi \in \mathcal{E}_k}{\frac{d^{n\#\pi}}{d^{nk}}}
\\
&=\frac{k!}{\left( k/2 \right) !2^{k/2}}\frac{1}{d^{nk/2}}+\mathcal{O} \left( \frac{1}{d^{nk/2+n}} \right).\label{ecojdominant} 
\end{align}

Furthermore, extending our focus to the distribution sequence $\{ C( \hat{O}_{gi})\}_{i=1}^p$, the analysis in Sec.~\ref{ourProofMethod} dictates that establishing its convergence to a Gaussian process requires proving that the sequence $\{ C( \hat{Q}_{gj})\}_{j=1}^m$ is jointly Gaussian. This requires that the mixed moments $\mathbb{E} ( C( \hat{Q}_{gj_1} ) C( \hat{Q}_{gj_2} ) \cdots C( \hat{Q}_{gj_k}) ) $ formed by $\{ C( \hat{Q}_{gj})\}_{j=1}^m$ satisfy the pairing rule, while the uniqueness of the distribution is guaranteed by Carleman's condition.

\begin{lemma}[Isserlis's theorem{\cite{Isserlis_1918_Product}}]
Let $Z_1,\ldots,Z_k$ be centered jointly Gaussian random variables. For odd
$k$, one has
\begin{equation}
    \mathbb E[Z_1\cdots Z_k]=0,
\end{equation}
while for even $k$,
\begin{equation}
    \mathbb E[Z_1\cdots Z_{k}]
    =
    \sum_{\mathcal P\in\mathcal P_k}
    \prod_{\{a,b\}\in\mathcal P}
    \mathbb E[Z_aZ_b],
\label{eq:wick_theorem}
\end{equation}
where $\mathcal P_k$ denotes the set of all disjoint pairings of
$\{1,\ldots,k\}$.
\end{lemma}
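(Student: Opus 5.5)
The statement just above is Isserlis's (Wick's) theorem. I will prove it with the moment generating function. Let $Z=(Z_1,\ldots,Z_k)$ be centered and jointly Gaussian with covariance $\Sigma_{ab}=\mathbb E[Z_aZ_b]$. For every $t\in\mathbb R^k$, the variable $t\cdot Z$ is a centered scalar Gaussian with variance $t^{T}\Sigma t$. Its moment generating function is therefore
\begin{equation}
M(t)=\mathbb E\big[e^{t\cdot Z}\big]=\exp\!\Big(\tfrac12\, t^{T}\Sigma t\Big),
\end{equation}
which is finite and entire in $t$. Since all moments of a Gaussian vector are finite, we may differentiate under the expectation. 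This gives
\begin{equation}
\mathbb E[Z_1\cdots Z_k]=\partial_{t_1}\cdots\partial_{t_k}M(t)\big|_{t=0}.
\end{equation}
The identity holds even when some indices repeat: in that case one introduces distinct formal copies of the repeated variables, or equivalently notes that the argument only uses multilinearity in the labels.

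The key step is to expand the exponential as $\sum_{r\ge0}\frac{1}{r!}\big(\tfrac12\sum_{a,b}\Sigma_{ab}t_at_b\big)^r$ and extract the coefficient of the monomial $t_1t_2\cdots t_k$. Each term of order $r$ is a homogeneous polynomial of degree $2r$. Only the term $r=k/2$ can contribute, so for odd $k$ the derivative vanishes and $\mathbb E[Z_1\cdots Z_k]=0$.

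For even $k=2r$, expand $\big(\tfrac12\sum_{a,b}\Sigma_{ab}t_at_b\big)^r$. A choice of $r$ ordered pairs $(a_1,b_1),\ldots,(a_r,b_r)$ whose entries exhaust $\{1,\ldots,k\}$ contributes to the monomial $t_1\cdots t_k$. Each unordered pairing $\mathcal P\in\mathcal P_k$ arises from exactly $r!\,2^r$ such ordered choices: $r!$ orderings of the blocks times $2$ orientations per block. This count cancels the prefactor $\frac{1}{r!\,2^r}$ exactly. Applying $\partial_{t_1}\cdots\partial_{t_k}$ at $t=0$ to the monomial gives $1$, and the result is $\sum_{\mathcal P\in\mathcal P_k}\prod_{\{a,b\}\in\mathcal P}\Sigma_{ab}$, which is Eq.~\eqref{eq:wick_theorem}.

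The only step I would treat carefully is this combinatorial count. Symmetry of $\Sigma$ is what makes the two orientations of a pair contribute equally. Any potential double counting of repeated labels disappears because I differentiate with respect to $k$ distinct variables $t_1,\ldots,t_k$. As a consistency check, the case $Z_a=Z$ for all $a$ recovers $(k-1)!!\,\sigma^k$, in agreement with the Gaussian moments used earlier.
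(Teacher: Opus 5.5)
Your proof is correct, and it is a genuinely different route only in the trivial sense that the paper gives no proof here. The paper states this lemma as Isserlis's classical result and cites it. It then uses it as a black box: it is the target pairing rule that the mixed moments of the global outputs must reproduce in the large-$(D,n)$ limit. Your moment-generating-function argument therefore supplies a self-contained derivation that the paper omits.

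Its combinatorial core is the same counting the paper does elsewhere. Reading off the coefficient of $t_1\cdots t_k$, with each pairing produced by exactly $r!\,2^r$ ordered choices, is precisely the step in Eqs.~\eqref{eq:cycle_coefficient_relation}--\eqref{eq:even_cycle_pairing}. There the auxiliary element $\hat Q(\boldsymbol x)=\sum_i x_i\hat Q_{t_i}$ plays the role of $t\cdot Z$. The Clifford identity $\mathrm{tr}(\hat Q(\boldsymbol x)^{2r})=d^n\big(\sum_{a,b}x_ax_b\delta_{t_at_b}\big)^r$ plays the role of $\mathbb E[(t\cdot Z)^{2r}]=(2r-1)!!\,(t^{T}\Sigma t)^r$. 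Seen this way, your proof also explains why Clifford generators produce Wick-type pairings in the paper's computation.

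Two small tightenings. First, finiteness of all polynomial moments is not by itself what justifies differentiating under the expectation. You need a locally uniform integrable dominator for $|Z_1\cdots Z_k|\,e^{t\cdot Z}$, and this follows from the Gaussian bound $\mathbb E[e^{c\|Z\|}]<\infty$ for every $c$. Alternatively, avoid differentiation: expand $\mathbb E[e^{t\cdot Z}]=\sum_j\mathbb E[(t\cdot Z)^j]/j!$ by dominated convergence and match coefficients of $t_1\cdots t_k$. On the left that coefficient is exactly $\mathbb E[Z_1\cdots Z_k]$. Second, your remark about repeated indices is unnecessary. The formula $M(t)=\exp(\tfrac12 t^{T}\Sigma t)$ holds for degenerate Gaussian vectors with singular $\Sigma$, so equal variables among $Z_1,\ldots,Z_k$ are covered directly.
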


By the Isserlis's theorem, it is sufficient to show that the mixed
moments of the outputs $\{ C( \hat{Q}_{gj})\}_{j=1}^m$ asymptotically satisfy the pairing rule.

The product $C( \hat{Q}_{j_1} ) C( \hat{Q}_{j_2} ) \cdots C( \hat{Q}_{j_k} )$ (throughout the remainder of this subsection, we suppress the subscript $g$ for notational simplicity), where operator bases are allowed to appear repeatedly, shares a structure analogous to that of $X_g^k$. Its tensor network diagram closely parallels that in Fig.~\ref{fig:Cgmk}, except that the operators across different copies need not be identical. Similarly, we define the corresponding matrix on each local site:
\begin{equation}
    \mathrm{A}_{_{\delta ,\tau}}^{\left( s \right)}=\sum_{\pi \in S_k}{\mathrm{Wg}\left( \delta \pi ^{-1},dD \right) \mathrm{tr}\left( \mathrm{P}_{\pi}Q_s \right) D^{\#\left( \pi \tau ^{-1} \right)}},
\end{equation}
where $Q_s=\hat{Q}_{j_1}^{\left( s \right)}\otimes \hat{Q}_{j_2}^{\left( s \right)}\otimes \cdots \otimes \hat{Q}_{j_k}^{\left( s \right)}$. Similarly, in the large-$D$ limit, the matrix
$\mathrm{A}^{(s)}$ becomes diagonal. Therefore,
\begin{align}
\mathbb{E}\left[
C( \hat{Q}_{j_1} ) C( \hat{Q}_{j_2} ) \cdots C( \hat{Q}_{j_k} )
\right]
\nonumber&=\operatorname{tr}
\left(
\prod_s \mathrm{A}^{(s)}
\right)
\\
&\approx
\sum_{\pi\in S_k}
\prod_s
\frac{1}{d^k}
\operatorname{tr}
\left(
\mathrm{P}_{\pi}O_s
\right)
\label{hunheju}
\\
&=
\sum_{\pi\in S_k}
\frac{1}{d^{kn}}
\prod_{(t_1,t_2,\ldots,t_l)\in\pi}
\operatorname{tr}
\left(
\hat Q_{t_1}
\hat Q_{t_2}
\cdots
\hat Q_{t_l}
\right).
\label{eq:mixed_moment_clifford}
\end{align}

Notice that any permutation $\pi \in S_k$ can be uniquely decomposed into a set of disjoint cycles. This cycle structure naturally induces a set partition of the index set $K = \{1, 2, \dots, k\}$. Let $\Pi_k$ denote the set of all possible set partitions of $K$. For a given partition $\Delta \in \Pi_k$, it is composed of disjoint blocks, which we denote by $B \in \Delta$. And we denote the positions in this block by $B=\{t_1,t_2,\ldots,t_l\}$.

For a specific block $B$ of size $l = |B|$, the elements can form $(l-1)!$ distinct cyclic permutations. We define the block contribution by summing over all such cyclic permutations:
\begin{equation}\label{eq:cycle_definition}
   \mathcal{C} (B)=\sum_{\sigma \in \mathrm{Cyc(}B)}{\mathrm{tr}\left( \hat{Q}_{\sigma \left( t_1 \right)}\hat{Q}_{\sigma \left( t_2 \right)}\cdots \hat{Q}_{\sigma \left( t_l \right)} \right) .}
\end{equation}
where $\mathrm{Cyc}(B)$ denotes the set of all cyclic permutations supported on the block $B$.
By grouping the permutations according to their underlying set partitions, the summation over the symmetric group can be systematically rewritten as a sum over all partitions and a product over their respective blocks:
\begin{equation}\label{S162}
    \sum_{\pi \in S_k} \frac{1}{d^{kn}} \prod_{(t_1,\dots,t_l)\in \pi} \mathrm{tr}\left( \hat{Q}_{t_1}\cdots \hat{Q}_{t_l} \right)
    = \frac{1}{d^{kn}} \sum_{\Delta \in \Pi_k} \prod_{B \in \Delta} \mathcal{C}(B).
\end{equation}




Returning to the Eq.~(\ref{eq:cycle_definition}), due to the cyclic invariance of the trace, every single cycle of length $l$
corresponds to $l$ equivalent permutations in the complete permutation sum.
Therefore,
\begin{equation}
\mathcal{C} (B)=\frac{1}{l}\sum_{\sigma \in S_l}{\mathrm{tr}\left( \hat{Q}_{\sigma \left( t_1 \right)}\hat{Q}_{\sigma \left( t_2 \right)}\cdots \hat{Q}_{\sigma \left( t_l \right)} \right) },
\label{eq:cycle_symmetrization}
\end{equation}
where $S_l$ denotes the permutation group acting on the $l$ positions inside
this cycle and $\left| \left. S_l \right| \right. =l!$.

To evaluate this expression, define the auxiliary element
\begin{equation}
    \hat{Q}(\boldsymbol{x})=\sum_{i=1}^l{x_i\hat{Q}_{t_i}} .
\end{equation}
Expanding its $l$-th power gives
\begin{equation}
\mathrm{tr}\left( \hat{Q}(\boldsymbol{x})^l \right) =\sum_{a_1,...,a_l=1}^l{x_{a_1}}\cdots x_{a_l}\mathrm{tr}\left( \hat{Q}_{t_{a_1}}\cdots \hat{Q}_{t_{a_l}} \right) .
\end{equation}
Hence, the coefficient of the monomial
$x_1x_2\cdots x_l$ contains exactly all permutations of the $l$ positions:
\begin{equation}
\mathrm{Coeff}_{x_1x_2\cdots x_l}
\left\{
\mathrm{tr}
\left[
\hat Q(\boldsymbol{x})^l
\right]
\right\}
=
\sum_{\sigma\in S_l}
\mathrm{tr}
\left(
\hat Q_{{\sigma(t_1)}}
\hat Q_{{\sigma(t_2)}}
\cdots
\hat Q_{{\sigma(t_l)}}
\right).
\label{eq:coefficient_permutation}
\end{equation}
Consequently,
\begin{equation}
\mathcal C(B)
=
\frac{1}{l}
\mathrm{Coeff}_{x_1x_2\cdots x_l}
\left\{
\mathrm{tr}
\left(
\hat Q(\boldsymbol{x})^l
\right)
\right\}.
\label{eq:cycle_coefficient_relation}
\end{equation}

The Clifford algebra gives
\begin{equation}
\begin{aligned}
\hat Q(\boldsymbol{x})^2
&=
\sum_{a,b=1}^l{x_ax_b\hat{Q}_{t_a}\hat{Q}_{t_b}}
\\
&=\sum_{a,b=1}^l{x_ax_b\delta _{t_at_b}I}.
\end{aligned}
\label{eq:clifford_square}
\end{equation}
Therefore,
\begin{equation}
\mathrm{tr}
\left(
\hat Q(\boldsymbol{x})^{2r}
\right)
=
{d^n}\left(
\sum_{a,b=1}^l{x_ax_b\delta _{t_at_b}}
\right)^r ,
\label{eq:even_power}
\end{equation}
while
\begin{equation}
\mathrm{tr}
\left(
\hat Q(\boldsymbol{x})^{2r+1}
\right)
=0 .
\label{eq:odd_power}
\end{equation}

For an odd $l$, Eq.~(\ref{eq:odd_power}) implies that every
coefficient $x_1x_2\cdots x_l$ in the polynomial expansion of
$\mathrm{tr}(\hat Q(\boldsymbol{x})^l)$ vanishes. Therefore, for odd $l$, the sum over all possible cyclic orderings in $\mathcal{C}(B)$ cancels out and yields a zero contribution:
\begin{equation}
    \mathcal{C}(B) = 0, \qquad l \text{ is odd}.
\end{equation}

For an even $l = 2r$, the non-vanishing contributions are obtained by extracting the coefficient of $x_1x_2\cdots x_{2r}$ from Eq.~(\ref{eq:even_power}). Such a coefficient is generated by grouping the $2r$ variables into $r$ pairs. Consequently, we have
\begin{align}
    \mathcal{C}(B) &=\frac{2^rr!}{l}d^n\sum_{\mathcal{P} \in \mathcal{P} _l}{\prod_{\{a,b\}\in \mathcal{P}}{\delta _{t_at_b}}} \\
    &=\frac{2^{r-1}}{d^{n(r-1)}}\times(r-1)!\sum_{\mathcal{P} \in \mathcal{P} _l}{\prod_{\{a,b\}\in \mathcal{P}}{\mathrm{tr}}\left( \hat{Q}_{t_a}\hat{Q}_{t_b} \right)}.
    \label{eq:even_cycle_pairing}
\end{align}
where the factor $2^r r!$ represents the total number of orderings for a fixed pairing, combining the $r!$ permutations of the pairs with a factor of $2^r$ from the internal swaps within each pair. Hence, the contribution of an even $l$ does not vanish; rather, after summing over all possible cyclic permutations, it reduces to a sum over products of pairwise traces.

Substituting the expression for $\mathcal{C}(B)$ back into Eq.~(\ref{S162}), we observe that any set partition $\Delta$ containing a block of odd size evaluates to zero. Consequently, for an odd $k$, the entire sum in Eq.~(\ref{S162}) vanishes. For an even $k$, the surviving contributions arise exclusively from partitions $\Delta$ composed entirely of even-sized blocks. 

By combining Eq.~(\ref{S162}) and Eq.~(\ref{eq:even_cycle_pairing}), the evaluation of the sum over permutations is reduced to a two-step combinatorial problem: partitioning the index set $K$ into even blocks, and subsequently forming disjoint pairs within each block weighted by the cyclic permutation factor $(r-1)!$. Combinatorially, this process is strictly isomorphic to an alternative procedure: first, selecting a global disjoint pairing $\mathcal{P} \in \mathcal{P}_k$ on the entire index set $K$ (thereby forming $m = k/2$ pairs), and second, partitioning these $m$ pairs into $s$ ($s=1,\cdots ,m$) disjoint blocks, where the pairs inside each block are arranged in a specific cyclic order.

The number of ways to arrange $m$ elements into $s$ non-empty cycles is precisely given by the unsigned Stirling numbers of the first kind, $\left[ \begin{smallmatrix} k/2 \\ s \end{smallmatrix} \right]$ (in particular $\left[ \begin{smallmatrix} k/2 \\ k/2 \end{smallmatrix} \right]=1$)~\cite{Richard2011Combinatorics}. As seen from Eq.~(\ref{eq:even_cycle_pairing}), partitioning the $k/2$ pairs into $s$ cycles introduces a cycle-dependent weight coefficient $\left( \frac{2}{d^n} \right)^{k/2-s}$. Collecting these elements together, we arrive at the final evaluation:
\begin{align}
    \frac{1}{d^{kn}} \sum_{\Delta \in \Pi_k} \prod_{B \in \Delta} \mathcal{C}(B) 
    &= \frac{1}{d^{kn}}\sum_{\mathcal{P} \in \mathcal{P}_k} \prod_{\{a,b\} \in \mathcal{P}} \mathrm{tr}\left( \hat{Q}_{j_a}\hat{Q}_{j_b} \right) \times \sum_{s=1}^{k/2} \begin{bmatrix} k/2 \\ s \end{bmatrix} \left( \frac{2}{d^n} \right)^{k/2-s} \nonumber \\
    &= \left[ 1 + \mathcal{O}\left( \frac{k^2}{d^n} \right) \right] \frac{1}{d^{kn}}\sum_{\mathcal{P} \in \mathcal{P}_k} \prod_{\{a,b\} \in \mathcal{P}} \mathrm{tr}\left( \hat{Q}_{j_{a}}\hat{Q}_{j_{b}} \right).
    \label{eq:final_stirling_sum}
\end{align}
We will explicitly compute the exact summation involving the Stirling numbers in the subsequent section, for now, we simply present the leading-order contribution and its asymptotic scaling in large-$n$ regime.

It follows that
\begin{equation}
\mathbb{E} \left[ C(\hat{Q}_{j_1})\cdots C(\hat{Q}_{j_k}) \right] \approx \left[ 1+\mathcal{O} \left( \frac{k^2}{d^n} \right) \right] \sum_{\mathcal{P} \in \mathcal{P}_k}{\prod_{\{a,b\}\in \mathcal{P}}{\mathbb{E}}}\left[ C(\hat{Q}_{j_a})C(\hat{Q}_{j_b}) \right] .
\label{eq:asymptotic_wick}
\end{equation}
where we have used the result of Eq.\eqref{eq:mixed_moment_clifford} with $k=2$.
Consequently, for any normalized real linear combination
\begin{equation}
    \hat Q(\mathbf c)
    =
    \frac{1}{c}\sum_{j=1}^m c_j\hat Q_j,
    \qquad
    c^2=\sum_j c_j^2 ,
\end{equation}
the mixed moments of $C(\hat{Q}(\mathbf{c}))^k$ converge to the pairing rule. Together with the
Carleman condition discussed previously, this proves that
$C(\hat Q(\mathbf c))$ is asymptotically Gaussian in the large-$(D,n)$ limit.

Equation~(\ref{hunheju}) also shows that the leading Gaussian contribution to
the mixed moments originates from the diagonal elements of
$\mathrm{A}^{(s)}$. The contribution of the off-diagonal elements will be
analyzed in the following section, which gives the finite-$D$ corrections.

 \section{Finite-Parameter Deviations from Asymptotic Gaussianity}

\subsection{Residual Terms for $C( \hat{O}_l )$ in the Large-$d$ Regime}

In this subsection, we derive rigorous upper bounds for the residual terms of $\mathbb{E}(C(\hat{O}_l)^k)$ in the large-$d$ regime for every operator $\hat{O}_l$ within the space of Clifford generators, confirming that their ratio to the leading-order contribution vanishes as $d \to \infty$. As noted in Sec.~\ref{ourProofMethod}, this finite-parameter correction also fully captures the departure of the sequence $\mathscr{C}_l$ from the GP limit in the large $d$ regime.  

From Eq.~(\ref{Tematrixform}), the matrix $\mathrm{Te}^n$ admits the following block structure:
\begin{equation}
    \left( \begin{matrix}
	1&		\langle u|\sum\nolimits_{i=0}^{n-1}{\mathrm{Te}^{\prime}}^i\\
	0&		{\mathrm{Te}^{\prime}}^n\\
\end{matrix} \right), 
\end{equation}
where $\mathrm{Te}^{\prime}$ denotes the sub-matrix of $\mathrm{Te}$ with the first row and column removed, and the bra $\langle u|$ corresponds to the first row of $\mathrm{Te}$ excluding its first element 1.

Defining the vector $\left( 1,\langle u|\sum\nolimits_{i=0}^{\alpha}{\mathrm{Te}^{\prime}}^i \right) \equiv \langle {L_k}_{,\alpha}|$, we take the asymptotic limit $\alpha\rightarrow \infty$:
\begin{equation}
\left( 1,\langle u|\sum\nolimits_{i=0}^{\infty}{{\mathrm{Te}^{\prime}}^i} \right) =\langle L_k|, 
\end{equation}
which precisely corresponds to the left vector of the spectral projector associated with the unit eigenvalue of the Te matrix (equivalently, the projector onto the subspace with exactly $k$ fixed points). Consequently, the trace can be expanded as: 
\begin{align}
     \mathrm{tr}\left( \mathrm{A}\times \mathrm{Te}^{n-1} \right) &=\mathrm{tr}\left( \mathrm{A}|e\rangle \langle {L_k}_{,n-2}| \right) +\mathrm{tr}\left( \mathrm{A}^{\prime}\times {\mathrm{Te}^{\prime}}^{n-1} \right) 
\\
&=\mathrm{tr}\left( \langle L_k|\mathrm{A}|e\rangle \right) +\mathrm{tr}\left( \mathrm{A}^{\prime}\times {\mathrm{Te}^{\prime}}^{n-1} \right) -\mathrm{tr}\left( \langle u|\sum\nolimits_{i=n-1}^{\infty}{{\mathrm{Te}^{\prime}}^i\mathrm{A}^{''}|e\rangle} \right),  \label{ATen__}  
\end{align}
where $n\geqslant 2$ and $\mathrm{A}_{\delta,\tau}=\sum_{\pi \in \mathcal{E} _k}{\mathrm{Wg}\left( \delta \pi ^{-1},dD \right) d^{\#\pi}D^{\#\left( \pi \tau ^{-1} \right)}}$ corresponding to normalize operator $\hat{O}_l$ from the space of Clifford generator. Here, $\mathrm{A}^{\prime}$ represents the sub-matrix of $\mathrm{A}$ with the first row and column removed, and $\mathrm{A}^{''}$ is the sub-matrix of $\mathrm{A}$ without the first row. The first term in Eq.~(\ref{ATen__}) captures the dominant contribution from the unit eigenvalue sector of $\mathrm{Te}$, whereas the second and third terms are the residual terms. From Eq.~(\ref{Tedominantofe1}), this leading contribution is given by\begin{equation}\label{S183}\mathrm{tr}\left( \langle L_k|\mathrm{A}|e\rangle \right) =\left( 1+\mathcal{O} \left( \frac{k^2}{dD^2} \right) \right) \frac{1}{\left( dD^2 \right) ^k}\sum_{\pi \in S_k}{\prod_{(t_1,t_2,\dots ,t_m)\in \pi}{\mathrm{tr}\left[ \left( \hat{O}^{(1)}\otimes \mathcal{I} _{D^2} \right) ^m \right]}}.\end{equation}Since $\mathrm{tr}\big( \hat{O}^{(1)} \big) =0$ and $\big( {\hat{O}^{(1)}} \big)^2=I$, the permutation $\pi$ must belong to $\mathcal{E}_k$ (i.e., the subset of the symmetric group $S_k$ containing no odd cycles) for the terms in Eq.~(\ref{S183}) to be non-vanishing. The set $\mathcal{E}_k$ can be partitioned according to the number of disjoint cycles into $\mathcal{E}_{k, k/2}, \mathcal{E}_{k, k/2 - 1}, \dots, \mathcal{E}_{k, 1}$, where $\mathcal{E}_{k, k/2}$ is the set of disjoint transpositions, with a cardinality of $\vert \mathcal{E}_{k, k/2} \vert = (k-1)!!$. It is straightforward to verify that the ratio of the cardinalities between these consecutive subsets is bounded by $\mathcal{O}(k^2)$. For $\pi \in \mathcal{E}_{k, i}$, we have\begin{equation}\frac{1}{\left( dD^2 \right) ^k}\prod_{(t_1,t_2,\dots ,t_m)\in \pi}{\mathrm{tr}\left[ \left( \hat{O}^{(1)}\otimes \mathcal{I}_{D^2} \right) ^m \right]}=\frac{1}{\left( dD^2 \right) ^{k-i}}.
\end{equation}
This implies that the summation over subsets other than $\mathcal{E}_{k, k/2}$ in Eq.~(\ref{S183}) is suppressed by a factor of $\mathcal{O} \left( \frac{k^2}{dD^2} \right)$. Thus, we obtain\begin{equation}\mathrm{tr}\left( \langle L_k|\mathrm{A}|e\rangle \right) =\frac{k!}{\left( k/2 \right) !2^{k/2}}\frac{1}{\left( dD^2 \right) ^{k/2}}\left( 1+\mathcal{O} \left( \frac{k^2}{dD^2} \right) \right),\end{equation}
valid under the condition $\frac{k^2}{dD^2}\ll 1$.

To rigorously bound the remaining two residual terms in Eq.~(\ref{ATen__}), we first evaluate the elements of $\mathrm{Te}^{\prime}$:
\begin{align}
    \left( \mathrm{Te}^{\prime} \right) _{\delta ,\alpha}&\leqslant \sum_{\sigma \in S_k}{\left| \mathrm{Wg}\left( \delta \sigma ^{-1},dD \right) \right|d^{\#\sigma}D^{\#\sigma \alpha ^{-1}}}
\\
&=\sum_{\sigma \in S_k}{\left| \mathrm{Wg}\left( \delta \sigma ^{-1},x \right) \right|\frac{x^{\#\sigma}}{D^{\#\sigma}}D^{\#\left( \sigma \alpha ^{-1} \right)}}
\\
&=\overline{\left( \mathrm{Te}^{\prime} \right) _{\delta ,\alpha}},
\end{align}
where $x \equiv dD$, and $\overline{\mathrm{Te}^{\prime}}$ denotes the element-wise upper-bound matrix of $\mathrm{Te}^{\prime}$. We now demonstrate that $\left( \overline{\mathrm{Te}^{\prime}} \right) ^n$ decays element-wise relative to $\overline{\mathrm{Te}^{\prime}}$ in the large-$d$ regime, governed by a suppression factor of $\mathcal{O} [ ( \frac{k^2D^2}{d} ) ^n ] $. Evaluating the element-wise upper bound for $\left( \overline{\mathrm{Te}^{\prime}} \right)^2$ yields:
\begin{align}
   \left[ \left( \overline{\mathrm{Te}^{\prime}} \right) ^2 \right] _{\delta ,\tau}\,&=\sum_{\sigma \in S_k}{\sum_{\alpha \in S_k\backslash\{e\}}{\sum_{\beta \in S_k}{\left| \mathrm{Wg}\left( \delta \sigma ^{-1},x \right) \right|\frac{x^{\#\sigma}}{D^{\#\sigma}}D^{\#\left( \sigma \alpha ^{-1} \right)}\left| \mathrm{Wg}\left( \alpha \beta ^{-1},x \right) \right|\frac{x^{\#\beta}}{D^{\#\beta}}D^{\#\left( \beta \tau ^{-1} \right)}}}}
\\
&\leqslant \sum_{\sigma \in S_k}{\sum_{\alpha \in S_k\backslash\{e\}}{\sum_{\beta \in S_k}{\left| \mathrm{Wg}\left( \delta \sigma ^{-1},x \right) \right|\frac{x^{\#\sigma}}{D^{\#\sigma}}D^{k+\#\sigma -\#\alpha}\left| \mathrm{Wg}\left( \alpha \beta ^{-1},x \right) \right|\frac{x^{\#\beta}}{D^{\#\beta}}D^{k-\#\beta +\#\tau}}}}
\\
&=\sum_{\sigma \in S_k}{\sum_{\alpha \in S_k\backslash\{e\}}{\sum_{\beta \in S_k}{\left| \mathrm{Wg}\left( \delta \sigma ^{-1},x \right) \right|x^{\#\sigma}D^{k-\#\alpha}\left| \mathrm{Wg}\left( \alpha \beta ^{-1},x \right) \right|\frac{x^{\#\beta}}{D^{2\#\beta}}D^{2k}D^{\#\tau -k}}}}
\\
&=\sum_{\sigma \in S_k}{\sum_{\alpha \in S_k\backslash\{e\}}{\sum_{\beta \in S_k}{\left| \mathrm{Wg}\left( \delta \sigma ^{-1},x \right) \right|x^{\#\sigma}D^{k-\#\alpha}\left| \mathrm{Wg}\left( \alpha \beta ^{-1},x \right) \right|\frac{x^{\#\beta}}{D^{2\#\beta}}D^{2k}\frac{D^{\#\sigma +\#\tau -k}}{D^{\#\sigma}}}}}
\\
&\leqslant \sum_{\sigma \in S_k}{\sum_{\alpha \in S_k\backslash\{e\}}{\sum_{\beta \in S_k}{\left| \mathrm{Wg}\left( \delta \sigma ^{-1},x \right) \right|x^{\#\sigma}D^{k-\#\alpha}\left| \mathrm{Wg}\left( \alpha \beta ^{-1},x \right) \right|\frac{x^{\#\beta}}{D^{2\#\beta}}D^{2k}\frac{D^{\#\left( \sigma \tau ^{-1} \right)}}{D^{\#\sigma}}}}}
\\
&=\sum_{\sigma \in S_k}{\left| \mathrm{Wg}\left( \delta \sigma ^{-1},x \right) \right|x^{\#\sigma}\frac{D^{\#\left( \sigma \tau ^{-1} \right)}}{D^{\#\sigma}}\sum_{\alpha \in S_k\backslash\{e\}}{\sum_{\beta \in S_k}{D^{3k-\#\alpha}\left| \mathrm{Wg}\left( \alpha \beta ^{-1},x \right) \right|\frac{x^{\#\beta}}{D^{2\#\beta}}}}}
\\
&=\left( \overline{\mathrm{Te}^{\prime}} \right) _{\delta ,\tau}\sum_{\alpha \in S_k\backslash\{e\}}{\sum_{\beta \in S_k}{D^{3k-\#\alpha}\left| \mathrm{Wg}\left( \alpha \beta ^{-1},x \right) \right|\frac{x^{\#\beta}}{D^{2\#\beta}}}},
\end{align}
where the second and fifth lines follow directly from the triangle inequalities $\#\sigma \alpha ^{-1}\leqslant k-\left| \#\sigma -\#\alpha \right|$ and $\#\sigma +\#\tau -k\leqslant \#(\sigma \tau ^{-1})$, respectively.

Further evaluating the decay factor gives:
\begin{align}
    &\sum_{\alpha \in S_k\backslash \{e\}}{\sum_{\beta \in S_k}{D^{3k-\#\alpha}\left| \mathrm{Wg}\left( \alpha \beta ^{-1},x \right) \right|\frac{x^{\#\beta}}{D^{2\#\beta}}}}
\\
&\leqslant \sum_{\beta \in S_k}{\sum_{\alpha \in S_k\backslash \{e\}}{D^{k-\#\alpha}\omega\frac{\left| \mathrm{Moeb}\left( \alpha \beta ^{-1} \right) \right|}{x^{k-\#\alpha \beta ^{-1}}}\frac{1}{\left( x/D^2 \right) ^{k-\#\beta}}}}
\\
&=\omega\sum_{\beta \in S_k}{\frac{1}{\left( x/D^2 \right) ^{k-\#\beta}}\sum_{\alpha \beta \in S_k\backslash \{e\}}{D^{k-\#\alpha \beta}\frac{\left| \left. \mathrm{Moeb}\left( \alpha \right) \right| \right.}{x^{k-\#\alpha}}}}
\\
&=\omega\sum_{\beta \in S_k}{\frac{1}{\left( x/D^2 \right) ^{k-\#\beta}}\sum_{\alpha \in S_k\backslash \{\beta ^{-1}\}}{D^{k-\#\alpha \beta}\frac{\left| \left. \mathrm{Moeb}\left( \alpha \right) \right| \right.}{x^{k-\#\alpha}}}}
\\
&\leqslant \omega\sum_{\beta \in S_k}{\frac{1}{\left( x/D^2 \right) ^{k-\#\beta}}\sum_{\alpha \in S_k\backslash \{\beta ^{-1}\}}{D^{k-\left( \#\alpha +\#\beta -k \right)}\frac{\left| \mathrm{Moeb}\left( \alpha \right) \right|}{x^{k-\#\alpha}}}}
\\
&=\omega\sum_{\beta \in S_k}{\frac{1}{\left( x/D^3 \right) ^{k-\#\beta}}\sum_{\alpha \in S_k\backslash \{\beta ^{-1}\}}{\frac{\left| \mathrm{Moeb}\left( \alpha \right) \right|}{\left( x/D \right) ^{k-\#\alpha}}}},\label{S158}
\end{align}
where the second line utilizes the asymptotic expansion of the Weingarten factor from Eq.~(\ref{upperboundofwg}), with $\omega \equiv (1-6k^{7/2}/x^2)^{-1}$. The third line exploits the translation invariance of the summation over $\alpha$.

To bound Eq.~(\ref{S158}), we require two combinatorial inequalities. First, recall that the Catalan numbers $c_n$ are globally bounded by $c_n \leqslant 4^n$ for all $n \geqslant 0$~\cite{Richard2011Combinatorics}. For any permutation $\sigma \in S_k$ containing exactly $\#\sigma =k-l$ cycles with lengths $m_1, m_2, \dots, m_{k-l}$, the sum of the cycle lengths strictly obeys $\sum_{i=1}^{k-l} m_i = k$. Consequently, the absolute value of the M\"obius coefficient associated with $\sigma$ can be bounded by:
\begin{equation}\label{upperboundofMoeb}
    |\mathrm{Moeb}(\sigma)| = \prod_{i=1}^{k-l} c_{m_i-1} \leqslant \prod_{i=1}^{k-l} 4^{m_i-1} = 4^{\sum_{i=1}^{k-l} m_i - (k-l)} = 4^{k - (k-l)} = 4^l.
\end{equation}
Next, the upper bound on the number of permutations in $S_k$ with exactly $k-l$ cycles is given by the unsigned Stirling number of the first kind, denoted as $\begin{bmatrix} k \\ k-l \end{bmatrix}$. Because any permutation with $k-l$ cycles can be decomposed into at least $l$ transpositions, and the ordering of $l$ independent transpositions overcounts the permutations by a factor of $l!$, we recover the classical upper bound~\cite{Richard2011Combinatorics}:
\begin{equation}\label{upperboundofpermutation}
    \begin{bmatrix} k \\ k-l \end{bmatrix} \leqslant \frac{1}{l!} \binom{k}{2}^l \leqslant \frac{1}{l!} \left( \frac{k^2}{2} \right)^l.
\end{equation}
Combining the bounds from Eq.~(\ref{upperboundofMoeb}) and Eq.~(\ref{upperboundofpermutation}) yields: 
\begin{equation}
    \sum_{\beta \in S_k}{\frac{1}{\left( x/D^3 \right) ^{k-\#\beta}}\leqslant \sum_{l=0}^{k-1}{\frac{1}{l!}\left( \frac{k^2D^3/2}{x} \right) ^l}},
\end{equation}
and 
\begin{equation}
    \sum_{\alpha \in S_k}{\frac{\left| \mathrm{Moeb}\left( \alpha \right) \right|}{\left( x/D \right) ^{k-\#\alpha}}}\leqslant \sum_{l=0}^{k-1}{\frac{1}{l!}\left( \frac{2k^2D}{x} \right) ^l}.
\end{equation}
For sufficiently large $x$, these expansions are dominated by their leading terms (i.e., $l=0$, giving a value of $1$), which correspond strictly to $\alpha=e$ and $\beta=e$, respectively. Returning to Eq.~(\ref{S158}), because the summation explicitly excludes the case $\alpha = \beta ^{-1}$, the two parts of the sum cannot simultaneously attain their dominant contributions. Therefore, the leading non-vanishing term scales as:
\begin{equation}
    \omega\sum_{\beta \in S_k}{\frac{1}{\left( x/D^3 \right) ^{k-\#\beta}}\sum_{\alpha \in S_k\backslash \{\beta ^{-1}\}}{\frac{\left| \mathrm{Moeb}\left( \alpha \right) \right|}{\left( x/D \right) ^{k-\#\alpha}}}}=\mathcal{O} \left( \frac{k^2D^2}{d} \right)\label{Sk_esum},
\end{equation}
valid under the condition $k^2D^2/d\ll 1$. This establishes the element-wise decay of $\left( \overline{\mathrm{Te}^{\prime}} \right) ^n$. 

Next, we proceed to derive the upper bound for the second term in Eq.~(\ref{ATen__}):
\begin{align}
    \mathrm{tr}\left( \mathrm{A}^{\prime}\times {\mathrm{Te}^{\prime}}^{n-1} \right) &\leqslant \mathrm{tr}\left( \overline{\mathrm{A}^{\prime}}\times \overline{\mathrm{Te}^{\prime}} \right) 
\\
&=\sum_{\delta \in S_k\backslash \{e\}}{\sum_{\pi \in \mathcal{E} _k}{\sum_{\alpha \in S_k\backslash \{e\}}{\sum_{\beta \in S_k}{\left| \mathrm{Wg}\left( \delta \pi ^{-1},x \right) \right|\frac{x^{\#\pi}}{D^{\#\pi}}D^{\#\left( \pi \alpha ^{-1} \right)}\left| \mathrm{Wg}\left( \alpha \beta ^{-1},x \right) \right|\frac{x^{\#\beta}}{D^{\#\beta}}D^{\#\left( \beta \delta ^{-1} \right)}}}}}
\\
&\leqslant \sum_{\delta \in S_k\backslash \{e\}}{\sum_{\pi \in \mathcal{E} _k}{\sum_{\alpha \in S_k\backslash \{e\}}{\sum_{\beta \in S_k}{\left| \mathrm{Wg}\left( \delta \pi ^{-1},x \right) \right|\frac{x^{\#\pi}}{D^{\#\pi}}D^{k-\#\alpha +\#\pi}\left| \mathrm{Wg}\left( \alpha \beta ^{-1},x \right) \right|\frac{x^{\#\beta}}{D^{\#\beta}}D^{k-\#\beta +\#\delta}}}}}
\\
&=\sum_{\delta \in S_k\backslash \{e\}}{\sum_{\pi \in \mathcal{E} _k}{\sum_{\alpha \in S_k\backslash \{e\}}{\sum_{\beta \in S_k}{\left| \mathrm{Wg}\left( \delta \pi ^{-1},x \right) \right|x^{\#\pi}D^{k-\#\alpha}\left| \mathrm{Wg}\left( \alpha \beta ^{-1},x \right) \right|\frac{x^{\#\beta}}{D^{2\#\beta}}D^{k+\#\delta}}}}}\\
&\leqslant \sum_{\delta \in S_k\backslash \{e\}}{\sum_{\pi \in \mathcal{E} _k}{\sum_{\alpha \in S_k\backslash \{e\}}{\sum_{\beta \in S_k}{\omega ^2\frac{D^{\#\delta}}{D^k}\frac{\left| \mathrm{Moeb}\left( \delta \pi ^{-1} \right) \right|}{x^{k-\#\delta \pi ^{-1}}}\frac{x^{\#\pi}}{x^k}\frac{D^k}{D^{\#\alpha}}\frac{\left| \mathrm{Moeb}\left( \alpha \beta ^{-1} \right) \right|}{x^{k-\#\alpha \beta ^{-1}}}\frac{x^{\#\beta}}{x^k}\frac{D^{2k}}{D^{2\#\beta}}}}}}
\\
&=\sum_{\delta \pi \in S_k\backslash \{e\}}{\frac{D^{\#\delta \pi}}{D^k}\frac{\left| \mathrm{Moeb}\left( \delta \right) \right|}{x^{k-\#\delta}}\sum_{\pi \in \mathcal{E} _k}{\sum_{\alpha \in S_k\backslash \{e\}}{\sum_{\beta \in S_k}{\omega ^2\frac{x^{\#\pi}}{x^k}\frac{D^k}{D^{\#\alpha}}\frac{\left| \mathrm{Moeb}\left( \alpha \beta ^{-1} \right) \right|}{x^{k-\#\alpha \beta ^{-1}}}\frac{x^{\#\beta}}{x^k}\frac{D^{2k}}{D^{2\#\beta}}}}}}
\\
&\leqslant \sum_{\delta \pi \in S_k\backslash \{e\}}{\frac{D^{\#\pi}}{D^k}\frac{\left| \mathrm{Moeb}\left( \delta \right) \right|}{\left( x/D \right) ^{k-\#\delta}}\sum_{\pi \in \mathcal{E} _k}{\sum_{\alpha \in S_k\backslash \{e\}}{\sum_{\beta \in S_k}{\omega ^2\frac{x^{\#\pi}}{x^k}\frac{D^k}{D^{\#\alpha}}\frac{\left| \mathrm{Moeb}\left( \alpha \beta ^{-1} \right) \right|}{x^{k-\#\alpha \beta ^{-1}}}\frac{x^{\#\beta}}{x^k}\frac{D^{2k}}{D^{2\#\beta}}}}}}
\\
&=\sum_{\delta \in S_k\backslash \{\pi ^{-1}\}}{\frac{D^{\#\pi}}{D^k}\frac{\left| \mathrm{Moeb}\left( \delta \right) \right|}{\left( x/D \right) ^{k-\#\delta}}\sum_{\pi \in \mathcal{E} _k}{\sum_{\alpha \in S_k\backslash \{e\}}{\sum_{\beta \in S_k}{\omega ^2\frac{x^{\#\pi}}{x^k}\frac{D^k}{D^{\#\alpha}}\frac{\left| \mathrm{Moeb}\left( \alpha \beta ^{-1} \right) \right|}{x^{k-\#\alpha \beta ^{-1}}}\frac{x^{\#\beta}}{x^k}\frac{D^{2k}}{D^{2\#\beta}}}}}}
\\
&\leqslant \sum_{l=0}^{k-1}{\frac{1}{l!}\left( \frac{2k^2D}{x} \right) ^l\sum_{\pi \in \mathcal{E} _k}{\sum_{\alpha \in S_k\backslash \{e\}}{\sum_{\beta \in S_k}{\omega ^2\frac{\left( xD \right) ^{\#\pi}}{\left( xD \right) ^k}\frac{D^k}{D^{\#\alpha}}\frac{\left| \mathrm{Moeb}\left( \alpha \beta ^{-1} \right) \right|}{x^{k-\#\alpha \beta ^{-1}}}\frac{x^{\#\beta}}{x^k}\frac{D^{2k}}{D^{2\#\beta}}}}}}
\\
&=\omega ^2\left[ 1+\mathcal{O} \left( \frac{2k^2}{d} \right) \right] \sum_{\pi \in \mathcal{E} _k}{\frac{\left( xD \right) ^{\#\pi}}{\left( xD \right) ^k}\sum_{\beta \in S_k}{\sum_{\alpha \in S_k\backslash \{e\}}{\frac{D^k}{D^{\#\alpha}}\frac{\left| \mathrm{Moeb}\left( \alpha \beta ^{-1} \right) \right|}{x^{k-\#\alpha \beta ^{-1}}}}\frac{1}{\left( x/D^2 \right) ^{k-\#\beta}}}}
\\
&=\omega ^2\left[ 1+\mathcal{O} \left( \frac{2k^2}{d} \right) \right] \sum_{\pi \in \mathcal{E} _k}{\frac{\left( xD \right) ^{\#\pi}}{\left( xD \right) ^k}\sum_{\beta \in S_k}{\sum_{\alpha \beta \in S_k\backslash \{e\}}{\frac{D^k}{D^{\#\alpha \beta}}\frac{\left| \mathrm{Moeb}\left( \alpha \right) \right|}{x^{k-\#\alpha}}}\frac{1}{\left( x/D^2 \right) ^{k-\#\beta}}}}
\\
&\leqslant \omega ^2\left[ 1+\mathcal{O} \left( \frac{2k^2}{d} \right) \right] \sum_{\pi \in \mathcal{E} _k}{\frac{\left( xD \right) ^{\#\pi}}{\left( xD \right) ^k}\sum_{\beta \in S_k}{\sum_{\alpha \in S_k\backslash \{\beta ^{-1}\}}{\frac{D^k}{D^{\#\alpha +\#\beta -k}}\frac{\left| \mathrm{Moeb}\left( \alpha \right) \right|}{x^{k-\#\alpha}}}\frac{1}{\left( x/D^2 \right) ^{k-\#\beta}}}}
\\
&=\omega ^2\left[ 1+\mathcal{O} \left( \frac{2k^2}{d} \right) \right] \sum_{\pi \in \mathcal{E} _k}{\frac{\left( xD \right) ^{\#\pi}}{\left( xD \right) ^k}\sum_{\beta \in S_k}{\sum_{\alpha \in S_k\backslash \{\beta ^{-1}\}}{\frac{\left| \mathrm{Moeb}\left( \alpha \right) \right|}{\left( x/D \right) ^{k-\#\alpha}}}\frac{1}{\left( x/D^3 \right) ^{k-\#\beta}}}}.
\end{align}
Notice that the double summation over $\alpha$ and $\beta$ in the final line is identical to the expression bounded in Eq.~(\ref{S158}). Hence, the second term is bounded by:
\begin{align}
    \mathrm{tr}\left( \mathrm{A}^{\prime}\times {\mathrm{Te}^{\prime}}^{n-1} \right) &\leqslant \omega ^2\left[ 1+\mathcal{O} \left( \frac{2k^2}{d} \right) \right] \mathcal{O} \left( \frac{k^2D^2}{d} \right) \sum_{\pi \in \mathcal{E} _k}{\frac{\left( xD \right) ^{\#\pi}}{\left( xD \right) ^k}}
\\
&=\mathcal{O} \left( \frac{k^2D^2}{d} \right) \sum_{\pi \in \mathcal{E} _k}{\frac{\left( xD \right) ^{\#\pi}}{\left( xD \right) ^k}},
\end{align}
valid under the condition $\frac{k^2D^2}{d}\ll 1$.

Next, we establish the upper bound for the third residual term in Eq.~(\ref{ATen__}). The element-wise upper bound for the bra $\langle u|$ is derived as:
\begin{align}
    \langle u|_{\alpha}&\leqslant \omega \sum_{\sigma \in S_k}{\frac{\left| \mathrm{Moeb}\left( e\sigma ^{-1} \right) \right|}{\left( Dd \right) ^{k-\#e\sigma ^{-1}}}}\frac{1}{d^{k-\#\sigma}}\frac{D^{\#\sigma \alpha ^{-1}}}{D^k}
\\
&\leqslant \omega \sum_{\sigma \in S_k}{\frac{\left| \mathrm{Moeb}\left( \sigma \right) \right|}{\left( Dd \right) ^{k-\#\sigma}}}\frac{1}{d^{k-\#\sigma}}\frac{D^{k-\#\sigma +\#\alpha}}{D^k}
\\
&=\omega \sum_{\sigma \in S_k}{\frac{\left| \mathrm{Moeb}\left( \sigma \right) \right|}{\left( d^2 \right) ^{k-\#\sigma}}}\frac{D^{\#\alpha}}{D^k}
\\
&\leqslant \omega \sum_{l=0}^{k-1}{\frac{1}{l!}\left( \frac{2k^2}{d^2} \right) ^l}\frac{D^{\#\alpha}}{D^k}
\\
&=\left[ 1+\mathcal{O} \left( \frac{k^2}{d^2} \right) +\mathcal{O} \left( \frac{k^{7/2}}{D^2d^2} \right) \right] \frac{D^{\#\alpha}}{D^k}.
\end{align}
Consequently, the third term is bounded by:
\begin{align}
    \left( \langle u|\sum\nolimits_{i=n-1}^{\infty}{{\mathrm{Te}^{\prime}}^i} \right) _{\tau}&\leqslant \left( \overline{\langle u|}\sum\nolimits_{i=n-1}^{\infty}{\left( \overline{\mathrm{Te}^{\prime}} \right) ^i} \right) _{\tau}
\\
&=\mathcal{O} \left[ \left( \frac{k^2D^2}{d} \right) ^{n-2} \right] \left( \overline{\langle u|}\overline{\mathrm{Te}^{\prime}} \right) _{\tau}
\\
&=\mathcal{O} \left[ \left( \frac{k^2D^2}{d} \right) ^{n-2} \right] \sum_{\alpha \in S_k\backslash\{e\}}{\frac{D^{\#\alpha}}{D^k}\sum_{\beta \in S_k}{\omega}}\frac{\left| \mathrm{Moeb}\left( \alpha \beta ^{-1} \right) \right|}{x^{k-\#\alpha \beta ^{-1}}}\frac{x^{\#\beta}}{x^k}\frac{D^{\#\beta \tau ^{-1}}}{D^{\#\beta}}
\\
&\leqslant \mathcal{O} \left[ \left( \frac{k^2D^2}{d} \right) ^{n-2} \right] \sum_{\alpha \in S_k\backslash \{e\}}{\sum_{\beta \in S_k}{\frac{D^{\#\alpha}}{D^k}\omega}}\frac{\left| \mathrm{Moeb}\left( \alpha \beta ^{-1} \right) \right|}{x^{k-\#\alpha \beta ^{-1}}}\frac{x^{\#\beta}}{x^k}\frac{D^{k-\#\beta +\#\tau}}{D^{\#\beta}}
\\
&=\mathcal{O} \left[ \left( \frac{k^2D^2}{d} \right) ^{n-2} \right] \overline{\langle u|}_{\tau}\sum_{\alpha \in S_k\backslash \{e\}}{\sum_{\beta \in S_k}{\omega \frac{\left| \mathrm{Moeb}\left( \alpha \beta ^{-1} \right) \right|}{x^{k-\#\alpha \beta ^{-1}}}\frac{x^{\#\beta}}{x^k}\frac{D^{k-\#\beta +\#\alpha}}{D^{\#\beta}}}}
\\
&\leqslant \mathcal{O} \left[ \left( \frac{k^2D^2}{d} \right) ^{n-2} \right] \overline{\langle u|}_{\tau}\sum_{\alpha \in S_k\backslash \{\beta ^{-1}\}}{\sum_{\beta \in S_k}{\omega \frac{\left| \mathrm{Moeb}\left( \alpha \right) \right|}{\left( xD \right) ^{k-\#\alpha}}\frac{1}{\left( x/D^3 \right) ^{k-\#\beta}}}}
\\
&=\mathcal{O} \left[ \left( \frac{k^2D^2}{d} \right) ^{n-1} \right] \overline{\langle u|}_{\tau},
\end{align}
where the final step invokes the bound from Eq.~(\ref{Sk_esum}). Combining this result allows us to bound the full trace: 
\begin{align}
    \mathrm{tr}\left( \langle u|\sum\nolimits_{i=n-1}^{\infty}{{\mathrm{Te}^{\prime}}^i\mathrm{A}^{''}|e\rangle} \right) &\leqslant \mathcal{O} \left[ \left( \frac{k^2D^2}{d} \right) ^{n-1} \right] \overline{\langle u|}\overline{\mathrm{A}^{''}}|e\rangle 
\\
&=\mathcal{O} \left[ \left( \frac{k^2D^2}{d} \right) ^{n-1} \right] \sum_{\delta \in S_k\backslash \{e\}}{\frac{D^{\#\delta}}{D^k}\sum_{\pi \in \mathcal{E} _k}{\omega \frac{\left| \mathrm{Moeb}\left( \delta \pi ^{-1} \right) \right|}{x^{k-\#\delta \pi ^{-1}}}\frac{x^{\#\pi}}{x^k}\frac{D^{\#\pi e^{-1}}}{D^{\#\pi}}}}
\\
&=\mathcal{O} \left[ \left( \frac{k^2D^2}{d} \right) ^{n-1} \right] \sum_{\pi \in \mathcal{E} _k}{\frac{x^{\#\pi}}{x^k}}\sum_{\delta \pi \in S_k\backslash \{e\}}{\omega \frac{D^{\#\delta \pi}}{D^k}}\frac{\left| \mathrm{Moeb}\left( \delta \right) \right|}{x^{k-\#\delta}}
\\
&\leqslant \mathcal{O} \left[ \left( \frac{k^2D^2}{d} \right) ^{n-1} \right] \sum_{\pi \in \mathcal{E} _k}{\frac{\left( xD \right) ^{\#\pi}}{\left( xD \right) ^k}}\sum_{\delta \in S_k\backslash \{\pi ^{-1}\}}{\omega}\frac{\left| \mathrm{Moeb}\left( \delta \right) \right|}{\left( x/D \right) ^{k-\#\delta}}
\\
&\leqslant \mathcal{O} \left[ \left( \frac{k^2D^2}{d} \right) ^{n-1} \right] \sum_{\pi \in \mathcal{E} _k}{\frac{\left( xD \right) ^{\#\pi}}{\left( xD \right) ^k}}\sum_{l=0}^{k-1}{\omega \frac{1}{l!}\left( \frac{2k^2}{d} \right) ^l}
\\
&=\mathcal{O} \left[ \left( \frac{k^2D^2}{d} \right) ^{n-1} \right] \mathcal{O} \left[ 1+\left( \frac{k^2}{d} \right) \right] \sum_{\pi \in \mathcal{E} _k}{\frac{1}{\left( dD^2 \right) ^{k-\#\pi}}},
\end{align}
valid under the condition $\frac{k^2D^2}{d}\ll 1$.

Combining the bounds derived above, yields
\begin{align}
 \mathrm{tr}\left( \mathrm{A}\times \mathrm{Te}^{n-1} \right)  &=\sum_{\pi \in \mathcal{E} _k}{\frac{1}{\left( dD^2 \right) ^{k-\#\pi}}} \left[ 1+\mathcal{O}\left( \frac{k^2}{dD^2} \right) +\mathcal{O}\left( \frac{k^2D^2}{d} \right) +\mathcal{O}\left( \frac{k^2D^2}{d} \right) ^{n-1} \right] 
\\
&=\sum_{\pi \in \mathcal{E} _k}{\frac{1}{\left( dD^2 \right) ^{k-\#\pi}}}\left[1+\mathcal{O}  \left( \frac{k^2D^2}{d} \right)\right] 
\\
&=\frac{k!}{\left( k/2 \right) !2^{k/2}}\frac{1}{\left( dD^2 \right) ^{k/2}} \left[ 1+\mathcal{O}\left( \frac{k^2D^2}{d} \right) +\mathcal{O}\left( \frac{k^2}{dD^2} \right) \right] 
\\
&=\frac{k!}{\left( k/2 \right) !2^{k/2}}\frac{1}{\left( dD^2 \right) ^{k/2}}\left[1+\mathcal{O} \left( \frac{k^2D^2}{d} \right)\right].
\end{align} 
This result rigorously demonstrates that in the regime where $k^2D^2/d\ll 1$, the $k$th-order moment of the distribution $C(\hat{O}_l)$ deviates from that of the pure Gaussian distribution $\mathcal{N}\left( 0,\frac{1}{dD^2} \right) $ by at most a residual correction of $\mathcal{O} \left( \frac{k^2D^2}{d} \right)$.

\subsection{Residual Terms for $C( \hat{O}_l )$ in the Large-$(D,n)$ Regime with $D^2\gg d^n$}

In this subsection, we derive rigorous upper bounds for the residual terms of $\mathbb{E}(C(\hat{O}_l)^k)$ in the large-$D$ regime satisfying $D^2\gg d^n$ for a general traceless operator $\hat{O}_l$, confirming that their ratio to the leading-order contribution vanishes as $(D,n) \to \infty$. As noted in Sec.~\ref{ourProofMethod}, this finite-parameter correction also fully captures the departure of the sequence $\mathscr{C}_l$ from the GP limit in the large $(D,n)$ regime. What we consider is:
\begin{equation}
    \mathbb{E} \left( C\left( \hat{O}_l \right) ^k \right) =\mathrm{tr}\left( \mathrm{A}\times \mathrm{Te}^{n-1} \right) ,
\end{equation}
where $\mathrm{A}_{_{\delta ,\tau}}^{\left( s \right)}=\sum_{\pi \in S_k}{\mathrm{Wg}\left( \delta \pi ^{-1},dD \right) \mathrm{tr}\left( \mathrm{P}_{\pi}O_s \right) D^{\#\left( \pi \tau ^{-1} \right)}}$ corresponds to a general traceless operator.
Applying the upper bound of the Weingarten factor from Eq.~(\ref{upperboundofwg}) yields:
\begin{align}
    \mathrm{Te}_{\delta ,\alpha}&=\sum_{\sigma \in S_k}{\mathrm{Wg}\left( \delta \sigma ^{-1},dD \right) d^{\#\sigma}D^{\#(\sigma \alpha ^{-1})}}
\\
&\leqslant \sum_{\sigma \in S_k}{\omega \frac{\left| \mathrm{Moeb}\left( \delta \sigma ^{-1} \right) \right|}{\left( dD \right) ^{k-\#\delta \sigma ^{-1}}}\frac{1}{d^{k-\#\sigma}}\frac{1}{D^{k-\#\sigma \alpha ^{-1}}}},
\end{align}
where $\omega =\frac{1}{1-6k^{7/2}/\left( D^2d^2 \right)}$, and $\delta ,\alpha \in S_k$. Furthermore, the local dimension $d$ obeys the inequality:
\begin{equation}
\frac{d^{\#(\delta \sigma ^{-1})+\#\sigma}}{d^{2k}}\leqslant \frac{d^{k-\#\sigma +\#\delta +\#\sigma}}{d^{2k}}=\frac{d^{\#\delta}}{d^k}.
\end{equation}
Consequently,
\begin{equation}
    \mathrm{Te}_{\delta ,\alpha}\leqslant \frac{d^{\#\delta}}{d^k}\sum_{\sigma \in S_k}{\omega \frac{\left| \mathrm{Moeb}\left( \delta \sigma ^{-1} \right) \right|}{D^{k-\#\delta \sigma ^{-1}}}\frac{1}{D^{k-\#\sigma \alpha ^{-1}}}}.
\end{equation}
We define the matrix $B$ as:
\begin{equation}
    B_{\delta ,\alpha}=\sum_{\sigma \in S_k}{\omega \frac{\left| \mathrm{Moeb}\left( \delta \sigma ^{-1} \right) \right|}{D^{k-\#\delta \sigma ^{-1}}}\frac{1}{D^{k-\#\sigma \alpha ^{-1}}}},
\end{equation}
so that the trace expansion becomes:
\begin{align}
\mathrm{tr}\left( \mathrm{A}\times \mathrm{Te}^{n-1} \right) &\leqslant\mathrm{tr}\left( \mathrm{Te}^{n/2-1}\times \mathrm{A}\times \mathrm{Te}^{n/2} \right) 
\\
&=\mathrm{tr}\left( \left( XB \right) ^{n/2-1}\times \mathrm{A}\times \left( XB \right) ^{n/2} \right),   \label{ATe_nlocalD_2d_n}
\end{align}
where $X=\mathrm{diag}\left( \frac{d^{\#\delta}}{d^k} \right) $. Here we take $n$ to be even; the case when $n$ is odd follows analogously and does not affect the result. We observe that:
\begin{align}
    X B&=\frac{d^{\#\delta}}{d^k}\sum_{\sigma \in S_k}{\omega \frac{\left| \mathrm{Moeb}\left( \delta \sigma ^{-1} \right) \right|}{D^{k-\#\delta \sigma ^{-1}}}\frac{1}{D^{k-\#\sigma \alpha ^{-1}}}}
\\
&=\sum_{\sigma \in S_k}{\omega \left| \mathrm{Moeb}\left( \delta \sigma ^{-1} \right) \right|\frac{d^{\#\delta -\#\sigma}}{D^{k-\#\delta \sigma ^{-1}}}\frac{d^{\#\sigma -\#\alpha}}{D^{k-\#\sigma \alpha ^{-1}}}}\frac{d^{\#\alpha}}{d^k}
\\
&\leqslant \sum_{\sigma \in S_k}{\omega \left| \mathrm{Moeb}\left( \delta \sigma ^{-1} \right) \right|\frac{d^{k-\#\delta \sigma ^{-1}}}{D^{k-\#\delta \sigma ^{-1}}}\frac{d^{k-\#\sigma \alpha ^{-1}}}{D^{k-\#\sigma \alpha ^{-1}}}}\frac{d^{\#\alpha}}{d^k}
\\
&=\sum_{\sigma \in S_k}{\omega \left| \mathrm{Moeb}\left( \delta \sigma ^{-1} \right) \right|\frac{1}{\left( D/d \right) ^{k-\#\delta \sigma ^{-1}}}\frac{1}{\left( D/d \right) ^{k-\#\sigma \alpha ^{-1}}}\frac{d^{\#\alpha}}{d^k}}
\\
&=B^{\left( 1 \right)}X,
\end{align}
where the third line invokes the triangle inequality $\left. \left| \#\delta -\#\sigma \right. \right|\leqslant k-\#\delta \sigma ^{-1}$. We further define:
\begin{equation}
    B^{\left( i \right)}=\sum_{\sigma \in S_k}{\omega \left| \mathrm{Moeb}\left( \delta \sigma ^{-1} \right) \right|\frac{1}{\left( D/d^i \right) ^{k-\#\delta \sigma ^{-1}}}\frac{1}{\left( D/d^i \right) ^{k-\#\sigma \alpha ^{-1}}}}.
\end{equation}

Similarly, evaluating the product $BX$ yields:
\begin{align}
    BX&=\sum_{\sigma \in S_k}{\omega \frac{\left| \mathrm{Moeb}\left( \delta \sigma ^{-1} \right) \right|}{D^{k-\#\delta \sigma ^{-1}}}\frac{1}{D^{k-\#\sigma \alpha ^{-1}}}}\frac{d^{\#\alpha}}{d^k}
\\
&=\frac{d^{\#\delta}}{d^k}\sum_{\sigma \in S_k}{\omega \left| \mathrm{Moeb}\left( \delta \sigma ^{-1} \right) \right|\frac{d^{\#\sigma -\#\delta}}{D^{k-\#\delta \sigma ^{-1}}}\frac{d^{\#\alpha -\#\sigma}}{D^{k-\#\sigma \alpha ^{-1}}}}
\\
&\leqslant \frac{d^{\#\delta}}{d^k}\sum_{\sigma \in S_k}{\omega \left| \mathrm{Moeb}\left( \delta \sigma ^{-1} \right) \right|\frac{d^{k-\#\delta \sigma ^{-1}}}{D^{k-\#\delta \sigma ^{-1}}}\frac{d^{k-\#\sigma \alpha ^{-1}}}{D^{k-\#\sigma \alpha ^{-1}}}}
\\
&=X B^{\left( 1 \right)}
\end{align}

The column sums of $B^{\left( i \right)}$ can be strictly upper-bounded as follows:
\begin{align}
    \sum_{\delta \in S_k}{{B^{\left( i \right)}}_{\delta ,\alpha}}&=\sum_{\delta \in S_k}{\sum_{\sigma \in S_k}{\omega \left| \mathrm{Moeb}\left( \delta \sigma ^{-1} \right) \right|\frac{1}{\left( D/d^i \right) ^{k-\#\delta \sigma ^{-1}}}\frac{1}{\left( D/d^i \right) ^{k-\#\sigma \alpha ^{-1}}}}}
\\
&=\sum_{\sigma \in S_k}{\sum_{\delta \sigma \in S_k}{\omega \left| \mathrm{Moeb}\left( \delta \right) \right|\frac{1}{\left( D/d^i \right) ^{k-\#\delta}}\frac{1}{\left( D/d^i \right) ^{k-\#\sigma \alpha ^{-1}}}}}
\\
&=\sum_{\sigma \in S_k}{\sum_{\delta \in S_k}{\omega \left| \mathrm{Moeb}\left( \delta \right) \right|\frac{1}{\left( D/d^i \right) ^{k-\#\delta}}\frac{1}{\left( D/d^i \right) ^{k-\#\sigma \alpha ^{-1}}}}}
\\
&\leqslant \omega \sum_{\sigma \in S_k}{\sum_{l=0}^{k-1}{\frac{1}{l!}}}\left( \frac{2k^2d^i}{D} \right) ^l\frac{1}{\left( D/d^i \right) ^{k-\#\sigma \alpha ^{-1}}}
\\
&=\omega \left[ 1+\mathcal{O} \left( \frac{2k^2d^i}{D} \right) \right] \sum_{\sigma \in S_k}{\frac{1}{\left( D/d^i \right) ^{k-\#\sigma \alpha ^{-1}}}}
\\
&\leqslant \omega \left[ 1+\mathcal{O} \left( \frac{2k^2d^i}{D} \right) \right] \sum_{l=0}^{k-1}{\frac{1}{l!}}\left( \frac{k^2d^i/2}{D} \right) ^l
\\
&= \left[ 1+\mathcal{O} \left( \frac{5k^2d^i/2}{D} \right) \right],
\end{align}
where we have applied the combinatorial bounds from Eq.~(\ref{upperboundofMoeb}) and Eq.~(\ref{upperboundofpermutation}). Returning to Eq.~(\ref{ATe_nlocalD_2d_n}), we find:
\begin{align}
    \mathrm{tr}\left( \mathrm{A}\times \mathrm{Te}^{n-1} \right) &\leqslant\mathrm{tr}\left( \left( XB \right) ^{n/2-1}\times \mathrm{A}\times \left( XB \right) ^{n/2} \right) 
\\
&\leqslant \mathrm{tr}\left( B^{\left( 1 \right)}B^{\left( 2 \right)}\cdots B^{\left( n/2-1 \right)}\times {X}^{n/2-1}\times \mathrm{A}\times {X}^{n/2}\times B^{\left( n/2-1 \right)}\cdots B^{\left( 1 \right)}B \right) 
\\
&=\mathrm{tr}\left( {X}^{n/2-1}\mathrm{A}{X}^{n/2}\times B^{\left( n/2-1 \right)}\cdots B^{\left( 1 \right)}B\times B^{\left( 1 \right)}B^{\left( 2 \right)}\cdots B^{\left( n/2-1 \right)} \right) 
\\
&\leqslant \sum_{\delta \in S_k}{\max_{\alpha \in S_k} \left\{ \left( {X}^{n/2-1}\mathrm{A}{X}^{n/2} \right) _{\delta ,\alpha} \right\}}
\\
&\times \prod_{i=0}^{n/2-1}{\left[ 1+\mathcal{O} \left( \frac{5k^2d^i/2}{D} \right) \right] \prod_{i=1}^{n/2-1}{\left[ 1+\mathcal{O} \left( \frac{5k^2d^i/2}{D} \right) \right]}}
\\
&=\sum_{\delta \in S_k}{\max_{\alpha \in S_k} \left\{ \left( {X}^{n/2-1}\mathrm{A}{X}^{n/2} \right) _{\delta ,\alpha} \right\}}\times \left[ 1+\mathcal{O} \left( \frac{k^2d^{n/2}}{D} \right) \right], 
\end{align}
where the fourth line exploits the submultiplicativity of the induced matrix 1-norm, $\|AB\|_1 \le \|A\|_1 \|B\|_1$, with $\parallel A\parallel _1=\underset{j}{\max}\sum_i{\left| \left. A_{ij} \right| \right.}$. This bound is valid under the condition $k^2d^{n/2}/D\ll 1$.

Further evaluation yields:
\begin{align}
    &\sum_{\delta \in S_k}{\max_{\alpha \in S_k} \left\{ \left( X^{n/2-1}\mathrm{A}X^{n/2} \right) _{\delta ,\alpha} \right\}}\times \left[ 1+\mathcal{O} \left( \frac{k^2d^{n/2}}{D} \right) \right] 
\\
&=\sum_{\delta \in S_k}{\max_{\alpha \in S_k} \left\{ \frac{d^{\left( \frac{1}{2}n-1 \right) \#\delta}}{d^{\left( \frac{1}{2}n-1 \right) k}}\sum_{\pi \in S_k}{\mathrm{Wg}\left( \delta \pi ^{-1},dD \right){\mathrm{tr}\left( \mathrm{P}_{\pi}O \right)}D^{\#(\pi \sigma ^{-1})}}\frac{d^{\frac{1}{2}n\#\alpha}}{d^{\frac{1}{2}nk}} \right\}}\left[ 1+\mathcal{O} \left( \frac{k^2d^{n/2}}{D} \right) \right]  
\\
&\leqslant \sum_{\delta \in S_k}{\max_{\alpha \in S_k} \left\{ \frac{d^{\left( \frac{1}{2}n-1 \right) \#\delta}}{d^{\left( \frac{1}{2}n-1 \right) k}}\sum_{\pi \in S_k}{\omega \frac{\left| \mathrm{Moeb}\left( \delta \pi ^{-1} \right) \right|}{\left( Dd \right) ^{k-\#\delta \pi ^{-1}}}\frac{{\mathrm{tr}\left( \mathrm{P}_{\pi}O \right)}}{d^k}\frac{1}{D^{k-\#\pi \alpha ^{-1}}}}\frac{d^{\frac{1}{2}n\#\alpha}}{d^{\frac{1}{2}nk}} \right\}}\left[ 1+\mathcal{O} \left( \frac{k^2d^{n/2}}{D} \right) \right] 
\\
&=\sum_{\delta \in S_k}{\max_{\alpha \in S_k} \left\{ \frac{d^{\left( \frac{1}{2}n-1 \right) \#\delta}}{d^{\left( \frac{1}{2}n-1 \right) k}}\sum_{\pi \in S_k}{\omega \frac{d^{\frac{1}{2}n\#\pi}}{d^{\frac{1}{2}nk}}\frac{{\mathrm{tr}\left( \mathrm{P}_{\pi}O \right)}}{d^k}\frac{\left| \mathrm{Moeb}\left( \delta \pi ^{-1} \right) \right|}{\left( Dd \right) ^{k-\#\delta \pi ^{-1}}}\frac{d^{\frac{1}{2}n\left( \#\alpha -\#\pi \right)}}{D^{k-\#\pi \alpha ^{-1}}}} \right\}}\left[ 1+\mathcal{O} \left( \frac{k^2d^{n/2}}{D} \right) \right] 
\\
&\leqslant \sum_{\delta \in S_k}{\max_{\alpha \in S_k} \left\{ \frac{d^{\left( \frac{1}{2}n-1 \right) \#\delta}}{d^{\left( \frac{1}{2}n-1 \right) k}}\sum_{\pi \in S_k}{\omega \frac{d^{\frac{1}{2}n\#\pi}}{d^{\frac{1}{2}nk}}\frac{{\mathrm{tr}\left( \mathrm{P}_{\pi}O \right)}}{d^k}\frac{\left| \mathrm{Moeb}\left( \delta \pi ^{-1} \right) \right|}{\left( Dd \right) ^{k-\#\delta \pi ^{-1}}}\frac{1}{\left( D/d^{n/2} \right) ^{k-\#\pi \alpha ^{-1}}}} \right\}}\left[ 1+\mathcal{O} \left( \frac{k^2d^{n/2}}{D} \right) \right] 
\\
&=\sum_{\delta \in S_k}{\frac{d^{\left( \frac{1}{2}n-1 \right) \#\delta}}{d^{\left( \frac{1}{2}n-1 \right) k}}\sum_{\pi \in S_k}{\omega \frac{d^{\frac{1}{2}n\#\pi}}{d^{\frac{1}{2}nk}}\frac{{\mathrm{tr}\left( \mathrm{P}_{\pi}O \right)}}{d^k}\frac{\left| \mathrm{Moeb}\left( \delta \pi ^{-1} \right) \right|}{\left( Dd \right) ^{k-\#\delta \pi ^{-1}}}}}\left[ 1+\mathcal{O} \left( \frac{3k^2d^{n/2}/2}{D} \right) \right] 
\\
&=\omega \left[ 1+\mathcal{O} \left( \frac{3k^2d^{n/2}/2}{D} \right) \right] \sum_{\pi \in S_k}{\frac{d^{\frac{1}{2}n\#\pi}}{d^{\frac{1}{2}nk}}\frac{{\mathrm{tr}\left( \mathrm{P}_{\pi}O \right)}}{d^k}\sum_{\delta \in S_k}{\frac{d^{\left( \frac{1}{2}n-1 \right) \#\delta}}{d^{\left( \frac{1}{2}n-1 \right) k}}\frac{\left| \mathrm{Moeb}\left( \delta \pi ^{-1} \right) \right|}{\left( Dd \right) ^{k-\#\delta \pi ^{-1}}}}}
\\
&=\omega \left[ 1+\mathcal{O} \left( \frac{3k^2d^{n/2}/2}{D} \right) \right] \sum_{\pi \in S_k}{\frac{d^{\frac{1}{2}n\#\pi}}{d^{\frac{1}{2}nk}}\frac{{\mathrm{tr}\left( \mathrm{P}_{\pi}O \right)}}{d^k}\sum_{\delta \pi \in S_k}{\frac{d^{\left( \frac{1}{2}n-1 \right) \#\delta \pi}}{d^{\left( \frac{1}{2}n-1 \right) k}}\frac{\left| \mathrm{Moeb}\left( \delta \right) \right|}{\left( Dd \right) ^{k-\#\delta}}}}
\\
&\leqslant \omega \left[ 1+\mathcal{O} \left( \frac{3k^2d^{n/2}/2}{D} \right) \right] \sum_{\pi \in S_k}{\frac{d^{\frac{1}{2}n\#\pi}}{d^{\frac{1}{2}nk}}\frac{{\mathrm{tr}\left( \mathrm{P}_{\pi}O \right)}}{d^k}\frac{d^{\left( \frac{1}{2}n-1 \right) \#\pi}}{d^{\left( \frac{1}{2}n-1 \right) k}}\sum_{\delta \pi \in S_k}{\frac{\left| \mathrm{Moeb}\left( \delta \right) \right|}{\left( Dd/d^{n/2-1} \right) ^{k-\#\delta}}}}
\\
&=\omega \left[ 1+\mathcal{O} \left( \frac{3k^2d^{n/2}/2}{D} \right) \right] \sum_{\pi \in S_k}{\frac{d^{\left( n-1 \right) \#\pi}}{d^{\left( n-1 \right) k}}\frac{{\mathrm{tr}\left( \mathrm{P}_{\pi}O \right)}}{d^k}\sum_{\delta \in S_k}{\frac{\left| \mathrm{Moeb}\left( \delta \right) \right|}{\left( Dd/d^{n/2-1} \right) ^{k-\#\delta}}}}
\\
&\leqslant \omega \left[ 1+\mathcal{O} \left( \frac{3k^2d^{n/2}/2}{D} \right) \right] \sum_{\pi \in S_k}{\frac{d^{\left( n-1 \right) \#\pi}}{d^{\left( n-1 \right) k}}\frac{{\mathrm{tr}\left( \mathrm{P}_{\pi}O \right)}}{d^k}\sum_{l=0}^{k-1}{\frac{1}{l!}\left( \frac{2k^2d^{n/2-2}}{D} \right) ^l}}
\\
&=\left[ 1+\mathcal{O} \left( \frac{k^2d^{n/2}}{D} \right) \right] \sum_{\pi \in S_k}{\frac{d^{\left( n-1 \right) \#\pi}}{d^{\left( n-1 \right) k}}\frac{\mathrm{tr}\left( \mathrm{P}_{\pi}O \right)}{d^k}},\label{S286}
\end{align}
valid under the condition $k^2d^{n/2}/D\ll 1$.
Since $O = \hat{O}^{(1)} \otimes \cdots \otimes \hat{O}^{(1)}$ with $\operatorname{tr}(\hat{O}^{(1)}) = 0$, $\operatorname{tr}(\mathrm{P}_\pi O)$ is non-vanishing only if $\pi$ has no fixed points. We denote the set  by $\mathcal{D}_k = \{\sigma \in S_k \mid \sigma(i) \neq i, \, \forall i\}$. The set $\mathcal{D}_k$ can be partitioned according to the number of disjoint cycles into $\mathcal{D}_{k, k/2}, \mathcal{D}_{k, k/2 - 1}, \dots, \mathcal{D}_{k, 1}$, where $\vert{}\mathcal{D}_{k, k/2}\vert{} = \frac{k!}{(k/2)!\, 2^{k/2}}$.
The cardinality $|\mathcal{D}_{k,c}|$ corresponds to the 2-associated Stirling numbers of the first kind. By established combinatorial properties~\cite{Richard2011Combinatorics}, the sequence $\{|\mathcal{D}_{k,c}|\}$ is strictly log-concave with respect to $c$. This ensures that the growth ratio between adjacent subsets, $|\mathcal{D}_{k, c-1}|/|\mathcal{D}_{k, c}|$, increases monotonically as $c$ decreases. Therefore, the maximum growth ratio across all cycle reduction steps occurs at the initial transition ($c = k/2 \to k/2 - 1$), i.e. $R_k=|\mathcal{D} _{k,k/2-1}|/|\mathcal{D} _{k,k/2}|
$

To explicitly evaluate $R_k$, we compute both cardinalities using the conjugacy class size formula. The subset $\mathcal{D}_{k, k/2-1}$ comprises permutations with exactly $k/2-1$ cycles of length $\ge 2$. This constraint permits only two cycle structures: one $4$-cycle with $(k/2-2)$ $2$-cycles, or two $3$-cycles with $(k/2-3)$ $2$-cycles. Summing their respective conjugacy class sizes gives
\begin{equation}
    |\mathcal{D}_{k, k/2-1}| = \frac{k!}{4 \cdot 2^{k/2-2} (k/2-2)!} + \frac{k!}{18 \cdot 2^{k/2-3} (k/2-3)!}.
\end{equation}

Taking the ratio of these expressions, we obtain
\begin{equation}
    R_k = \frac{k(k-2)}{4} + \frac{k(k-2)(k-4)}{18} = \frac{k(k-2)(2k+1)}{36}.
\end{equation}
Consequently, we establish that 
the ratio between consecutive terms in the sequence $\vert{}\mathcal{D}_{k, k/2}\vert{}, \vert{}\mathcal{D}_{k, k/2 - 1}\vert{}, \dots, \vert{}\mathcal{D}_{k, 1}\vert{}$ does not exceed $\mathcal{O} \left( k^3 \right) $. 
This implies that the summation over subsets other than $\mathcal{D}_{k, k/2}$ in Eq.~(\ref{S286}) is suppressed by a factor of $\mathcal{O} \left( \frac{k^3}{d^n} \right)$. Thus, we obtain

\begin{align}
   \mathrm{tr}\left( \mathrm{A}\times \mathrm{Te}^{n-1} \right) &= \left[ 1+\mathcal{O} \left( \frac{k^2d^{n/2}}{D} \right) \right] \sum_{\pi \in S_k}{\frac{d^{(n-1)\#\pi}}{d^{(n-1)k}}\frac{\mathrm{tr}\left( \mathrm{P}_{\pi}O \right)}{d^k}}
\\
&=\frac{k!}{\left( k/2 \right) !2^{k/2}}\frac{1}{d^{kn/2}}\frac{\mathrm{tr}( \hat{O}^{( 1) 2}) ^{k/2}}{d^{k/2}}\left[ 1+\mathcal{O} \left( \frac{k^2d^{n/2}}{D} \right) \right] \left[ 1+\mathcal{O} \left( \frac{k^3}{d^n} \right) \right]  
\\
&=\frac{k!}{\left( k/2 \right) !2^{k/2}}\frac{1}{d^{kn/2}}\frac{\mathrm{tr}( \hat{O}^{( 1) 2}) ^{k/2}}{d^{k/2}}\left[ 1+\mathcal{O} \left( \frac{k^2d^{n/2}}{D} \right) +\mathcal{O} \left( \frac{k^3}{d^n} \right) \right]  
\end{align}

This derivation reveals that the dominant contribution to $\mathrm{tr}( \mathrm{A}\times \mathrm{Te}^{n-1} )$ arises when $\pi$ consists entirely of disjoint transpositions in $S_k$. Consequently, in the asymptotic regime where $\frac{k^2d^{n/2}}{D}\ll1$ and $\frac{k^3}{d^n}\ll1$, the $k$th-order moment of the distribution $C(\hat{O})$ deviates from that of the pure Gaussian distribution $\mathcal{N}\left( 0,\frac{1}{d^{n}} \right) $ by at most a residual correction of $\mathcal{O} \left( \frac{k^2d^{n/2}}{D} \right) +\mathcal{O} \left( \frac{k^3}{d^n} \right)$.

Similarly, if we consider operators in the space of Clifford generators, we can obtain a more compact finite correction: in the asymptotic regime where $\frac{k^2d^{n/2}}{D}\ll1$ and $\frac{k^2}{d^n}\ll 1$, the $k$th-order moment of the distribution $C(\hat{O})$ deviates from that of the pure Gaussian distribution $\mathcal{N}\left( 0,\frac{1}{d^{n}} \right) $ by at most a residual correction of $\mathcal{O} \left( \frac{k^2d^{n/2}}{D} \right) +\mathcal{O} \left( \frac{k^2}{d^n} \right)$. This is consistent with the results presented in our main text.

\subsection{Residual Terms for $C( \hat{O}_l )$ in the Large-$(D,n)$ Regime with $D^2\ll d^n$}
In this subsection, we derive rigorous upper bounds for the residual terms of $\mathbb{E}(C(\hat{O}_l)^k)$ in the large-$D$ regime satisfying $D^2\ll d^n$ for general traceless operator $\hat{O}_l$, confirming that their ratio to the leading-order contribution vanishes as $(D,n) \to \infty$. 
 

 We next calculate the remainder terms arising from the other eigenvalues.
\begin{equation}
    \left. \mathrm{tr}\!\left( \mathrm{ATe}^n \right) \right|_{v_{f}^{\left( i \right)}}={\lambda _{f}^{( i )}}^n\frac{\langle v_{f}^{\left( i \right)}|G_{dD}X^{-1}X_OG_D|v_{f}^{\left( i \right)}\rangle}{\langle v_{f}^{\left( i \right)}|G_{dD}X^{-1}G_{dD}|v_{f}^{\left( i \right)}\rangle}.\label{S292}
\end{equation}

First, consider $\langle v|G_{dD}X^{-1}G_{dD}|v\rangle $, where $|v\rangle $ is an arbitrary normalized vector that satisfies $\langle v|v\rangle=1 $.
It is noted that $G_{dD}=( Dd ) ^{\#\delta \sigma ^{-1}}$ is a Gram matrix and approaches a diagonal matrix under large-$D$ conditions, then:
\begin{align}
    \langle v|G_{dD}X^{-1}G_{dD}|v\rangle &=\left( Dd \right) ^{2k}\langle v|\left( I+K \right) X^{-1}\left( I+K \right) |v\rangle 
\\
&=\left( Dd \right) ^{2k}\langle v|X^{-1/2}X^{1/2}\left( I+K \right) X^{-1}\left( I+K \right) X^{1/2}X^{-1/2}|v\rangle 
\\
&=\left( Dd \right) ^{2k}\langle w|X^{1/2}\left( I+K \right) X^{-1/2}X^{-1/2}\left( I+K \right) X^{1/2}|w\rangle 
\\
&=\left( Dd \right) ^{2k}\left\| \left. \left( I+X^{1/2}KX^{-1/2} \right) |w\rangle \right\| _{2}^{2} \right.\label{vGXGv} 
\end{align}
where $K=\frac{1}{(Dd)^{k-\#\delta \sigma ^{-1}}}$, $\delta \ne \sigma $, is symmetric, and $|w\rangle =X^{-1/2}|v\rangle $.
We denote $X^{1/2}KX^{-1/2}=M$, then:
\begin{align}
   \left\|  M|w\rangle \right\| _{2}^{2}  &=\sum_{\alpha}{\left| \left( M|w\rangle \right) _{\alpha} \right|^2}
\\
&=\sum_{\alpha}{\left|  \sum_{\beta}{M_{\alpha ,\beta}}w_{\beta} \right|  ^2}
\\
&\leqslant \sum_{\alpha}{\left( \sum_{\beta}{\left. \left| M_{\alpha ,\beta} \right. \right|} \right)}\left( \sum_{\beta}{\left| \left. M_{\alpha ,\beta} \right| \right.}\left| \left. |w_{\beta} \right|^2 \right. \right) 
\\
&=\sum_{\alpha}{R}\sum_{\beta}{\left| \left. M_{\alpha ,\beta} \right| \right.}\left| \left. |w_{\beta} \right|^2 \right. 
\\
&=R\sum_{\beta}{\sum_{\alpha}{\left| \left. M_{\alpha ,\beta} \right| \right.}\left| \left. |w_{\beta} \right|^2 \right.}
\\
&=RC\langle w|w\rangle .
\end{align}

where $R$ denotes the upper bound of the row sum, $C$ denotes the upper bound of the column sum respectively. Here, in the third line, we use the Cauchy-Schwarz inequality.
Consider the upper bound of the row sum of the $M$ matrix:
\begin{align}
    \sum_{\sigma \in S_k}{M_{\delta ,\sigma}}&=\sum_{\sigma \in S_k\backslash\{\delta \}}{d^{\left( \#\delta -\#\sigma \right) /2}\frac{1}{D^{k-\#\delta \sigma ^{-1}}}}
\\
&\leqslant \sum_{\sigma \in S_k\backslash\{\delta \}}{\frac{\left( d^{1/2} \right) ^{k-\#\delta \sigma ^{-1}}}{D^{k-\#\delta \sigma ^{-1}}}}
\\
&\leqslant \sum_{l=1}^{k-1}{\frac{1}{l!}}\left( \frac{d^{1/2}}{D} \right) ^l
\\
&=\mathcal{O} \left( \frac{k^2d^{1/2}/2}{D} \right).
\end{align}
Similarly, we can get that the row sum upper bound is $\mathcal{O} \left( \frac{k^2d^{1/2}/2}{D} \right) $.

By the triangle inequality,
\begin{equation}
    \left\| \left. \left( I+M \right) |w\rangle \right\| _2 \right. \leqslant \left\| \left. w \right\| _2+ \right. \left\| \left. Mw \right\| _2 \right. \leqslant \left( 1+\sqrt{RC} \right) \left\| \left. w \right\|_2 \right.,
\end{equation}
Similarly, the lower bound is:
\begin{equation}
   \left\| \left( I+M \right) |w\rangle \right\| _2\geqslant \left\| w \right\| _2-\left\| Mw \right\| _2\geqslant \left( 1-\sqrt{RC} \right) \left\| w \right\| _2.
\end{equation}
Substituting the above result back into Eq.~(\ref{vGXGv}), we obtain:
\begin{equation}
    \langle v|G_{dD}X^{-1}G_{dD}|v\rangle =\left( Dd \right) ^{2k}\langle v|X^{-1}|v\rangle \left[ 1+\mathcal{O} \left( \frac{k^2d^{1/2}/2}{D} \right) \right].
\end{equation}
Note that $d^k\langle v|X^{-1}|v\rangle \geqslant \langle v|v\rangle $, then :
\begin{equation}
    \langle v|G_{dD}X^{-1}G_{dD}|v\rangle \geqslant \left( D^2d \right) ^k\left[ 1+\mathcal{O} \left( \frac{k^2d^{1/2}}{D} \right) \right] 
\end{equation}

Next, we calculate:
\begin{equation}
    {\lambda _{f}^{\left( i \right)}}^n\langle v_{f}^{\left( i \right)}|G_{dD}X^{-1}X_OG_D |v_{f}^{\left( i \right)}\rangle .\label{S267}
\end{equation}
where $|v_{f}^{\left( i \right)}\rangle $ is the eigenvector corresponding to the diagonal block with a fixed point number $f$ and  $\langle v_{f}^{\left( i \right)}|v_{f}^{\left( i \right)}\rangle =1$
Consider the diagonal block corresponding to the fixed point number $m$, $m=k-2,k-3,\cdots,0$, so that $|v_{m}^{\left( i \right)}\rangle $ satisfies $( v_{m}^{( i )} ) _{\delta}=0$, $N_f\left( \delta \right) <m$. Donate the set $\mathcal{F} _m$ as $\left\{ \delta \left| \delta \in S_k,N_f\left( \delta \right) \geqslant m \right. \right\} $.

Then
\begin{equation}\label{S268}
    \langle v_{m}^{\left( i \right)}|G_{dD}X^{-1}X_OG_D|v_{m}^{\left( i \right)}\rangle =\left( D^2d \right) ^k\sum_{\delta ,\alpha \in \mathcal{F} _m}{\sum_{\pi \in S_k}{v_{\delta}v_{\alpha}\frac{1}{\left( Dd \right) ^{k-\delta \pi ^{-1}}}}}\frac{1}{D^{k-\pi \alpha ^{-1}}}\frac{\mathrm{tr}\left( \mathrm{P}_{\pi}O \right)}{d^{\#\pi}},
\end{equation}
here we will abbreviate $( v_{m}^{( i )} ) _{\delta}$ as $v_\delta$, and this notation will be used throughout thereafter.

Bounding Eq.~(\ref{S268}) yields:
\begin{align}
 \sum_{\delta ,\alpha \in \mathcal{F} _m}{\sum_{\pi \in S_k}{v_{\delta}v_{\alpha}\frac{1}{\left( Dd \right) ^{k-\#\delta \pi ^{-1}}}\frac{1}{D^{k-\#\pi \alpha ^{-1}}}}}\frac{\mathrm{tr}\left( \mathrm{P}_{\pi}O \right)}{d^{\#\pi}}<\left( \sum_{\delta ,\alpha \in \mathcal{F} _m}{\left| v_{\delta} \right|\left. \left| v_{\alpha} \right. \right|} \right) \left( \sum_{\delta ,\alpha \in \mathcal{F} _m}{\sum_{\pi \in S_k}{\frac{1}{\left( Dd \right) ^{k-\#\delta \pi ^{-1}}}\frac{1}{D^{k-\#\pi \alpha ^{-1}}}\frac{\mathrm{tr}\left( P_{\delta}O \right)}{d^{\#\pi}}}} \right) . 
\end{align}
Since $\sum_{\delta \in \mathcal{F} _m}{v_{\delta}^{2}}=1$
\begin{equation}
   \left( \sum_{\delta \in \mathcal{F} _m}{\left| v_{\delta} \right|} \right) \left( \sum_{\alpha \in \mathcal{F} _m}{\left| v_{\alpha} \right|} \right) \leqslant \frac{\left| \mathcal{F} _m \right|}{\sqrt{\left| \mathcal{F} _m \right|}}\frac{\left| \mathcal{F} _m \right|}{\sqrt{\left| \mathcal{F} _m \right|}}=\left| \mathcal{F} _m \right|,
\end{equation}
where $\left| \left. \mathcal{F} _m \right| \right. $ denotes the elements number in the set $\mathcal{F} _m$.

The number of permutations with exactly $r$ fixed points is $\binom{k}{r}D_{k-r}$, where \(D_{k-r}\) denotes the number of derangements of \(k-r\) elements.
Since \(D_{k-r}\leq (k-r)!\), we have
\begin{equation}
    \binom{k}{r}D_{k-r}
    \leq
    \binom{k}{r}(k-r)!
    =
    \frac{k!}{r!}
    \leq
    k^{\,k-r}.
\end{equation}
Consequently,
\begin{align}
\left| \mathcal{F} _m \right|&=\sum_{r=m}^k{\binom{k}{r}D_{k-r}}\le \sum_{r=m}^k{k^{\,k-r}}
\\
&=\frac{k^{\,k-m+1}-1}{k-1}<2k^{k-m}.
\end{align}

Next, we calculate:
\begin{equation}
    \sum_{\delta ,\alpha \in \mathcal{F} _m}{\sum_{\pi \in S_k}{\frac{1}{\left( Dd \right) ^{k-\#\delta \pi ^{-1}}}\frac{1}{D^{k-\#\pi \alpha ^{-1}}}\frac{\mathrm{tr}\left( \mathrm{P}_{\pi}O \right)}{d^{\#\pi}}}},
\end{equation}
 For an even fixed point number $m$, with $\delta \in \mathcal{F}_m$ and $\pi$ belong to no fixed point subset $ \mathcal{D} _k$, we can obtain
\begin{equation}
  \underset{\delta \in \mathcal{F} _m,\pi \in \mathcal{D} _k}{\max}\left( \frac{D^{\#\delta \pi ^{-1}}}{D^k} \right) =\frac{D^{m/2+k-m}}{D^k},
\end{equation}
The case where $\#\delta \pi ^{-1}$ is largest occurs when the fixed point part in $\delta$ is multiplied by the disjoint pair in $\pi$, and the other parts of the two happen to be inverses of each other, forming the maximum cycle number $k-m$. In this case, $\#\pi \geqslant m/2+1$. Then
\begin{align}
    &\sum_{\delta ,\alpha \in \mathcal{F} _m}{\sum_{\pi \in \mathcal{D} _k}{\frac{1}{\left( Dd \right) ^{k-\#\delta \pi ^{-1}}}}}\frac{1}{D^{k-\#\pi \alpha ^{-1}}}\frac{\mathrm{tr}\left( \mathrm{P}_{\pi}O \right)}{d^{\#\pi}}
    \\
    &\leqslant \left| \mathcal{F} _m \right|^2\sum_{i=k/2}^1{\left| \mathcal{D} _{k,i} \right|}\underset{\pi \in \mathcal{D} _{k,i},\delta \in \mathcal{F} _m}{\max}\left[ \frac{\left( D^2d \right) ^{\#\delta \pi ^{-1}}}{\left( D^2d \right) ^k}\frac{d^{k/2}}{d^{\#\pi}} \right]   
\\
&\leqslant \left| \mathcal{F} _m \right|^2\left| \mathcal{D} _{k,k/2} \right|\left[ \frac{1}{\left( D^2d \right) ^{m/2}}\sum_{i=k/2}^{m/2+1}{\left( dk^3 \right) ^{k/2-i}}+\sum_{i=m/2}^1{\left( dk^3 \right) ^{k/2-i}\frac{1}{\left( D^2d \right) ^{m/2+\left( m/2+1-i \right)}}} \right]  
\\
&=4\left( k^2 \right) ^{k-m}\frac{k!}{\left( k/2 \right) !2^{k/2}}\frac{1}{\left( D^2d \right) ^{m/2}}\mathcal{O} \left[ \left( dk^3 \right) ^{k/2-m/2} \right] \left[ 1+\mathcal{O} \left( \frac{k^3}{D^2} \right) \right] 
\\
&=\mathcal{O} \left[ \left( d^{1/2}k^{7/2} \right) ^{k-m} \right] \frac{k!}{\left( k/2 \right) !2^{k/2}}\frac{1}{\left( D^2d \right) ^{m/2}},\label{S279}
\end{align}
where in the second line, we used the normalization condition $\mathrm{tr}(\hat{O}^{(1)2})=d$ and in the third line, we sum over different subsets $\mathcal{D}_{k,i}$ where $i=k/2,\cdots ,1$ denotes the cycle number of the element $\pi$ in $\mathcal{D}_{k,i}$. Here the $\sum_{i=m/2}^1{\left( dk^3 \right) ^{k/2-i}\frac{1}{\left( D^2d \right) ^{m/2+\left( m/2+1-i \right)}}}$ corresponds to the part with $\#\pi <m/2+1$,  and it is easy to verify that for $\#\pi <m/2+1$, 
\begin{equation}
 \underset{\pi \in \mathcal{D} _{k,i},\delta \in \mathcal{F} _m}{\max}\left[ \frac{\left( D^2d \right) ^{\#\delta \pi ^{-1}}}{\left( D^2d \right) ^k} \right] =\frac{\left( D^2d \right) ^{i-1+k-m}}{\left( D^2d \right) ^k}=\frac{1}{\left( D^2d \right) ^{m+1-i}},
\end{equation}
which occurs when the  $\#\pi -1$ disjoint pair part in $\pi$ multiplied by the fixed point part in $\delta$ and and the rest are multiplied to produce the maximum number of cycles; the former contributes $i -1$, while the latter contributes at most $k-m$.

For odd fixed point number $m$:
\begin{equation}
    \underset{\delta \in \mathcal{F} _m,\pi \in \mathcal{E} _k}{\max}\left( \frac{D^{\#\delta \pi ^{-1}}}{D^k} \right) =\frac{D^{\left( m-1 \right) /2+k-m}}{D^k},
\end{equation}
The case where $\#\delta \pi ^{-1}$ is largest occurs when the $m-1$ fixed point part in $\delta$ is multiplied by the disjoint pair in $\pi$, and the other parts of the two forming the maximum cycle number $k-m$. In this case, $\#\pi \geqslant (m-1)/2+1$. Then
\begin{align}
&\sum_{\delta ,\alpha \in \mathcal{F} _m}{\sum_{\pi \in \mathcal{D} _k}{\frac{1}{\left( Dd \right) ^{k-\#\delta \pi ^{-1}}}}}\frac{1}{D^{k-\#\pi \alpha ^{-1}}}\frac{\mathrm{tr}\left( \mathrm{P}_{\pi}O \right)}{d^{\#\pi}}\leqslant \left| \mathcal{F} _m \right|^2\sum_{i=k/2}^1{\left| \mathcal{D} _{k,i} \right|}\underset{\pi \in \mathcal{D} _{k,i},\delta \in \mathcal{F} _m}{\max}\left[ \frac{\left( D^2d \right) ^{\#\delta \pi ^{-1}}}{\left( D^2d \right) ^k} \frac{d^{k/2}}{d^{\#\pi}}\right] 
\\
&\leqslant \left| \mathcal{F} _m \right|^2\left| \mathcal{D} _{k,k/2} \right|\left[ \frac{1}{\left( D^2d \right) ^{m/2+1/2}}\sum_{i=k/2}^{m/2+1/2}{\left( dk^3 \right) ^{k/2-i}}+\sum_{i=m/2-1/2}^1{\left( dk^3 \right) ^{k/2-i}\frac{1}{\left( D^2d \right) ^{m/2+1/2+\left( m/2+1/2-i \right)}}} \right] 
\\
&\leqslant 4\left( k^2 \right) ^{k-m}\frac{k!}{\left( k/2 \right) !2^{k/2}}\frac{1}{\left( D^2d \right) ^{m/2}}\mathcal{O} \left[ \left( dk^3 \right) ^{k/2-m/2} \right]  \left[ 1+\mathcal{O} \left( \frac{k^3}{D^2} \right) \right]  
\\
&=\mathcal{O} \left[ \left( d^{1/2}k^{7/2} \right) ^{k-m} \right] \frac{k!}{\left( k/2 \right) !2^{k/2}}\frac{1}{\left( D^2d \right) ^{m/2}}.\label{S285}
\end{align}

Substituting Eq.~(\ref{S285}) and Eq.~(\ref{S279}) into Eq.~(\ref{S268}), we can yield for $\forall m\in \left\{ k-2,\cdots ,0 \right\} $:
\begin{align}
   \langle v_{m}^{\left( i \right)}|G_{dD}X^{-1}X_OG_D|v_{m}^{\left( i \right)}\rangle& =\left( D^2d \right) ^k\sum_{\delta ,\alpha \in \mathcal{F} _m}{\sum_{\pi \in \mathcal{E} _k}{v_{\delta}v_{\alpha}\frac{1}{\left( Dd \right) ^{k-\delta \pi ^{-1}}}}\frac{1}{D^{k-\pi \alpha ^{-1}}}}\frac{\mathrm{tr}\left( P_{\pi}O \right)}{d^{\#\pi}}
\\
&\leqslant \left( D^2d \right) ^k\mathcal{O} \left[ \left( d^{1/2}k^{9/2} \right) ^{k-m} \right] \frac{k!}{\left( k/2 \right) !2^{k/2}}\frac{1}{\left( D^2d \right) ^{m/2}}.
\end{align}
Substituting the above expression into Eq.~(\ref{S267}) and summing over $i$ (different eigenvalues within the same matrix block) yields:
\begin{align}
\sum_i{{\lambda _{f}^{\left( i \right)}}^n\langle v_{m}^{\left( i \right)}|G_{dD}X^{-1}X_OG_D|v_{m}^{\left( i \right)}\rangle}&\leqslant \left| \mathcal{F} _m \right|\frac{d^{n\left[ m+\left( k-m \right) /2 \right]}}{d^{kn}}\left[ 1+\mathcal{O} \left( \frac{nk^2}{D} \right) \right] \left( D^2d \right) ^k\mathcal{O} \left[ \left( d^{1/2}k^{9/2} \right) ^{k-m} \right] \frac{k!}{\left( k/2 \right) !2^{k/2}}\frac{1}{\left( D^2d \right) ^{m/2}}
\\
&=\mathcal{O} \left[ \left( d^{1/2}k^{11/2} \right) ^{k-m} \right] \left[ 1+\mathcal{O} \left( \frac{nk^2}{D} \right) \right] \left( D^2d \right) ^k\frac{1}{d^{n\left( k/2-m/2 \right)}}\frac{k!}{\left( k/2 \right) !2^{k/2}}\frac{1}{\left( D^2d \right) ^{m/2}}
\\
&=\left( D^2d \right) ^{k/2}\mathcal{O} \left[ \left( \frac{k^{11}D^2d^2}{d^n} \right) ^{k/2-m/2} \right] \frac{k!}{\left( k/2 \right) !2^{k/2}},
\end{align}
valid under the condition $nk^2/D\ll1$.
Here $\frac{d^{m+( k-m ) /2}}{d^k}[ 1+\mathcal{O}( \frac{k^2}{D} ) ] $ is the upper bound of the largest eigenvalue in the block matrix corresponding to $m$ fixed points. This follows by applying the Gerschgorin Circle Theorem to the Te matrix in the large-$D$ regime.

\begin{lemma}[Gerschgorin Circle Theorem]
Let $A = (a_{ij}) \in \mathbb{C}^{n \times n}$ be an $n \times n$ matrix. For each $i \in \{1, 2, \dots, n\}$, let $R_i$ denote the sum of the absolute values of the off-diagonal entries in the $i$-th row:
\begin{equation}
    R_i = \sum_{j \neq i} |a_{ij}|.
\end{equation}
Let $D(a_{ii}, R_i) \subseteq \mathbb{C}$ be a closed disk centered at $a_{ii}$ with radius $R_i$, defined as:
\begin{equation}
    D(a_{ii}, R_i) = \{ z \in \mathbb{C} : |z - a_{ii}| \leq R_i \}.
\end{equation}
Then, every eigenvalue $\lambda$ of $A$ lies within at least one of these Gerschgorin disks:
\begin{equation}
    \lambda \in \bigcup_{i=1}^n D(a_{ii}, R_i).
\end{equation}
\end{lemma}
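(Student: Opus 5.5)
The statement to prove is the Gerschgorin Circle Theorem. I will use the standard eigenvector argument. Let $\lambda$ be an eigenvalue of $A$ with a nonzero eigenvector $x\in\mathbb{C}^n$, so that $Ax=\lambda x$. Choose an index $i$ where $|x_i|=\max_j |x_j|$. Since $x\neq 0$, this gives $|x_i|>0$.

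Next, read off the $i$-th component of the eigenvalue equation:
\begin{equation}
\sum_{j} a_{ij}x_j=\lambda x_i .
\end{equation}
Rearranging gives
\begin{equation}
(\lambda-a_{ii})x_i=\sum_{j\neq i}a_{ij}x_j .
\end{equation}
Take absolute values and apply the triangle inequality. Then use $|x_j|\le |x_i|$ for all $j$:
\begin{equation}
|\lambda-a_{ii}|\,|x_i|\le \sum_{j\neq i}|a_{ij}|\,|x_j|\le R_i\,|x_i| .
\end{equation}
Dividing by $|x_i|>0$ yields $|\lambda-a_{ii}|\le R_i$. Hence $\lambda\in D(a_{ii},R_i)$, and in particular $\lambda\in\bigcup_{i}D(a_{ii},R_i)$.

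None of these steps is a real obstacle. The only point needing care is the choice of the maximal-modulus component. This choice guarantees that dividing by $|x_i|$ is legitimate, and it gives the uniform bound $|x_j|/|x_i|\le 1$, which absorbs all off-diagonal terms into $R_i$.

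In the present application, the lemma will be applied to the transfer matrix $\mathrm{Te}$ in the large-$D$ regime. There the off-diagonal row sums are suppressed by $\mathcal{O}(k^2/D)$ relative to the diagonal entries $d^{\#\delta}/d^k$. This yields the stated bound $\frac{d^{m+(k-m)/2}}{d^k}[1+\mathcal{O}(k^2/D)]$ on the eigenvalues in the block with $m$ fixed points.
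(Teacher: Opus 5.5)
Your proof is correct: it is the standard maximal-modulus-component argument, and each step is sound — choosing $i$ with $|x_i|=\max_j|x_j|>0$, isolating the diagonal term, applying the triangle inequality, and dividing by $|x_i|$. The paper gives no proof of this lemma; it quotes it as a classical tool for bounding the eigenvalues of the diagonal blocks of $\mathrm{Te}$ (applied blockwise, which is legitimate because $\mathrm{Te}$ is block upper triangular), so your argument supplies the standard proof that the paper relies on.
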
 

Evaluating the upper bound for the row sums of the $\mathrm{Te}$ matrix, we yield:
\begin{align}
    \sum_{\alpha \in S_k}{\mathrm{Te}_{\delta ,\alpha}}&=\sum_{\alpha \in S_k}{\sum_{\sigma \in S_k}{\mathrm{Wg}\left( \delta \sigma ^{-1},dD \right) d^{\#\sigma}D^{\#(\sigma \alpha ^{-1})}}}
\\
&\leqslant \sum_{\sigma \in S_k}{\sum_{\alpha \in S_k}{\omega \frac{\left| \mathrm{Moeb}\left( \delta \sigma ^{-1} \right) \right|}{\left( dD \right) ^{k-\#\delta \sigma ^{-1}}}\frac{1}{d^{k-\#\sigma}}\frac{1}{D^{k-\#\sigma \alpha ^{-1}}}}}
\\
&=\omega \frac{\overline{D}}{D^k}\sum_{\sigma \in S_k}{\frac{\left| \mathrm{Moeb}\left( \delta \sigma ^{-1} \right) \right|}{\left( dD \right) ^{k-\#\delta \sigma ^{-1}}}\frac{1}{d^{k-\#\sigma}}}
\\
&=\omega \frac{\overline{D}}{D^k}\sum_{\sigma \in S_k}{\frac{\left| \mathrm{Moeb}\left( \sigma \right) \right|}{\left( dD \right) ^{k-\#\sigma}}\frac{1}{d^{k-\#\sigma \delta}}}
\\
&\leqslant \omega \frac{\overline{D}}{D^k}\sum_{\sigma \in S_k}{\frac{\left| \mathrm{Moeb}\left( \sigma \right) \right|}{\left( dD \right) ^{k-\#\sigma}}\frac{d^{k-\#\sigma +\#\delta}}{d^k}}
\\
&\leqslant \omega \frac{\overline{D}}{D^k}\frac{d^{\#\delta}}{d^k}\sum_{l=0}^{k-1}{\frac{1}{l!}\left( \frac{2k^2}{D} \right)}^l
\\
&=\frac{d^{\#\delta}}{d^k}\left( 1+\mathcal{O} \left( \frac{k^2}{D} \right) \right).\label{Terowsummationupperbound}
\end{align}
The maximal element in each row lies on the diagonal, given by:
\begin{align}
    \mathrm{Te}_{\delta ,\delta}&=\sum_{\sigma \in S_k}{\mathrm{Wg}\left( \delta \sigma ^{-1},dD \right) d^{\#\sigma}D^{\#(\sigma \delta ^{-1})}}
\\
&\leqslant \omega \sum_{\sigma \in S_k}{\frac{\left| \mathrm{Moeb}\left( \delta \sigma ^{-1} \right) \right|}{\left( dD^2 \right) ^{k-\#\delta \sigma ^{-1}}}\frac{1}{d^{k-\#\sigma}}}
\\
&\leqslant \omega \frac{d^{\#\delta}}{d^k}\sum_{\sigma \in S_k}{\frac{\left| \mathrm{Moeb}\left( \delta \sigma ^{-1} \right) \right|}{\left( D^2 \right) ^{k-\#\delta \sigma ^{-1}}}}
\\
&\leqslant \omega \frac{d^{\#\delta}}{d^k}\sum_{l=0}^{k-1}{\frac{1}{l!}\left( \frac{2k^2}{D^2} \right)}^l
\\
&=\frac{d^{\#\delta}}{d^k}\left( 1+\mathcal{O} \left( \frac{k^2}{D^2} \right) \right). \label{S353}
\end{align}

Since $\max(\#\delta) = m + (k-m)/2$ for permutations $\delta$ with $N_f(\delta) = m$ fixed points, Eq.~(\ref{Terowsummationupperbound}) and (\ref{S353}) together imply that, in the large-$D$ regime, the eigenvalues of the block matrix with $m$ fixed points are upper bounded by $\frac{d^{( k+m ) /2}}{d^k}[ 1+\mathcal{O} ( \frac{k^2}{D} ) ]$, valid under the condition $k^2/D\ll 1$.

Finally, returning to Eq.~(\ref{S292}), the remainder terms in $\mathbb{E}[C_l(\hat{O})^k]$ are given by

\begin{align}
    \sum_{\lambda \ne 1}{\left. \mathrm{tr}\!\left( \mathrm{ATe}^n \right) \right|_{v_{f}^{\left( i \right)}}}&=\sum_m{\sum_i{\frac{{\lambda _{m}^{\left( i \right)}}^n\langle v_{m}^{\left( i \right)}|G_{dD}\mathcal{I} _{\mathcal{E} _k}G_D|v_{m}^{\left( i \right)}\rangle}{\langle v_{m}^{\left( i \right)}|G_{dD}X^{-1}G_{dD}|v_{m}^{\left( i \right)}\rangle}}}
\\
&=\frac{k!}{\left( k/2 \right) !2^{k/2}}\frac{1}{\left( D^2d \right) ^{k/2}}\mathcal{O} \left( \frac{k^{11}D^2}{d^n} \right) \left[ 1+\mathcal{O} \left( \frac{k^2d^{1/2}}{D} \right) \right]  .
\end{align}

Combining this with the dominant term in Eq.~(\ref{S101}), we have:
\begin{equation}
    \mathbb{E} \left( C_l\left( \hat{O} \right) ^k \right)  =\frac{k!}{\left( k/2 \right) !2^{k/2}}\frac{1}{\left( D^2d \right) ^{k/2}}\left[ 1+\mathcal{O} \left( \frac{k^3}{D^2d} \right) +\mathcal{O} \left( \frac{\mathrm{poly}\left( k \right) D^2}{d^n} \right) \right] .
\end{equation}
where $k^3$ arises from the ratio in sequence $\{ \mathcal{D} _{k,i}\}_{i=0}^{k/2}$. 

Combining the conditions established in the preceding derivations, we find that in the asymptotic regime where $\frac{nk^2}{D} \ll 1$ and $\frac{\mathrm{poly}(k) D^2}{d^n} \ll 1$, the $k$th-order moment of the distribution of $C(\hat{O}_l)$ deviates from that of the Gaussian distribution $\mathcal{N}\left( 0, \frac{1}{D^2d} \right)$ by at most a residual correction of $\mathcal{O}\left( \frac{k^3}{D^2d} \right) + \mathcal{O}\left( \frac{\mathrm{poly}(k) D^2}{d^n} \right)$.

Similarly, if we consider operators in the space of Clifford generators, we can obtain a more compact finite correction: in the asymptotic regime where $ \frac{nk^2}{D}\ll 1$ and $\frac{\mathrm{poly}\left( k \right) D^2}{d^n} \ll 1$, the $k$th-order moment of the distribution $C(\hat{O_l})$ deviates from that of the pure Gaussian distribution $\mathcal{N}\left( 0,\frac{1}{D^2d} \right) $ by at most a residual correction of $\mathcal{O} \left( \frac{k^2}{D^2d} \right) +\mathcal{O} \left( \frac{\mathrm{poly}\left( k \right) D^2}{d^n} \right) $. This is consistent with the results presented in our main text.

\subsection{Residual Terms for $C( \hat{O}_g )$ in the Large-$(D,n)$ Regime}\label{reslargeDNglobal}

In this subsection, we derive rigorous upper bounds for the residual terms of $\mathbb{E}(C(\hat{O}_g)^k)$ in the large-$(D,n)$ regime for every operator $\hat{O}_g$ within the space of Clifford generator, confirming that their ratio to the leading-order contribution vanishes as $(D,n) \to \infty$.

We calculate the $k$th-order moment of $C( \hat{O}_g ) $
which can be expanded as (throughout the remainder of this subsection, we suppress the subscript $g$ for notational simplicity.):
\begin{equation}\label{eq:O_k_moment}
\mathbb{E}
\left(
C(\hat{O})^k
\right)
=
\frac{1}{c^k}
\sum_{(j_1,\ldots,j_k)\in\mathcal{J}^k}
c_{j_1}\cdots c_{j_k}
\,
\mathbb{E}
\left[
\prod_{a=1}^{k}
C(\hat{Q}_{j_a})
\right].
\end{equation}
where $\mathcal J$ is the operator bases index set $\left\{ 1,2,\cdots ,m \right\} $. The mixed moment of the operator bases can be expressed as
\begin{equation}
\mathbb{E}
\left[
\prod_{a=1}^{k}
C(\hat{Q}_{j_a})
\right]
=
\mathrm{tr}
\left(
\prod_{s=1}^{n}\mathrm{A}^{(s)}
\right),
\label{eq:mixed_moment_matrix}
\end{equation}
where $\mathrm{A}^{(s)}=\sum_{\pi \in S_k}{\mathrm{Wg}\left( \delta \pi ^{-1},dD \right) \mathrm{tr}\left( \mathrm{P}_{\pi}Q_s \right) D^{\#\left( \pi \tau ^{-1} \right)}}$
which is determined by $Q_s
=
\hat{Q}_{j_1}^{(s)}
\otimes
\hat{Q}_{j_2}^{(s)}
\otimes\cdots\otimes
\hat{Q}_{j_k}^{(s)}$. In large-$D$ regime, evaluating the upper bound for the row sum of the $\mathrm{A}^{(s)}$ matrix, we yield:
\begin{align}
\sum_{\tau \in S_k}{{\mathrm{A}^{(s)}}_{\delta ,\tau}}&=\sum_{\tau \in S_k}{\sum_{\pi \in S_k}{\mathrm{Wg}\left( \delta \pi ^{-1},dD \right) \mathrm{tr}\left( \mathrm{P}_{\pi}Q_s \right) D^{\#\left( \pi \tau ^{-1} \right)}}}
\\
&=\frac{\mathrm{tr}\left( \mathrm{P}_{\delta}Q_s \right)}{d^k}\left[ 1+\mathcal{O} \left( \frac{\,\,k^2}{D^2} \right) \right] +\sum_{\tau ,\pi \in S_k,\left( \tau ,\pi \right) \ne \left( \delta ,\delta \right)}{\mathrm{Wg}\left( \delta \pi ^{-1},dD \right) \mathrm{tr}\left( \mathrm{P}_{\pi}Q_s \right) D^{\#\left( \pi \tau ^{-1} \right)}}
\\
&\leqslant \frac{\mathrm{tr}\left( \mathrm{P}_{\delta}Q_s \right)}{d^k}\left[ 1+\mathcal{O} \left( \frac{\,\,k^2}{D^2} \right) \right] +\frac{\overline{D}}{D^k}\sum_{\pi \in S_k\backslash \{\delta \}}{\frac{\left| \mathrm{Moeb}\left( \delta \pi ^{-1} \right) \right|}{\left( Dd \right) ^{k-\#\delta \pi ^{-1}}}}\frac{\mathrm{tr}\left( \mathrm{P}_{\pi}Q_s \right)}{d^k}
\\
&\leqslant \frac{\mathrm{tr}\left( \mathrm{P}_{\delta}Q_s \right)}{d^k}\left[ 1+\mathcal{O} \left( \frac{\,\,k^2}{D^2} \right) \right] +\frac{\overline{D}}{D^k}\frac{1}{d^{k/2}}\sum_{\pi \in S_k\backslash \{\delta \}}{\frac{\left| \mathrm{Moeb}\left( \delta \pi ^{-1} \right) \right|}{\left( Dd \right) ^{k-\#\delta \pi ^{-1}}}}
\\
&\leqslant \frac{\mathrm{tr}\left( \mathrm{P}_{\delta}Q_s \right)}{d^k}\left[ 1+\mathcal{O} \left( \frac{\,\,k^2}{D^2} \right) \right] +\frac{\overline{D}}{D^k}\frac{1}{d^{k/2}}\sum_{l=1}^{k-1}{\frac{1}{l!}\left( \frac{2k^2}{Dd} \right) ^l}
\\
&=\frac{\mathrm{tr}\left( \mathrm{P}_{\delta}Q_s \right)}{d^k}+\frac{1}{d^{k/2}}\mathcal{O} \left( \frac{k^2}{D} \right). \label{380}
\end{align}
In the fourth line of the expression above, since every local Pauli operator is traceless $\mathrm{tr}
(
\hat{Q}_{j}^{(s)}
)=0$,
any permutation $\delta$ containing a fixed point gives $\mathrm{tr}(\mathrm{P}_{\delta}Q_s)=0$. For permutations without fixed points, the trace is bounded by $\left|
\mathrm{tr}(\mathrm{P}_{\delta}Q_s)
\right|
\leq
d^{k/2}$.

Equation~(\ref{380}) shows that in the large-$D$ regime, the matrix $\mathrm{A}^{(s)}$ can be decomposed into a diagonal matrix and a small correction matrix whose row sums are bounded in terms of $D$. We decompose each matrix as
\begin{equation}
\mathrm{A}^{(s)}
=
\mathrm{A}^{(s)}_{\mathrm{d}}
+
\mathrm{A}^{(s)}_{\mathrm{od}},
\end{equation}
where $\mathrm{A}_{\mathrm{d}}^{(s)}=\mathrm{diag}\left( \mathrm{tr}\left( \mathrm{P}_{\delta}Q_s \right) /d^k \right) $. In Sec.~\ref{sectionglobal}, we have already given the contribution of the diagonal part of these mixed moments to Eq.~(\ref{eq:O_k_moment}): 
\begin{equation}
M_{\mathrm{d}}=\frac{k!}{\left( k/2 \right) !2^{k/2}}\frac{1}{d^{kn/2}}\left[ 1+\mathcal{O} \left( \frac{k^2}{d^n} \right) \right].
\label{eq:diagonal_correction}
\end{equation}

We next estimate the contribution from the off-diagonal matrix
elements. The off-diagonal contribution is defined as
\begin{equation}
\Delta_{\mathrm{od}}
=
\mathrm{tr}
\left(
\prod_{s=1}^{n}
(A^{(s)}_{\mathrm{d}}
+
A^{(s)}_{\mathrm{od}})
\right)
-
\mathrm{tr}
\left(
\prod_{s=1}^{n}
A^{(s)}_{\mathrm{d}}
\right).
\end{equation}
Using the submultiplicativity of the infinity norm ($\parallel \mathrm{A}\parallel _{\infty}=\underset{i}{\max}(\sum_j{\left| \left. \mathrm{A}_{ij} \right| \right.})$) $\|AB\|_{\infty} \le \|A\|_{\infty} \|B\|_\infty$, we yield: 
\begin{align}
    \mathrm{tr}\left( \prod_{s=1}^i{\mathrm{A}_{\mathrm{d}}^{(s)}}\prod_{i+1}^n{\mathrm{A}_{\mathrm{od}}^{(s)}} \right) &\leqslant k!\left( \left\| \mathrm{A}_{\mathrm{d}}^{(s)} \right\| _{\infty} \right) ^i\left( \left\| \mathrm{A}_{\mathrm{od}}^{(s)} \right\| _{\infty} \right) ^{n-i}
\\
&=k!\frac{1}{d^{kn/2}}\left[ \mathcal{O} \left( \frac{k^2}{D} \right) \right] ^{n-i},
\end{align}
where we used $\|
\mathrm{A}^{(s)}_{\mathrm{d}}
\|_{\infty}
\leq
d^{-k/2}$ and $\|
\mathrm{A}^{(s)}_{\mathrm{od}}
\|_{\infty}
\leq
d^{-k/2}
\mathcal{O}
\left(
\frac{k^2}{D}
\right)$.
Summing over all positions of $\mathrm{A}_{\mathrm{od}}^{(s)}$, in the large-$D$ regime satisfying $nk^2/D\ll 1$, yields
\begin{equation}
\left| \Delta _{\mathrm{od}} \right|=k!\frac{1}{d^{kn/2}}\mathcal{O} \left( \frac{nk^2}{D} \right) .
\label{eq:off_diagonal_correction}
\end{equation}

Combining the diagonal and off-diagonal contributions, we yield:
\begin{equation}
\mathbb{E} \left( C(\hat{O})^k \right) =M_{\mathrm{d}}+\frac{1}{c^k}\sum_{(j_1,...,j_k)\in \mathcal{J} ^k}{c_{j_1}}\cdots c_{j_k}\left| \Delta _{\mathrm{od}} \right|,
\label{eq:mixed_final}
\end{equation}
Given that the normalization factor $c^2=\sum_{j=1}^m{c_{j}^{2}}$, we yield
\begin{equation}
 \frac{1}{c^k}\sum_{(j_1,...,j_k)\in \mathcal{J} ^k}{c_{j_1}}\cdots c_{j_k}\left| \Delta _{\mathrm{od}} \right|\leqslant k!\frac{1}{d^{kn/2}}\mathcal{O} \left( \frac{m^{k/2}nk^2}{D} \right)  ,
\end{equation}
where $m=\left. \left| \mathcal{J} \right. \right|$. Finally, we can obtain
\begin{equation}
    \mathbb{E} \left( C(\hat{O})^k \right) =\frac{k!}{\left( k/2 \right) !2^{k/2}}\frac{1}{d^{kn/2}}\left[ 1+\mathcal{O} \left( \frac{k^2}{d^n} \right) +\mathcal{O} \left( \frac{nm^{k/2}k!}{D} \right) \right]. 
\end{equation}

Consequently, we have demonstrated that in the large-$(D,n)$ limit, $C(\hat{O}_g)$ exhibits asymptotic Gaussianity for operators of the form $\hat{O} = \frac{1}{c}\sum_{j=1}^m c_j \hat{Q}_j$, thereby establishing that the sequence $\mathscr{C}_g = \{ C_g(\hat{O}_i) \}_{i=1}^p$ constitutes a GP. We further show that in the asymptotic regime where $nm^{k/2}k!/D\ll1$ and $k^2/d^n\ll1$, the $k$th-order moment of the distribution $C( \hat{O}_g ) $ deviates
from that of the pure Gaussian distribution $\mathcal{N} \left( 0,1/d^n \right) $ by at most a residual correction of $\mathcal{O}(k^2/d^n)+\mathcal{O}(nm^{k/2}k!/D)$.

\end{document}